\documentclass[12pt]{article}

\usepackage[a4paper,margin=2.8cm]{geometry}
\usepackage[english]{babel}
\usepackage[utf8]{inputenc}
\usepackage{amssymb,amsmath,amsthm,amsfonts}
\usepackage{algpseudocode}
\usepackage{xspace,xpunctuate}
\usepackage{hyperref}
\usepackage{cleveref}
\usepackage{nicefrac}
\usepackage{mathtools}
\usepackage{comment}
\usepackage{graphicx}

\newbox\brbox
\setbox\brbox\hbox{$[$}
\let\zupp[

\mathcode`\["8000
\def\crazy{\futurelet\next\dostuff}
\def\dostuff{\ifx\next\zupp\schnapp\else\copy\brbox\fi}
{\catcode`\[\active\gdef[{\crazy}}
\def\schnapp#1]]{\copy\brbox\!#1]\!]}


\newenvironment{customthm}[1]
  {\innercustomthm}
  {\endinnercustomthm}

\newenvironment{customcol}[1]
  {\innercustomcol}
  {\endinnercustomcol}

\DeclarePairedDelimiter{\floor}{\lfloor}{\rfloor}
\DeclarePairedDelimiter\norm{\lVert}{\rVert}%

\let\oldsim=\sim
\def\sim{\mathbin{\oldsim}}

\newtheorem{proposition}{Proposition}
\newtheorem{lemma}{Lemma}
\newtheorem{theorem}{Theorem}
\newtheorem*{theorem*}{Theorem}
\newtheorem{corollary}{Corollary}
\theoremstyle{definition}
\newtheorem{definition}{Definition}
\newtheorem{observation}{Observation}


\def\fapx{f_\textnormal{apx}}
\def\fid{f_\textnormal{id}}

\def\li{[\![}
\def\ri{]\!]}

\newcommand{\ip}[1]{\li#1\ri}

\newcommand{\N}{\ensuremath{{\mathbf{N}}}\xspace} 
\newcommand{\Z}{\ensuremath{{\mathbf{Z}}}\xspace} 
\def\cal{\mathcal}





\newcommand{\bu}{\bar u \xspace}
\newcommand{\bv}{\bar v \xspace}

\newcommand{\bx}{\bar x \xspace}

\newcommand{\bz}{\bar z \xspace}


\newcommand{\calC}{\mathcal{C}}
\newcommand{\vG}{{\smash{\vec{G}}}}

\newcommand{\fungraph}[1]{\cal G(#1)}
\newcommand{\funformula}[2]{\textnormal{FO}[#1,#2]}
\def\fsig#1{#1_{\it fun}} 

\def\C{{\bf C}}
\def\Nesetril{Ne\v{s}et\v{r}il\xspace}

\renewcommand{\phi}{\varphi}
\renewcommand{\epsilon}{\varepsilon}


\newcommand{\cnt}[1]{\#{#1}\,}

\newcommand{\FOC}{{\normalfont{FOC}}$(\mathbf{P})$\xspace}
\newcommand{\FOCless}{{\normalfont{FOC}}$(\{>\})$\xspace}
\newcommand{\FOCONE}{{\normalfont{FOC}}$_1(\mathbf{P})$\xspace}
\newcommand{\FOX}{{\normalfont{FO}$(\{{>}\kern1pt0\})$}\xspace}
\newcommand{\BOLDFOX}{{FO$\boldsymbol{(\{{>}\,0\})}$}\xspace}

\begin{document}


\title{Approximate Evaluation of\\First-Order Counting Queries%
\thanks{Supported by the German Science Foundation (DFG) under grant
no.~DFG-927/15-1.}
}
\author{Jan Dreier and Peter Rossmanith\\[7pt]
\small Theoretical Computer Science, RWTH Aachen University\\[-2pt]
\small {\tt dreier,rossmani@cs.rwth-aachen.de}}
\date{}
\maketitle

\begin{abstract}
Kuske and Schweikardt introduced the very expressive 
first-order counting logic \FOC to model database queries with counting
operations.  They showed that there is an efficient model-checking
algorithm on graphs with bounded degree, 
while Grohe and Schweikardt showed that probably no such algorithm
exists for trees of bounded depth.

We analyze the fragment \FOX of this logic.
While we remove for example subtraction and comparison between two non-atomic counting
terms, this logic remains quite expressive:
We allow nested counting and comparison
between counting terms and arbitrarily large numbers.
Our main result is an approximation 
scheme of the model-checking problem for \FOX
that runs in linear fpt time on structures with bounded expansion.
This scheme either gives the correct
answer or says ``I do not know.''
The latter answer may only be given if small
perturbations in the number-symbols of the formula
could make it both satisfied and unsatisfied.
This is complemented by showing that
exactly solving the model-checking problem for \FOX is already hard on trees of
bounded depth and just slightly increasing the expressiveness of \FOX
makes even approximation hard on trees.
\end{abstract}


\section{Introduction}

One important task for database systems is to lookup information, which is
usually done in the form of \emph{queries}.  For most modern relational
database management systems, queries are written in the SQL language,
whose logical foundation, the relational calculus, is equivalent to
first-order logic~\cite{Grohe2001}.  This means in particular that every
first-order sentence can be expressed in SQL.

Databases can be represented as relational structures.  The
fundamental problem for first-order logic that corresponds to
the evaluation of a boolean SQL
query in a database is the so called \emph{model-checking problem}:
Given a logical formula $\phi$ and a structure $G$, decide whether $\phi$
is true for $G$, i.e., whether $G$ is a model of $\phi$ (commonly written
$G \models \phi$).
We consider the model-checking problem to be \emph{fixed-parameter
tractable} (fpt) if it can be solved in time $f(|\phi|)\norm{D}^c$
for some function $f$ and constant~$c$ (where $|\phi|$ is the length
of the formula and $\norm{D}$ the size of the database).  Already for
first-order logic, the model-checking problem is AW[$*$]-complete and
therefore unlikely to be fpt, which means that boolean SQL queries are
also hard (and SQL allows other types of queries, too).  In fact, even for
purely existential formulas the model-checking problem is $W[1]$-hard
because finding a $k$-clique is a special case~\cite{DowneyF2013}.
It is therefore natural to ask which classes of structures still admit
fpt model-checking algorithms.  Since relational structures can be
represented by their Gaifman graph (see, e.g., \cite{GroheS2018} for
details of the construction), the above question can be reformulated as
the question for \emph{graph classes} with fpt model-checking algorithms.

\emph{Algorithmic meta-theorems}~\cite{Kreutzer2008} are another motivation for
the model-checking problem.
If a problem can be formulated in
a certain logic then the model-checking algorithm for this logic can
solve it.  Therefore, model-checking results can be seen as meta-theorems
that prove whole families of problems to be algorithmically tractable
on certain classes of inputs.

For graph classes of bounded tree-width Courcelle's theorem states that
the model-checking problem for monadic second-order logic can be solved in time
$f(|\phi|)|G|$ and therefore is fpt~\cite{Courcelle1990}.
Frick and Grohe showed that the dependence on $\phi$ is
non-elementary~\cite{FrickG2004} on specially constructed worst-case
instances, while implementations exist that perform quite well on
``usual'' inputs~\cite{LangerRRS2012}.
However, graph classes with bounded tree-width are very restricted.
It has been shown in a
series of papers that the first-order model-checking problem is efficiently
solvable for more and more sparse graph classes, such as those with bounded degree~\cite{Seese1996},
excluded minor~\cite{FlumG2001},
or locally bounded tree-width~\cite{FrickG2001}, 
culminating in two relatively recent results:  Dvořák, Král$\!$', and
Thomas found a linear fpt algorithm for graph classes of bounded
expansion~\cite{DvorakKT2013} and Grohe, Kreutzer, and
Siebertz~\cite{GroheKS2017} an
algorithm with run time $f(\epsilon,|\phi|)|G|^{1+\epsilon}$
for every $\epsilon>0$ for nowhere dense graph classes.
\Nesetril and Ossona de Mendez introduced bounded expansion and
nowhere dense graph classes generalizing all previously mentioned sparse graph classes.  
They are general enough to capture certain real-world graphs,
as some observations suggest~\cite{DRRVSS2019}.
On the other hand, first-order model-checking is AW[$*$]-complete
on monotone graph classes that are not nowhere dense~\cite{GroheKS2017}.
This makes nowhere dense and bounded expansion graph classes 
two of the most general sparse graph classes 
that still are algorithmically useful.


While it is settled that (at least for monotone graph classes)
we cannot find fpt first-order model-checking algorithms beyond nowhere dense graph classes unless AW[$*$] = FPT,
it is still very much an open question by how much we can extend
first-order logic while keeping the graph classes as general as possible.
This is all the more important since many features of SQL, such as the COUNT operator, 
cannot be properly modeled in first-order logic.
First-order formulas can only make counting-claims of the form
``there are at least $k$ elements with this property'' for some fixed $k \in \N$.
Model-checking results for extensions of first-order logic yield
more general algorithmic meta-theorems, capturing wider ranges of problems.

A limited way to bring counting ability to first-order logic is the 
\emph{query-counting} problem,
where one is given a first-order formula $\phi(\bx)$ with free variables $\bx = x_1\dots x_k$ and a structure $G$
and asked to count the number of tuples $\bv$ of vertices in $G$ such that
$G \models \phi(\bv)$.
This problem is fixed-parameter
tractable on nowhere dense graph classes~\cite{GroheS2018}.
A closely related problem is the \emph{query-enumeration} problem
where one is asked to enumerate satisfying tuples (as in a typical SELECT-statement in SQL).
This problem is tractable on bounded expansion~\cite{KazanaS2013} and
nowhere dense~\cite{SchweikardtSV2018} graph classes.

For more powerful counting mechanisms we are required to extend first-order logic itself.
While many ways have been considered to bring counting to first-order logic~\cite{Vaananen1997,Nurmonen2000,Libkin2004,GroheP2017},
we consider the first-order counting logic \FOC recently introduced by Kuske and Schweikardt~\cite{KuskeS2017}.
In this logic formulas are built according to the rules of first-order logic and from \emph{counting terms}:
A counting term is any number $N \in \Z$ as well as formulas
such as $\#y\,\phi$ standing for ``the number of witnesses for $y$ in $\phi$.''
Counting terms are allowed to be multiplied, added, subtracted and compared
using a collection $\bf P$ of \emph{numerical predicates}.
The precise syntax and semantics can be found in~\cite{KuskeS2017}.
In this work, we restrict ourselves to the binary numerical predicate
$>$ denoting the usual ``greater than'' relation.
The semantics of \FOCless are best illustrated with the help of examples:
The formula
\begin{equation}\label{eq:formula1}
\exists x_1\cdots\exists x_k
\bigl(\#\, y\bigvee_{i=1}^k(x_i=y\lor E(x_i,y)) > N\bigr),
\end{equation}
when evaluated on graphs,
expresses that there are $k$ vertices that dominate more than $N$ vertices.
This describes the \emph{partial dominating set problem}.
The formula
\begin{equation}\label{eq:formula2}
\bigl(\#(x_1,\ldots,x_k)\bigwedge_{i\neq j} E(x_i,x_j)\bigr)
> \bigl(\#(x_1,\ldots,x_k)\bigwedge_{i\neq j} \neg E(x_i,x_j)\bigr)
\end{equation}
expresses that there are more cliques of size~$k$ than
independent sets of size~$k$.
Note that the \emph{length} of each number $N \in \Z$ in a formula is considered to be one.
This means formula (\ref{eq:formula1})
always has constant length and
an fpt model-checking algorithm for \FOC is required to
evaluate it in the same time $f(k)\norm{G}^c$ for any $N$,
even if $N$ depends on $G$.

Kuske and Schweikardt showed that the model-checking problem for \FOC is
fixed parameter tractable for graphs with bounded degree~\cite{KuskeS2017}.
However, already on simple structures, such as trees of bounded depth,
the problem becomes AW[$*$]-complete~\cite{GroheS2018},
and therefore is most likely not fpt.
It seems like the expressive power of \FOC is too strong
to admit efficient model-checking algorithms on more general graph classes.
This invites the question for fragments of \FOC that still admit
efficient model-checking algorithms on 
graph classes with bounded expansion or nowhere dense classes.
In this work, we identify such a fragment.

But let us first mention another fragment (orthogonal to ours), 
introduced by Grohe and Schweikardt~\cite{GroheS2018}.
The fragment \FOCONE is obtained from \FOC by
allowing subformulas of the form $P(t_1,\dots,t_m)$ for some numerical predicate $P \in \bf P$
only if all counting terms $t_1,\dots,t_m$ together contain at most a
single free variable.
The above formula (\ref{eq:formula2}) is in FOC$_1(\{>\})$,
since both counting terms have zero free variables.
However, the formula (\ref{eq:formula1}) for partial dominating set is not in FOC$_1(\{>\})$ (unless $k=1$)
because the counting term $\#y\,\bigvee_{i=1}^k(x_i=y\lor E(x_i,y))$
has $k$ free variables.  
Grohe and Schweikardt showed that the model-checking problem for
\FOCONE is fixed parameter tractable on nowhere dense graph classes~\cite{GroheS2018}.
For bounded expansion graph classes,
Toru\'nczyk presents an even stronger query language (also orthogonal to ours) that extends first-order logic
by aggregation in multiple semirings~\cite{Torunczyk20}.

\paragraph{Results.}
In this work, we consider the fragment of \FOCless
built recursively using the rules of first-order logic and the following rule:
\begin{quote}
    If $\phi$ is a formula, $y$ is a variable, and $N \in \Z$,
    then $\cnt{y}\phi > N$ is a formula.
\end{quote}
Except for syntactic differences, this fragment is equivalent to the logic \FOX defined by Kuske and Schweikardt~\cite{KuskeS2017},
where ``${>}\,0$'' stands for the unary predicate testing whether a term is positive.
In their definition of \FOX one may write $\cnt{y}\phi - N > 0$ instead of $\cnt{y}\phi > N$.
To avoid having multiple names for the same logic and because all 
our results are independent of such syntactic differences,
we call our fragment \FOX as well.
This logic further exists under the name FO(C)~\cite{EbbinghausF2005}


Formula (\ref{eq:formula1}) for partial dominating set is in \FOX
while (\ref{eq:formula2}) compares two non-constant counting terms
and therefore is \emph{not} a \FOX formula.
This makes \FOX and FOC$_1(\{>\})$ incomparable.
But while model-checking for \FOCONE is fixed parameter trac\-ta\-ble on nowhere dense graph classes,
\FOX is still too expressive for efficient model-checking: 
We prove that, just like \FOC, the model-checking problem for \FOX is AW[$*$]-hard even on trees of bounded depth (\Cref{lem:exacthard}).

For this reason, we define the concept of \emph{approximate model-checking}.
An approximate model-checking algorithm gets as input a graph $G$,
a formula $\phi$ and an accuracy $\varepsilon > 0$,
runs in time $f(|\phi|,\varepsilon)|G|$,
and either returns 
$1$ (meaning $G \models \phi$),
$0$ (meaning $G \not\models \phi$), or
$\bot$ (meaning ``I do not know.'')
The symbol $\bot$ may only be returned if slight perturbations
in the constants of $\phi$ could make the formula either satisfied or unsatisfied.
For smaller values of $\varepsilon$, these perturbations need to be increasingly small.
Our main result is the following:
\begin{theorem}\label{thm:approxScheme}
There is a linear fpt model-checking approximation scheme for
\FOX on labeled graph classes with bounded expansion.
\end{theorem}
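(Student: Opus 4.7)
The plan is to proceed by induction on the structure of the \FOX formula, maintaining for every subformula $\psi(\bar x)$ an approximate three-valued (Kleene-style) evaluation that returns, for each tuple $\bar x$, a value in $\{0,1,\bot\}$, with the invariant that $\bot$ is produced only when tiny perturbations of the integer constants appearing in $\psi$ could flip the truth value of $\psi(\bar x)$. Atomic formulas are evaluated exactly. Negation, conjunction, and disjunction combine three-valued inputs in the natural way: $\bot$ is suppressed whenever the other argument alone determines the output (for instance $0\land\bot=0$ while $1\land\bot=\bot$). The existential quantifier reduces to counting, since $\exists y\,\phi$ is equivalent to $\cnt{y}\phi>0$, so the whole induction hinges on the counting construct.

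For the inductive step $\cnt{y}\phi(y,\bar x)>N$, assume a three-valued evaluation of $\phi$ is already available. For each $\bar x$ let $L(\bar x)$ denote the number of witnesses $y$ labelled $1$ and $B(\bar x)$ the number labelled $\bot$. A sound three-valued answer for the outer formula is then $1$ if $L(\bar x)>N$, $0$ if $L(\bar x)+B(\bar x)\le N$, and $\bot$ otherwise. To respect the perturbation guarantee, I would invoke the recursive call with an inner accuracy $\varepsilon'$ chosen so that the resulting $\bot$-region lies in a small band around $N$; conversely, when the true count is far from $N$, both $L(\bar x)$ and $L(\bar x)+B(\bar x)$ must lie on the correct side of the threshold, ruling out $\bot$. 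Calibrating $\varepsilon'$ as a function of $\varepsilon$, $|\phi|$, and the nesting depth of the formula is what ties the perturbation conditions together.

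The remaining task, and the source of the main technical effort, is representing $L$ and $B$ in a form compact enough that the outer formula can still be evaluated in $f(|\phi|,\varepsilon)\cdot|G|$ time. Here the plan is to adapt the quantifier-elimination philosophy behind the Dvořák--Král--Thomas algorithm for bounded expansion: use low tree-depth colorings to localize the computation, then record the three-valued outcome of $\cnt{y}\phi(y,\bar x)>N$ as an additional label in an extended labeled structure. When $\bar x$ is a single variable this addition is a unary label and preserves bounded expansion directly; for higher arity one has to decompose the count into a ``global'' part that is common to typical tuples and a ``local'' part depending only on a bounded-tree-depth neighborhood of $\bar x$, so that only information attached to vertices needs to be kept. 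The principal obstacle I anticipate is precisely this interaction between nested counting and localization: the count $\#\{y:\phi(y,\bar x)\}$ is a global statement about $G$, and to approximate it by something locally computable without losing linear time requires showing that outside a bounded-radius neighborhood of $\bar x$ the contribution to the count is essentially independent of $\bar x$ up to an error we can absorb into the $\bot$-slack. Composing these local/global decompositions through arbitrarily deep nesting while keeping the cumulative three-valued output faithful to the user's $\varepsilon$ is the delicate point on which the whole argument depends.
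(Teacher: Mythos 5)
Your high-level shape is right (quantifier elimination by induction on structure, $\exists$ reduces to $\#>0$, track a sound three-valued/over-under answer per tuple), and the observation that instability must be threaded through the nesting depth matches the paper's use of an over/under\-approximation pair $(\phi'_+,\phi'_-)$. But there is a real gap at exactly the point you flag as ``the main technical effort,'' and the gap is more than a missing implementation detail. Your scheme implicitly assumes that once $\phi(y,\bar x)$ carries three-valued labels, the counts $L(\bar x)$ and $B(\bar x)$ can be computed (or implicitly represented) exactly and then compared to $N$. In particular, when $\phi$ is quantifier-free you have $B\equiv 0$ and your rule ``$1$ if $L>N$, $0$ if $L+B\le N$, $\bot$ otherwise'' would never emit $\bot$ — i.e.\ it demands an exact per-tuple count. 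That is not achievable in linear fpt time: there are $|G|^{|\bar x|}$ tuples, so the count must be decomposed into a sum of single-vertex contributions $\sum_i c_{\omega,i}(u_i)$, and this decomposition is itself only a $(1+\varepsilon)$-approximation. The error is unavoidable: eliminating the mixed negative literals $f(y)\neq g(x_j)$ by inclusion–exclusion produces a difference $a-b$ of two large terms, and a small relative error in $a$ and $b$ can be an arbitrarily bad relative error in $a-b$. Your ``local vs.\ global'' decomposition does not confront this cancellation problem at all, and it is the central obstacle.

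The paper handles it by working in a functional representation of a transitive fraternal augmentation and introducing a preprocessing step (``flip arcs,'' Lemmas~\ref{lem:aug} and~\ref{lem:aug2}) that guarantees, for every relevant conjunctive clause, that each negative mixed literal either has negligible influence (so dropping it incurs only a $(1+\varepsilon)$ factor) or is forced by the augmented structure (so it can be removed exactly). One has to argue that only $O(d/\varepsilon)$ flips per vertex are needed, so bounded expansion is preserved. Only after this does one get the per-coordinate sum $\sum_i c_{\omega,i}(u_i)$ of \emph{nonnegative} terms, which can then be bucketed into $O(|\bar x|/\varepsilon)$ intervals and stored as unary predicates, yielding the quantifier-free replacement formulas. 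Note also that the paper uses low tree-depth colorings only for the exact optimization result (\Cref{thm:optimization}), not for the approximation scheme; for the scheme the machinery is transitive fraternal (flip) augmentations and canonical conjunctive clauses, not low tree-depth decomposition. Without the flip-arc idea (or an equivalent way to avoid catastrophic cancellation when localizing the count), your proof plan cannot be completed.
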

This means for every graph class with bounded expansion
there exists a function $f$ such that the model-checking problem for \FOX 
on this graph class can be approximated with an arbitrary accuracy $\varepsilon > 0$ in time $f(|\phi|,\varepsilon)|G|$
(\Cref{def:modelcheckingblub}).

Let us now describe when the approximation algorithm is allowed to answer $\bot$.
For $\lambda>1$ we call two formulas \emph{$\lambda$-similar} if one formula can be obtained from the
other one by changing the constant counting terms by a factor between
$1/\lambda$ and~$\lambda$.  The two \FOX-formulas 
$$
\#y(\#z\,\phi(yz)<500) >1000
$$
$$
\#y(\#z\,\phi(yz)<498) >1009
$$
are $1.01$-similar, but not $1.009$-similar.
We further say a formula $\phi$ is $\lambda$-\emph{unstable}
on a graph $G$ if $\phi$ is $\lambda$-similar to two formulas
$\phi'$ and $\phi''$ such that $G\models\phi'$ and
$G\not\models\phi''$.
For a given $\varepsilon > 0$, the approximation algorithm
is only allowed to answer $\bot$ if the input formula $\phi$
is $(1+\varepsilon)$-unstable on the input graph $G$.
Note that formulas without counting quantifiers are never unstable
and may never lead to the answer $\bot$.
Our approximation scheme therefore generalizes the first-order model-checking problem.

It can be argued that answering queries 
approximately is in many applications almost as good as an exact answer
because the involved numbers (like a maximal debt of one million dollars
or a maximum allowed temperature of 1000 degrees) are often only
ballpark numbers.
Furthermore, if $\bot$ is returned we know that the formula is ``close''
to being satisfied and unsatisfied, 
which sometimes may be interesting in itself.
For example, if the partial dominating set formula (\ref{eq:formula1})
is $(1+\varepsilon)$-unstable then
there exists a solution dominating more than $N/(1+\varepsilon)$
vertices, but none dominating more than $(1+\varepsilon)N$ vertices.

A natural question that arises is whether \FOX can
be generalized while keeping the ability of efficient approximate
model-checking.  
We answer this question negatively.
If we make \FOX just slightly more powerful then
approximate model-checking becomes $\rm AW[*]$-hard
even on trees of bounded depth.
This happens if we allow either counting
quantification on pairs of variables, subtraction, multiplication, or
comparison between two non-constant counting terms,
and even for very large ``approximation ratios''
(\Cref{lem:approxhard}).
The \FOX fragment seems to be at the edge of
what can still be efficiently approximated on bounded expansion graph classes.

Nevertheless, we identify certain \FOX-formulas that can be evaluated exactly.
They are of the form $\exists x_1\ldots\exists x_k\#y\,\phi(y\bx)$
where $\phi(y\bx)$ is a first-order formula.
Since the previously mentioned partial dominating set formula (\ref{eq:formula1})
is of this shape, we get the following result:

\begin{corollary}\label{col:partialDomSet}
Partial dominating set can be solved in linear fpt time on graph classes with
bounded expansion.  
\end{corollary}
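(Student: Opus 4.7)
The plan is to reduce partial dominating set to model-checking the \FOX formula~(\ref{eq:formula1}) on the input graph and to evaluate that particular formula \emph{exactly} (not merely approximately) in linear fpt time on bounded expansion classes. Formula~(\ref{eq:formula1}) has the special shape $\exists x_1\cdots\exists x_k\,\cnt{y}\phi(y\bx)>N$ with a purely first-order body $\phi(y,x_1,\ldots,x_k)=\bigvee_{i=1}^k(x_i=y\lor E(x_i,y))$, exactly the shape singled out immediately before the corollary. It therefore suffices to compute the integer $M=\max_{\bx\in V(G)^k}|\{v\in V(G):G\models\phi(v,\bx)\}|$ in linear fpt time and to return ``yes'' iff $M>N$.

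To obtain $M$ exactly, I would follow the route of the proof of \Cref{thm:approxScheme} but drop its final rounding step. On a bounded expansion class one can, in linear fpt time, partition $V(G)^k$ into finitely many equivalence classes so that any two tuples in the same class realize the same count $\cnt{y}\phi(y\bx)$, and extract a representative tuple together with that count for each class. Since $\phi$ uses no counting, this is a purely first-order subroutine: it can be carried out using the linear fpt \FO model-checking algorithm for bounded expansion classes cited in the introduction, and each representative count is returned \emph{exactly}. The value of $M$ is then the maximum of the finitely many returned counts.

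The main obstacle is justifying why no approximation is needed here, even though \Cref{thm:approxScheme} generally allows the answer~$\bot$. That answer arises whenever a comparison of the form $\cnt{y}\psi>N'$ nested inside the formula could flip under small perturbations of the constants of $\phi$, which is what forces the general algorithm to bucket counts into a coarse scale. In our formula there is only \emph{one} counting term, sitting at the outermost level, and it is compared against the external threshold $N$ only at the very end. Hence no intermediate comparison between two counting terms arises, so the exact values of the representative counts can be retained throughout. The result is a linear fpt algorithm that decides partial dominating set without ever producing~$\bot$.
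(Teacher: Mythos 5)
Your high-level intuition is right that partial dominating set falls under the ``exact'' regime of the paper because the single counting term sits at the outermost level and is compared only once against the external threshold $N$. The paper indeed derives \Cref{col:partialDomSet} from an exact optimization result (\Cref{thm:optimization}), not from the approximation scheme. But the central step you propose does not work.

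You claim that on a bounded expansion class one can ``partition $V(G)^k$ into finitely many equivalence classes so that any two tuples in the same class realize the same count $\cnt{y}\phi(y\bx)$.'' This is false even for $k=1$ and $\phi(y,x)=E(x,y)$: then $\cnt{y}\phi(y,x)$ is the degree of $x$, which takes $\Theta(|G|)$ distinct values, so no bounded number of classes can have a constant count. The machinery the paper builds (\Cref{thm:mainExact}, derived from the inclusion–exclusion decomposition in \Cref{lem:apxExact}) shows something weaker and subtler: for the unique canonical clause $\omega$ with $\vG'\models\omega(\bu)$, the count decomposes as a sum $\sum_{i}c_{\omega,i}(u_i)$ of \emph{vertex-dependent} weights. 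The set of clauses $\Omega$ is finite, but the counts within a clause still vary with $\bu$ through these weights. Consequently you cannot ``extract a representative tuple together with its count'' for each class.

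The second gap is your reduction to ``a purely first-order subroutine'' via FO model-checking. Finding $\max_{\bu}\sum_i c_{\omega,i}(u_i)$ over tuples $\bu$ satisfying a first-order side condition is an \emph{optimization} problem, not a decision problem, and is not handled by linear fpt FO model-checking. The paper resolves this with a low tree-depth coloring of the expanded structure, which splits the search over $\bu$ into a bounded number of bounded-tree-depth subgraphs, and then invokes the LinEMSOL optimization variant of Courcelle's theorem (Courcelle, Makowsky, Rotics) on each piece to maximize the linear weight function subject to the canonical clause. Without an ingredient of this kind — some way to optimize a sum of per-vertex weights over first-order definable tuples — your argument does not close. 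So the proposal correctly identifies \emph{that} exact evaluation should be possible for this special shape, but the mechanism you give for achieving it is broken in both of its two steps.
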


It has been shown by Amini, Fomin, and Saurabh~\cite{AminiFS11}, 
that partial dominating set can be solved in fpt
time on minor-closed graph classes,
but the complexity on graph classes with
bounded expansion has remained open.  Moreover, the running time
of Amini et al.'s algorithm for an $H$-minor free graph class is
of the form $f(k)n^{c_H}$, where $n$ is the number of vertices and
$c_H$ is a constant that depends on~$H$, while our running time 
is linear in~$n$.  We get the same running time for similar problems such as
distance-$r$ dominating set or variants of partial vertex cover.
We are further able to solve a general optimization problem
where the goal is to retrieve an optimal witnesses for the free variables of a counting formula.

\begin{theorem}\label{thm:optimization}
Let $\cal C$ be a labeled graph class with bounded expansion.
There exists a function $f$ such that for a given
graph $G \in \cal C$ and first-order formula $\phi(y\bx)$
one can compute in time
$f(|\phi|)\norm{G}$ a tuple
$\bu^* \in V(G)^{|\bx|}$ such that
$$
[[\#y\, \phi(y\bu^*)]]^{G} = 
\underset{\bu \in V(G)^{|\bx|}}{\textnormal{opt}} [[\#y\, \phi(y\bu)]]^{G},
$$
where \textnormal{opt} is either $\min$ or $\max$.
\end{theorem}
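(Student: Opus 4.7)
The plan is to decouple the outer optimisation over $\bar u\in V(G)^{|\bar x|}$ from the inner count $c(\bar u):=\ip{\#y\,\phi(y\bar u)}^G$. Because $\phi$ is purely first-order, $c(\bar u)$ can be evaluated \emph{exactly} on bounded-expansion classes via the linear-fpt first-order model-checking machinery of Dvořák--Král'--Thomas that also underlies Theorem~\ref{thm:approxScheme}; the real challenge is to locate an extremising tuple without iterating through all $\norm{G}^{|\bar x|}$ choices.

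Step~1 (Localisation). In linear fpt time, compute a coloured augmented structure $G^\star$ and a quantifier-free formula $\tilde\phi(y\bar x)$ equivalent to $\phi(y\bar x)$, such that $\tilde\phi(y\bar u)$ depends only on a radius-$r$ ball around $y\bar u$ in $G^\star$, with $r = r(|\phi|)$.

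Step~2 (Gaifman decomposition). Split $c(\bar u)=c_{\text{near}}(\bar u)+c_{\text{far}}(\bar u)$ according to whether $y$ lies within distance $2r$ of some coordinate of $\bar u$ in $G^\star$. The near term depends only on the $2r$-ball of $\bar u$ and is computable in constant time from that ball. For the far term, Gaifman locality ensures that $\tilde\phi(y\bar u)$ with $y$ far from $\bar u$ factors into a conjunction of a $y$-local and a $\bar u$-local part, so the far count is a fixed linear combination of \emph{global} counts of first-order $1$-types of vertices in $G^\star$, with coefficients determined by a bounded-size far-type of $\bar u$ together with a constant-size correction subtracting the contribution of $y$'s in the near zone of $\bar u$. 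The global $1$-type counts are precomputed in one linear sweep of $G^\star$. Thus $c(\bar u)$ is a fixed function of a bounded-description profile $\tau(\bar u)$ and the precomputed globals.

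Step~3 (Enumeration and optimisation). Use the low tree-depth colouring technique of \Nesetril and Ossona de Mendez: in linear fpt time colour $G^\star$ so that any $p=p(|\phi|)$ colour classes induce a subgraph of tree-depth at most $p$. Iterate over the $f(|\phi|)$ choices of $|\bar x|$ colour classes and, on each induced tree-depth-$p$ subgraph, run a Feferman--Vaught-style dynamic program along the tree-depth decomposition that computes $\max_{\bar u}c(\bar u)$ (or the minimum) together with a witness, restricted to tuples whose $i$-th coordinate lies in the $i$-th chosen class. Taking the best witness over all colour combinations yields $\bar u^\ast$. The main obstacle is the design of $\tau(\bar u)$ so that $c(\bar u)$ is truly a fixed function of $\tau(\bar u)$ and the precomputed globals — in particular the near/far overlap correction has to be handled carefully — and the simultaneous propagation of counting and maximisation through a constant-size DP state on tree-depth-$p$ subgraphs. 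Once this bookkeeping is in place, every colour combination is handled in linear time, giving the claimed $f'(|\phi|)\,\norm{G}$ running time.
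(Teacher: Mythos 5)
Your high-level plan shares its last step with the paper (low tree-depth colourings over the augmented structure, then optimise on the bounded tree-depth pieces), but the crucial middle step -- how to express $c(\bar u)=\ip{\#y\,\phi(y\bar u)}^G$ in a form that a bounded-tree-depth DP can optimise -- is where your approach diverges from the paper's, and where it currently has a genuine gap.

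The paper's route (Theorem~\ref{thm:mainExact}, built on Lemma~\ref{lem:apxExact}) works entirely in the functional representation and uses inclusion--exclusion to eliminate the negative mixed literals $f(y)\neq g(x_i)$ exactly. The payoff is a very strong structural statement: after a linear-time augmentation $\vG'$, there is a finite family $\Omega$ of quantifier-free ``tuple types'' such that every $\bar u$ realises exactly one $\omega\in\Omega$, and for that $\omega$
$$
\ip{\#y\,\phi(y\bar u)}^{\vG}=\sum_{i=1}^{|\bar x|}c_{\omega,i}(u_i),
$$
a \emph{per-coordinate additive} expression with precomputed weights $c_{\omega,i}\colon V\to\Z$. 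This additivity is exactly what lets the paper plug the problem into LinEMSOL~\cite{CourcelleMR2000} on each bounded tree-depth subgraph, with $\omega'(\bar x\bar z)$ as the MSO constraint and $\sum_i c_{\omega,i}(u_i)$ as the linear objective.

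Your Step~2 replaces this by a Gaifman-style near/far split $c(\bar u)=c_{\text{near}}(\bar u)+c_{\text{far}}(\bar u)$. Two concrete problems. First, in a bounded-expansion class the radius-$2r$ ball (in the Gaifman graph of $G^\star$) around a tuple is \emph{not} of bounded size -- bounded expansion bounds neither degree nor ball size -- so ``computable in constant time from that ball'' fails, and likewise the ``constant-size correction subtracting the contribution of $y$'s in the near zone of $\bar u$'' is not constant-size. (This is precisely why the paper works with functional \emph{up-balls}, which are of constant size because indegree is bounded, rather than Gaifman balls.) Second, even granting a bounded near zone, your decomposition does not yield per-coordinate additivity: $c_{\text{near}}(\bar u)$ depends jointly on the configuration of all coordinates of $\bar u$, so the optimisation objective is not a linear function of independent vertex weights, and LinEMSOL cannot be invoked directly. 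You would instead need a Feferman--Vaught DP whose state tracks partial near-counts for each possible joint local type of the partially assigned tuple -- you name this as ``the main obstacle'' but do not resolve it, so as written the argument does not go through. The inclusion--exclusion step of Lemma~\ref{lem:apxExact} is the missing idea: it is what turns the joint near-term into a signed sum of single-coordinate counts that can then be folded into the $c_{\omega,i}$ weights.
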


This means, for example, that we can find an optimal partial dominating
set of size $k$ in linear fpt time, which is faster than
using self-reducibility.
Another example is MaxSAT~\cite{ChaIKM1997,MillsT2000}, parameterized
by the search depth $k$ of a local search.
Assume we have a SAT formula whose incidence graph comes from a class
with bounded expansion and an (unsatisfying) assignment.
We could use \Cref{thm:optimization} to find in linear fpt time
another assignment with Hamming distance at most $k$ maximizing
the number of satisfied clauses.

\paragraph{Techniques.} Most of our
proofs use functional structures to represent graphs.
The overall strategy of our main result in \Cref{thm:approxScheme}
is the use of quantifier elimination to
replace the model-checking problem by one with one counting quantifier
less until we reach a quantifier-free formula. 
To eliminate a counting quantifier, we perform a sequence of transformations
on counting terms of the form $\#y\,\phi(y\bx)$ where $\phi(y\bx)$ is quantifier-free.
We replace them with a sum of gradually simpler counting terms
until they are simple enough to be directly evaluated.
Most transformations preserve the value of the counting term.
In the end, however, we have to replace each summand with an approximation of it.
This leads to a problem at one point.
We can express a counting term via inclusion-exclusion
as $a-b$ for two terms $a$ and $b$. 
If we have an approximation $a'$ of $a$ and $b'$ of $b$ with good relative error
and $a=b$ are very large then $a'-b'$ may be a very bad approximation of $a-b=0$.
This has to be avoided.
If $b$ is rather small we can ignore it
and just use $a$ as a good approximation of $a-b$. 
If $b$ is big, however, something needs to be done.
By a preprocessing of the graph during which we add so-called ``flip'' arcs we modify it in
such a way that the subtraction $a-b$ can be done exactly whenever necessary.
This is the most crucial step in the proof as we have to be very
careful to add enough arcs to achieve the necessary precision, while still
staying in a graph class with bounded expansion. 
At last, we have approximated the counting term $\#y\,\phi(y\bx)$
using a sum of simpler counting terms of the form $\#y\,\psi(yx_i)$.
Each simpler counting term depends only on one free variable
and can therefore be evaluated in linear fpt time.
Then we
round the resulting numbers into a constant number of intervals and introduce
unary predicates indicating the intervals in which the numbers lie.  This is
the second situation where we loose precision.  Using these predicates we
can finally get rid of the counting quantifier and replace
a subformula of the form $\#y\,\phi(y\bx) > N$
with a quantifier-free one.
Due to the previously introduced errors,
the new formula may not always give us the correct answer.
Therefore we build a pair of quantifier-free formulas: one over- and one underapproximation.
If they agree, we know the correct answer.  
If they disagree, we can be sure that the situation is unstable.

If the term $\#y\,\phi(y\bx)$ is not part of a larger formula, we 
we can stop the quantifier elimination step early.
We avoid the problem of subtraction and the encoding into unary predicates
and instead evaluate the simplified intermediate
counting terms directly using standard methods.
This means that we can solve the model-checking problem for such
formulas \emph{exactly} in linear fpt time,
giving rise to \Cref{thm:optimization}.

The remaining paper is structured as follows:  We start by introducing
the necessary notation.  Then in \Cref{sec:approxModelChecking} we develop the machinery for quantifier
elimination in functional representations of graphs.  We then 
prove the main result that there is an approximate model-checking
algorithm for \FOX on graph classes with bounded expansion (\Cref{thm:approxScheme}).
In Section~\ref{sec:exact} we prove 
our exact model-checking and optimization algorithm for
\FOX-formulas of a special shape (\Cref{thm:optimization}).
At last, in \Cref{sec:hardness} we prove the hardness of
exact model-checking for \FOX and approximate model-checking for
generalizations of \FOX (Lemma \ref{lem:exacthard}, \ref{lem:approxhard}).

\section{Definitions and Notation}

\paragraph*{Graphs.}

In this work we obtain results for \emph{labeled graphs}.
A labeled graph is a tuple $G=(V,E,P_1,\dots,P_m)$,
where $V$ is the vertex set, $E$ is the edge set
and $P_1,\dots,P_m \subseteq V$ the labels of $G$.
The \emph{order} $|G|$ of $G$ equals $|V|$.
We define the \emph{size} $\norm{G}$ of $G$ as $|V|+|E|+|P_1|+\dots+|P_m|$.
Unless otherwise noted, our graphs are undirected.
For a directed graph $G$, the indegree of a node $v$ equals the number of
vertices $u$ such that there is an arc $uv$ in $G$.
The maximal indegree of all nodes in $G$ is denoted by $\Delta^-(G)$.

\paragraph*{Logic.}

We consider fragments of the very general first-order counting
logic \FOC defined by Kuske and Schweikardt~\cite{KuskeS2017}.
It depends on a collection $\bf P$ of numerical predicates,
where each predicate $P \in \bf P$ has semantics specified by $[[P]]$.
We consider fragments of \FOCless 
where $>$ is the usual ``greater than'' predicate with
$[[{>}]] = \{\,(a,b) \in \Z^2 \mid a > b\,\}$.
As we will only use a subset of \FOCless, we refrain from
giving the whole definition.
Instead, we define fragments of \FOCless, as we introduce them.
The semantics of \FOC are as expected
and we refer the reader to~\cite{KuskeS2017} for a rigorous definition.

\begin{definition}\label{def:fox}
We define \FOX to be the fragment of \FOCless built
using the rules of first-order logic (rule 1,2,3 in \cite[Definition 2.1]{KuskeS2017}) and the following rule:
\begin{quote}
    If $\phi$ is a formula, $y$ is a variable, and $N \in \Z$,
    then $\cnt{y}\phi > N$ is a formula.
\end{quote}
\end{definition}
Except for syntactic differences, this definition
is equivalent to the original definition of \FOX provided by Kuske and Schweikardt~\cite{KuskeS2017}
(with the original syntax one has to write $\cnt{y}\phi - N > 0$ instead of $\cnt{y}\phi > N$).

We say a \FOC formula is \emph{quantifier-free} if it contains
no $\exists$, $\forall$, or $\#$ quantifiers.
If two formulas $\phi_1$ and $\phi_2$ are logically equivalent we write $\phi_1 \equiv \phi_2$.
The \emph{length} of a formula $\phi$ is denoted by $|\phi|$ and equals its number of symbols.
In particular, the length of any number-symbol $N \in \Z$ in a \FOC formula is
one (and should not be confused with the length of a binary encoding of $N$).
For two signatures we write $\sigma \subseteq \rho$ to indicate that $\rho$ extends $\sigma$.
All signatures are finite and the cardinality $|\sigma|$ of a signature equals its number of symbols.
We often interpret a conjunctive clause $\omega \in$ FO as a set of literals and write
$l \in \omega$ to indicate that $l$ is a literal of $\omega$.

We denote the universe of a structure $G$ by $V(G)$.
We interpret a labeled graph
$G = (V,E,P_1,\dots,P_m)$ as a logical structure with universe
$V$, binary relation $E$ and unary relations $P_1$, \dots, $P_m$.

The notation $\bx$ stands for a non-empty tuple $x_1\dots x_{|\bx|}$.
We write $\phi(\bx)$ to indicate that a formula $\phi$
has free variables $\bx$.
Let $G$ be a structure, $\bu \in V(G)^{|\bx|}$ be a tuple of elements from the universe of $G$,
and $\beta$ be the assignment with $\beta(x_i) = u_i$ for $i \in \{1,\dots,|\bx|\}$.
For simplicity, we write $G \models \phi(\bu)$ and $[[\phi(\bu)]]^G$
instead of $(G,\beta) \models \phi(\bx)$ and $[[\phi(\bx)]]^{(G,\beta)}$.

Further notation concerned with functional structures and formulas
is introduced in \Cref{sec:functionalRepresentations}.

\paragraph*{Model-Checking.}
Let $\cal C$ be a class of structures and L be a logic.
The \emph{parameterized model-checking problem for \textnormal{L} on $\cal C$} is 
the defined as follows: 
The input is a structure $G \in \cal C$ and a sentence $\phi \in$ L
with matching signatures. The parameter is $|\phi|$.
The question is whether $G \models \phi$.
The parameterized first-order model-checking problem on the class
of all graphs is a complete problem for the complexity class AW[$*$].
As the whole W-hierarchy is contained in AW[$*$]
it is generally assumed that $\rm AW[*]\not\subseteq FPT$.

\paragraph*{Model-Checking Approximation Scheme.}
We now define our novel notion of a model-checking approximation scheme,
an fpt algorithm which is only allowed to answer ``I do not know''
if the fact whether the structure is a model of the formula
is sensitive to slight perturbations in the constants of the formula.


\begin{definition}[$\lambda$-similarity]\label{def:similar}
    Let $\lambda > 1$ and $\phi$ be a \FOC formula.
    A \FOC formula $\phi'$ is \emph{$\lambda$-similar} to $\phi$
    if $\phi'$ can be obtained from $\phi$ by replacing
    each atomic counting term $t \in \Z$ of $\phi$ by $t' \in \Z$ with
    $t/\lambda \le t' \le \lambda t$.
\end{definition}

\begin{definition}[$\lambda$-stability]\label{def:stable}
    Let $\lambda > 1$, $G$ be a structure
    and $\phi$ be a \FOC sentence.
    We say $\phi$ is \emph{$\lambda$-stable} on $G$
    if for every \FOC sentence $\phi'$ which is $\lambda$-similar to $\phi$
    it holds that $G \models \phi$ iff $G \models \phi'$.
    Otherwise we say that $\phi$ is \emph{$\lambda$-unstable} on~$G$.
\end{definition}

\begin{definition}[linear fpt model-checking approximation scheme]\label{def:modelcheckingblub}
    Let $\cal C$ be a class of labeled graphs,
    and \textnormal{L} be a fragment of \FOC.
    A \emph{linear fpt model-checking approximation scheme} for the logic \textnormal{L} on the class $\cal C$
    is an algorithm that gets as input a sentence $\phi \in \textnormal{L}$, a graph $G \in \cal C$ and $\epsilon > 0$,
    runs in time at most $f(|\phi|,\epsilon)\norm{G}$ for some function $f$
    and returns either 1, 0, or $\bot$.
    \begin{itemize}
    \item
        If the algorithm returns 1 then $G \models \phi$.
    \item
        If the algorithm returns 0 then $G \not\models \phi$.
    \item
        If the algorithm returns $\bot$ then $\phi$ is $(1+\epsilon)$-unstable on $G$.
    \end{itemize}
\end{definition}

\section{Approximate Model-Checking}\label{sec:approxModelChecking}

We will work with graph classes with bounded expansion and use their
characterization via transitive fraternal augmentations (\Cref{sec:flip}).  An
undirected graph is first replaced with a directed graph by orienting the
edges in such a way that the indegree is bounded by a constant that
depends only on the graph class.  We represent this directed graph by
a functional structure (\Cref{sec:functionalRepresentations}).
The signature of this structure
consists of a constant number of function symbols (usually denoted by $f,g,h$) and 
unary predicate symbols.  The function symbols represent arcs.
If $f(u)=v$ for some function $f$, then the corresponding directed graph has an arc~$vu$. 
In this way, we need only as many function symbols as the indegree of the directed graph.

Our model-checking algorithm works via \emph{quantifier elimination}.
This means, we gradually simplify the input formula
by iteratively removing the innermost quantifier.
We compensate every removed quantifier by adding new arcs and unary relations to our input structure (maintaining bounded expansion).
When no quantifiers are left we can easily evaluate the formula.
In this procedure, the subformulas spanned by the innermost quantifier
are of the form $\#y\, \phi(y\bx) > N$ (where $\phi$ is quantifier-free).
We want to replace such a formula with two almost equivalent quantifier-free formulas (\Cref{sec:iteratedElimination}).
If evaluating these formulas on a graph gives two different results
then we know 
that $\#y\, \phi(y\bx)$ is up to a factor of $(1+\varepsilon)$ close to $N$
and we are allowed to return~$\bot$.

In \Cref{sec:positiveSums}
we gradually transform the innermost counting term $\#y\, \phi(y\bx)$
into simpler terms while expanding the corresponding functional structure with
new arcs and unary relations.
In the end, we obtain
a sum of $t$ simpler terms of the form $\#y\,\tau(y)\land \psi(\bx) \land f(y)=g(x_i)$.
If we ignore $\psi(\bx)$, the simpler terms only have a single free variable
and can be evaluated in linear time for all inputs.
We divide the numbers from $0$ to $N$ into $t/\varepsilon$ buckets
and introduce unary predicates $R_0,\dots,R_{t/\varepsilon}$,  where
$R_l(x_i)$ is true if and only if the value of $\#y\,\tau(y)\land f(y)=g(x_i)$ is in the $l$th bucket.
Thus if for each summand and each $l$ we know the value of $\psi(\bx)$ and $R_l(x_i)$,
we either know that $\#y\, \phi(y\bx)$ is greater than $N$ or can reconstruct its value of up to a factor of $(1+\varepsilon)$ (\Cref{sec:constructingFO}).
This reconstruction can be done in a quantifier-free first-order formula with free variables $\bx$,
which completes the quantifier elimination.
Note that summands are not allowed to be negative since
due to cancellation the magnitude of individual summands could be 
considerably larger than the final sum
and the bucket-rounding technique would not work.

The main challenge is to find a decomposition of $\#y\, \phi(y\bx)$
into summands of the form $\#y\,\tau(y)\land \psi(\bx) \land f(y)=g(x_i)$.
This transformation is done in several stages.
The first intermediate step are formulas that consist of 
conjunctive clauses that can be grouped as
$\tau(y)\land\psi(\bx)\land\Delta^=(y\bx)\land\Delta^{\neq}(y\bx)$
and is carried out in a similar way to what Kazana and Segoufin
did~\cite{KazanaS2013}.  Here all $\tau(y)$, $\psi(\bx)$, $\Delta^=(y\bx)$,
and $\Delta^{\neq}(y\bx)$ are conjunctions of atomic formulas, which we also call \emph{literals}.
Those literals that contain only $y$ are grouped into $\tau(y)$, those with variables only from
$\bx$ into~$\psi(\bx)$.  
We call the remaining ones the \emph{mixed literals}, as they depend
on $y$ and $\bx$. We make sure that they are either of the form
$f(y)=g(x_i)$ or $f(y) \neq g(x_i)$.
The former ones are placed into $\Delta^=(y\bx)$ 
and the latter ones into $\Delta^{\neq}(y\bx)$.

Let us replace the input graph with its 1-transitive fraternal augmentation.
By the fraternal rule,
for every original function symbol $f$, $f'$ 
and every vertex $v$ there is a function symbol 
$h$ in the transitive fraternal augmentation 
with either $f(v) = h(f'(v))$ or $h(f(v)) = f'(v)$.
By expanding each conjunctive clause, we take care that $\tau(y)$ contains every possible literal
of the form $h(f(y))=g(y)$ or its negation.
Similarly for literals $h(f(x_i))=g(x_j)$ in $\psi(\bx)$.
This creates redundancy that helps us replace mixed literals.
We will proceed in a similar way as Kazana and Segoufin, but have to be a bit
more
careful about not overcounting, as we eliminate a counting quantifier rather
than an existential one.
As the next step, we make sure that $\Delta^=(y\bx)$ contains only one literal.
The resulting formulas
$\tau(y)\land\psi(\bx)\land f(y) = g(x_i) \land\Delta^{\neq}(y\bx)$
are one step closer to their final form.
It remains to eliminate the negative literals in $\Delta^{\neq}(y\bx)$.
A standard way to do so would be inclusion-exclusion.
But this is not allowed since it would lead to subtraction, which cannot be approximated.
Finally, we have to use very different techniques than Kazana and Segoufin~\cite{KazanaS2013}.

If the negative mixed literals 
in $\Delta^{\neq}(y\bx)$ are satisfied by almost all of
the witnesses for $y$, 
then removing them increases the final count only a little bit.
This way we obtain a good enough approximation.
If this does not work, we introduce so called ``flip'' arcs (see \Cref{sec:flip}) to the graph 
and exploit the redundant literals added to $\tau(y)$ and $\psi(\bx)$.
The redundancy in $\tau(y)\land\psi(\bx)\land f(y) = g(x_i)$
together with the new arcs then
imply $\Delta^{\neq}(y\bx)$ to be either always true or always false.
In the former case, we remove $\Delta^{\neq}(y\bx)$ from its
conjunctive clause, and 
in the latter case we can remove the whole conjunctive clause.
Our key observation is that 
we only need to introduce a small amount of ``flip'' arcs
and therefore stay within a graph class with bounded expansion.

\subsection{Transitive Fraternal Flip Augmentations}\label{sec:flip}

A directed graph $G'$ is a \emph{$1$-transitive fraternal augmentation}
of a directed graph $G$ if it has the same vertex set as $G$ and satisfies the following
conditions~\cite{NesetrilM2012}:
\begin{itemize}
\item {\bf Transitivity.} If the arcs $uv$ and $vw$ are present in $G$ then
$uw$ is present in $G'$.
\item {\bf Fraternity.} If $uw$ and $vw$ are present in $G$ then
$uv$ or $vu$ are present in $G'$.
\item {\bf Tightness.} If $G'$ contains an arc that is not present in $G$
it must have been added by one of the previous two rules.
\end{itemize}

Let $G$ be an undirected graph.
We call a sequence $G_0\subseteq G_1\subseteq\cdots$ a
\emph{transitive fraternal augmentation}
of $G$ if $G_0$ is a directed graph obtained by orienting the edges of~$G$,
and $G_{i+1}$ is a $1$-transitive fraternal augmentation of $G_i$ for $i \ge 0$.
For any graph class $\cal C$ with bounded expansion 
\Nesetril and Ossona de Mendez devised an algorithm~\cite{NesetrilM2008b} that
computes a transitive fraternal augmentation
$G_0\subseteq G_1\subseteq G_2\subseteq\cdots$ of $G$
such that $\Delta^-(G_i)\leq\Gamma_{\cal C}(i)$, $i \in \N$
for a function $\Gamma_{\cal C}$ that depends only on the graph class~$\cal C$.
The orientation $G_0$ can be computed in time $O(\norm{G})$ from $G$,
and $G_{i+1}$ can be computed from $G_i$ in time $O(\norm{G_i})$.
We will assume that this algorithm is used to compute augmentations and orientations
and call the corresponding output \emph{the}
$1$-transitive fraternal augmentation and \emph{the} orientation,
similarly to what Kazana and Segoufin did~\cite{KazanaS2013}.  

We will need a generalization of this construction.
We say a directed graph $G'$ is a flip
of $G$ if $G'$ is a supergraph of $G$ with the same vertex set
that can contain additional ``flipped'' arcs, i.e.,
$G'$ can have an additional arc $uv$ only if $G$ already
contains~$vu$.  Moreover, we require that
$\Delta^-(G')\leq\Delta^-(G)+1$.  In a flip, with other words,
we can add reverse arcs for arcs that are already present without
increasing the maximal indegree by more than one.
A sequence $G_0 \subseteq G_1 \subseteq G_2 \subseteq \cdots$ 
is called a \emph{transitive fraternal flip augmentation} of $G$
if $G_{i+1}$ is a flip or the
$1$-transitive fraternal augmentation of $G_i$ for all $i\ge0$
and $G_0$ is the orientation of~$G$.
Note that we can apply \emph{any}
flip but can only apply the ``well-behaved'' orientation and augmentation
devised by \Nesetril and Ossona de Mendez~\cite{NesetrilM2008b}.
We can characterize graph classes with bounded expansion via transitive
fraternal flip augmentations.

\begin{lemma}\label{lem:flip}
Let $\calC$ be a graph class.  Then $\calC$ has bounded expansion if and
only if there exists a function $\Gamma_{\calC} \colon \mathbf{N} \to
\mathbf{N}$ such that every graph $G \in \calC$ and every transitive
fraternal flip augmentation $G_0 \subseteq G_1 \subseteq G_2 \subseteq \dots$ 
of $G$ has $\Delta^-(G_i) \le \Gamma_{\calC}(i)$ for every
$i\geq0$.
\end{lemma}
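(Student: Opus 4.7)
The plan is to handle the two directions of this equivalence separately. For the easier ``if'' direction, I would observe that a standard transitive fraternal augmentation (one in which no flips are ever applied) is itself a valid transitive fraternal flip augmentation. Hence, if the assumed function $\Gamma_\calC$ exists, it bounds the indegrees of every standard transitive fraternal augmentation of every graph in $\calC$. By the characterization of \Nesetril and Ossona de Mendez~\cite{NesetrilM2008b}, this already implies that $\calC$ has bounded expansion.

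The harder ``only if'' direction I would prove by induction on the length $i$ of the augmentation sequence, strengthening the induction hypothesis to assert not only that $\Delta^-(G_i) \le \Gamma_\calC(i)$, but also that the underlying undirected graph of $G_i$ lies in a bounded expansion class $\calC_i$ depending only on $\calC$ and $i$. The base case is handled by the standard orientation step, which by the algorithm of \Nesetril and Ossona de Mendez produces $G_0$ with bounded indegree while leaving the underlying undirected graph equal to $G \in \calC$.

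For the inductive step, I would distinguish the two operations that can produce $G_{i+1}$ from $G_i$. The flip case is immediate: the underlying undirected graph is unchanged (flips only reverse already present arcs) and so stays in $\calC_i$, while by definition $\Delta^-(G_{i+1}) \le \Delta^-(G_i) + 1$. The transitive fraternal augmentation case I would resolve by invoking the preservation theorem for bounded expansion under $1$-transitive fraternal augmentation: applying the augmentation to a directed graph with bounded indegree whose underlying graph has bounded expansion produces a directed graph with bounded indegree whose underlying graph also has bounded expansion, with all bounds depending only on the previous ones.

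The main obstacle lies precisely in this last invocation. After a flip, the input to the subsequent augmentation is no longer the canonical output of the algorithm of \Nesetril and Ossona de Mendez, so their result does not literally apply on the nose. The hard part will be to invoke (or verify) a version of the preservation theorem that works for \emph{any} orientation with bounded indegree whose underlying graph lies in a bounded expansion class, not just for orientations produced by their specific algorithm. Once this strengthening is available, the induction closes and yields the desired global function $\Gamma_\calC$.
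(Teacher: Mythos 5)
Your proposal follows essentially the same route as the paper's proof. Both directions are handled the same way: the ``if'' direction by observing that pure augmentation sequences are a special case of flip augmentation sequences and invoking the known characterization (the paper cites \cite[Cor.~5.3]{NesetrilM2008a}), and the ``only if'' direction by induction on $i$, tracking simultaneously the indegree $\Delta^-(G_i)$ and the fact that the underlying undirected graph stays inside a bounded expansion class (the paper makes this quantitative by tracking the grads $\nabla_r(G_i)$ for all $r$). The flip case is indeed immediate for exactly the reason you state.

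The one point I would tighten: the concern you flag at the end --- that the preservation result might apply only to the canonical output of the \Nesetril--Ossona de Mendez algorithm --- is legitimate to raise, but it dissolves once you look at how their bounds are actually stated. The two facts used are
\[
\nabla_r(G_{i+1}) \;\le\; p_{2r+1}\bigl(\Delta^-(G_i)+1,\; \nabla_{2r+1}(G_i)\bigr)
\qquad\text{\cite[Lemma~3.5]{NesetrilM2008b}}
\]
and
\[
\Delta^-(G_{i+1}) \;\le\; \Delta^-(G_i)^2 + 2\lfloor\nabla_0(G_i)\rfloor
\qquad\text{\cite[Ch.~4.1]{NesetrilM2008b}},
\]
where $G_{i+1}$ is the $1$-transitive fraternal augmentation produced by their algorithm applied to an \emph{arbitrary} directed graph $G_i$. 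Both right-hand sides depend only on $\Delta^-(G_i)$ and the grads $\nabla_r(G_i)$ of the underlying undirected graph, not on how $G_i$ was produced. So feeding the algorithm a flipped graph is perfectly fine: a flip leaves $\nabla_r$ unchanged and raises $\Delta^-$ by at most one, and the subsequent augmentation step's bounds still hold verbatim. This is precisely the ``strengthening'' you asked for, and it is already built into the cited results; no new lemma needs to be verified. With that in hand your induction closes exactly as the paper's does.
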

\begin{proof}
Assume $\cal C$ has bounded expansion.
The maximal density of an $r$-shallow minor of a graph $G$ is denoted
by $\nabla_r(G)$.  
When we apply $\nabla_r$ to a directed graph we mean
$\nabla_r$ of the underlying undirected graph.  The exact
definition of $\nabla_r$ is not important for this proof as we will
use it as a black box.
By definition of bounded expansion, $\nabla_r(G)$ is bounded by some function
of $r$ for all graphs $G \in \cal C$~\cite{NesetrilM2008a}.  
The construction of \Nesetril and Ossona
de Mendez provides an orientation $G_0$ of $G$ such that
$\Delta^-(G_0)\leq2\nabla_0(G)$ \cite[Fact~3.1]{NesetrilM2008a}.
Also $\nabla_r(G_0) = \nabla_r(G)$.
Thus, if $G \in \cal C$
then the values $\Delta^-(G_0), \nabla_{0}(G_0), \nabla_{1}(G_0), \dots$ can all be bounded independent of $G$.

Let $G_{i+1}$ be the $1$-transitive fraternal augmentation of ~$G_i$.
\Nesetril and Ossona de Mendez showed that $\nabla_r(G_{i+1})\leq
p_{2r+1}(\Delta^-(G_i)+1,\nabla_{2r+1}(G_i))$ for some polynomial
$p_{2r+1}$~\cite[Lemma~3.5]{NesetrilM2008b}.  Also $\Delta^-(G_{i+1})
\le \Delta^-(G_i)^2 + 2\lfloor \nabla_0(G_i) \rfloor$~\cite[Chapter
4.1]{NesetrilM2008b}.
If $G_{i+1}$ is a flip of $G_i$
then $\Delta^-(G_{i+1})\leq\Delta^-(G_i)+1$ and
$\nabla_r(G_{i+1})=\nabla_r(G_i)$
for all~$r$ because the underlying undirected graphs are the same.
Hence, if $\Delta^-(G_i), \nabla_{0}(G_i), \nabla_{1}(G_i), \dots$ are bounded,
so are $\Delta^-(G_{i+1}), \nabla_{0}(G_{i+1}), \nabla_{1}(G_{i+1}), \dots$.
By induction this gives us a function  $\Gamma_{\calC}(i)$ 
with $\Delta^-(G_i) \le \Gamma_{\calC}(i)$ for every $i \ge 0$.
The other direction follows directly from
\cite[Corollary~5.3]{NesetrilM2008a}.
\end{proof}

\subsection{Functional Representations}\label{sec:functionalRepresentations}

We prove our results using a functional representation of graphs.
They were used heavily by Durand and Grandjean~\cite{DurandG2007}
and again by Kazana and Segoufin~\cite{KazanaS2013}, but partially
also by Dvořák, Král$\!$', and Thomas in the first proof that
first-order model-checking is ftp on bounded expansion graph
classes~\cite{DvorakKT2013}.  One big advantage of functional
representations is the ability to talk about short paths without using
quantifiers as long as all indegrees are bounded.

A \emph{functional signature} is a finite signature containing
functional symbols of arity one and unary predicates.
For a functional signature $\sigma$, we will denote the set of function symbols
by $\fsig\sigma$.
A \emph{functional representation} of a labeled directed graph $G$
is a $\sigma$-structure $\vG$.
The universe of $\vG$ is~$V(G)$.
For every label of $G$ there is one unary predicate in $\vG$ representing it.
The arcs of $G$ are represented using functions.
An arc $uv$ is present in $G$ if and only if $f^{\vG}(v) = u$
for some function symbol $f \in \fsig\sigma$.
Note that we need only $\Delta^-(G)$ different function symbols.
Unused function symbols are mapped to the vertex itself,
in particular an isolated vertex $v$ has $f^{\vG}(v) = v$
for every $f$ in $\fsig\sigma$.
For solely technical reasons, we further require a special
function symbol $\fid$ where $f_{\text{id}}^{\vG}(v) = v$
for all $v \in V(\vG)$.
We call $G$ the \emph{underlying directed graph} of $\vG$ and
by the \emph{underlying undirected graph} of $\vG$
we mean the underlying undirected graph of $G$.

We define the \emph{size} $\norm{\vG}$ of $\vG$ as $|\vG||\sigma|$ 
where $\sigma$ is the signature of $\vG$.
We transfer all remaining notation from directed graphs to functional
representations as expected.
For example $\Delta^-(\vG)$ is defined as $\Delta^-(G)$.
For a given functional signature $\sigma$
we define $\fungraph{\sigma}$ to be the class of all functional representations
with signature~$\sigma$.

Our logics FO or \FOC are defined in the usual way for this functional setting.
Note that in particular we allow nested function terms such as $f(g(x))$.
The \emph{functional depth} of a formula is the maximum number of nested function applications.
For example $f(g(x))=y$ has functional depth $2$.
We define $\funformula{d}{\sigma}$ to be all first-order formulas with
functional signature $\sigma$ and functional depth at most $d$.

For a given graph $G$ we later want
to have a sequence of functional representations
$\vG_0 \subseteq \vG_1 \subseteq \dots$ 
such that the sequence of underlying directed graphs
$G_0 \subseteq G_1 \subseteq \dots$ forms a transitive fraternal augmentation
of $G$ and additionally $\vG_{i+1}$ is an expansion of $\vG_i$ for $i \ge 0$.
We will later heavily exploit that for every
sentence $\phi$ with the same signature as $\vG_0$ and $i > 0$ it
holds that $\vG_0 \models \phi$ iff $\vG_i \models \phi$.

We extend our notion of $1$-transitive fraternal augmentations and flips
to functional representations.
For a given functional representation $\vG$,
we obtain the \emph{$1$-transitive fraternal augmentation} $\vG'$ of
$\vG$ by adding new function symbols to $\vG$ representing
all newly introduced arcs.
The functions representing the transitive edges are added in a special way:
For all function symbols $f,g$ in the signature of $\vG$ 
we add a function symbol $h_{f,g}$ to the signature of $\vG'$ and
define $h_{f,g}^{\vG'}= g^{\vG'}\circ f^{\vG'}$ representing
the newly introduced transitive edges obtained from $f$ and~$g$.
This step will later help us simplify our formulas by replacing 
nested functions of the form $g(f(x))$ with a single function $h_{f,g}(x)$.
Fraternal edges are added as well, of course, but we do not require
any special naming for them.
The construction of $\vG'$ is not necessarily deterministic,
but we can assume it to be.
Note that if $\Delta^-(\vG')$ is bounded then 
the signature of $\vG'$ has only a constant number of new function symbols.

A \emph{flip} $\vG'$ of $\vG$ is an expansion of $\vG$ with the same universe
and one more functional symbol representing the flipped edges.
Since the indegree of a flip may increase by at most one, one new function
symbol is sufficient.
At last, we define what it means for a class of functional representations to
have bounded expansion.

\begin{definition}\label{def:be-functional}
We say a class $\cal C$ of functional representations has bounded
expansion if there exists a functional signature $\sigma$ such that
$\cal C \subseteq \fungraph{\sigma}$
and the class of all underlying undirected graphs has bounded expansion.
\end{definition}

\begin{corollary}\label{cor:funcBE}
Let $\cal C$ be a class of functional representations with bounded expansion.
Consider the class $\cal C'$ of all functional representations $\vG'$
such that $\vG'$ is either the $1$-transitive fraternal augmentation
or a flip of $\vG$ for some $\vG \in \cal C$.
Then $\cal C'$ has bounded expansion.
\end{corollary}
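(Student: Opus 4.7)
The plan is to verify the two conditions of \Cref{def:be-functional} for $\cal C'$: that all its members share a single finite functional signature, and that the class of their underlying undirected graphs has bounded expansion. Fix a finite $\sigma$ with $\cal C \subseteq \fungraph{\sigma}$ and let $\cal H$ be the class of underlying undirected graphs of $\cal C$, which has bounded expansion by hypothesis. Every $\vG \in \cal C$ then satisfies $\Delta^-(\vG) \le |\fsig\sigma|$, and since $\nabla_r$ depends only on the undirected structure, $\nabla_r$ of the underlying directed graph of $\vG$ is bounded uniformly in $\vG$ for each fixed $r$.

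The next step is to split on the two ways a $\vG' \in \cal C'$ can arise. The flip case is immediate: the underlying undirected graph is unchanged, and the functional signature grows by a single function symbol. For the $1$-transitive fraternal augmentation case, with underlying directed graphs $G$ and $G'$ respectively, I would reuse exactly the bounds of \Nesetril and Ossona de Mendez that already appear in the proof of \Cref{lem:flip}: $\Delta^-(G') \le \Delta^-(G)^2 + 2\lfloor \nabla_0(G) \rfloor$ and $\nabla_r(G') \le p_{2r+1}(\Delta^-(G)+1,\nabla_{2r+1}(G))$ for a polynomial $p_{2r+1}$. Since $\Delta^-(G) \le |\fsig\sigma|$ and $\nabla_{2r+1}(G) = \nabla_{2r+1}(H)$ with $H \in \cal H$, both quantities depend only on $\cal C$ and $r$, so $\Delta^-(G')$ and $\nabla_r(G')$ are likewise bounded uniformly in $\vG'$.

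Combining the two cases yields a uniform bound on the maximum indegree, hence on the number of function symbols needed, across all of $\cal C'$, giving a single finite signature $\sigma'$ with $\cal C' \subseteq \fungraph{\sigma'}$; and a uniform bound on $\nabla_r$ of the underlying undirected graph of any member of $\cal C'$, giving bounded expansion of that class. By \Cref{def:be-functional} this is exactly what it means for $\cal C'$ to have bounded expansion. I do not anticipate a substantive obstacle: the corollary is essentially a single-step restatement of \Cref{lem:flip} in the functional-representation language, and the only point to watch is that in the augmentation case the extra fraternal function symbols must also be absorbed into $\sigma'$, which the indegree bound on $G'$ handles automatically.
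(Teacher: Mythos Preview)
Your proposal is correct. The paper states \Cref{cor:funcBE} without proof, treating it as immediate from \Cref{lem:flip} and the surrounding discussion; your argument simply unpacks that single inductive step (the \Nesetril--Ossona de Mendez bounds on $\Delta^-$ and $\nabla_r$ for one augmentation, and the trivial flip case), which is exactly the intended reasoning.
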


Let $\cal C$ be a graph class with bounded expansion.
For a graph $G \in \cal C$ we can compute
a functional representation $\vG$
of the orientation of $G$ in time $O(\norm{\vG})$.
Let us assume $\vG$ has signature $\sigma$.
The functional formula 
$
\eta(x,y) = \bigvee_{f \in \fsig\sigma} f(x) = y \lor f(y) = x
$
is true for some pair of vertices in $\vG$
if and only if there is an edge between them in~$G$.
Instead of evaluating some relational formula on~$G$,
we can replace every edge relation $E(x,y)$ with $\eta(x,y)$
and evaluate the resulting functional formula on~$\vG$.

Similar to Kazana and Segoufin~\cite{KazanaS2013},
we restrict ourselves to finding algorithms
for classes of functional representations with bounded expansion.
As discussed above (and in \cite{KazanaS2013}),
they also work for graph classes with bounded expansion.

Most of the time we will be using functional representations.
To be less verbose (and when it is clear from the context),
we will call functional signatures simply \emph{signatures},
classes of functional representations simply \emph{classes}.

\subsection{Approximating Counting Terms using Positive Sums}\label{sec:positiveSums}

We will often deal with formulas in disjunctive normal form, i.e.,
a disjunction of conjunctions of literals.  We will call the conjuncts
often \emph{conjunctive clauses} and sometimes only \emph{clauses}
when the exact meaning is clear form the context.  An important
technical tool in the upcoming proofs are special forms of conjunctive
clauses that will be defined next.  
They are partially \emph{complete} in the sense that they must contain certain
atomic formulas or their negation.  This completeness will force the
value of other literals and allow us to remove them from the
conjunctive clause, which is one simplification step of many more to
come.

\begin{definition}\label{def:canonical}
Let $\sigma$, $\rho$ be signatures with $\sigma \subseteq \rho$.
A conjunctive clause
$\tau(y)\land\psi(\bx)\land \Delta^=(y\bx)\land\Delta^{\neq}(y\bx)$
with $\bx = x_1,\dots,x_k$ is called a 
\emph{$k$-$\sigma$-$\rho$-canonical conjunctive clause} if
\begin{enumerate}
    \item $\tau(y) \in \funformula{2}{\rho}$ 
    is a conjunctive clause that
    contains for every $f,g,h, \in \fsig\rho$,
    either the literal $f(y) = h(g(y))$ or its negation.
\item $\psi(\bx) \in \funformula{2}{\rho}$ 
    is a conjunctive clause that
    contains for every $i,j \in \{1,\dots,k\}$ and $f,g,h, \in \fsig\rho$
    either the literal $f(x_i) = h(g(x_j))$ or its negation.
\item $\Delta^=(y\bx) \in \funformula{1}{\sigma}$ is a nonempty
    conjunction of positive literals of the
    form $f(y)=g(x_i)$ with $f,g \in \fsig\sigma$ and $i \in \{1,\dots,k\}$,
\item $\Delta^{\neq}(y\bx) \in \funformula{1}{\sigma}$ 
    is a conjunction of negative literals of the
    form~$f(y)\neq g(x_i)$ with $f,g \in \fsig\sigma$ and $i \in \{1,\dots,k\}$.
\end{enumerate}
We denote the set of all such canonical conjunctive clauses
by $\C(k,\sigma,\rho)$.
\end{definition}

We call the literals in $\Delta^=(y\bx)$ and $\Delta^{\neq}(y\bx)$
the \emph{mixed} literals of a canonical conjunctive clause.
The requirement that $\Delta^=(y\bx)$ is nonempty
is a technical assumption that we will need later.
In the following lemma the literal $\fapx(y) = \fapx(x_1)$
is needed to make sure this assumption is fulfilled.
Over the course of this section we will gradually decompose a quantifier-free
formula into more and more simple combinations of canonical conjunctive clauses.

\begin{lemma}\label{lem:canonical-formula}
For two signatures $\sigma$, $\rho$ with $\sigma \subseteq \rho$ and
a given quantifier-free formula $\phi(y\bx) \land \fapx(y) = \fapx(x_1) \in \funformula{1}{\sigma}$
one can compute a set of canonical conjunctive clauses $\Omega \subseteq \C(|\bx|,\sigma,\rho)$
such that 
for every $\vG \in \fungraph{\rho}$ and every tuple
of vertices $v\bu \in V(\vG)^{|y\bx|}$ 
$$
\vG\models\phi(v\bu)\land \fapx(v) = \fapx(u_1) \text{~~iff~~}
\vG\models\omega(v\bu)\text{ for some }\omega\in\Omega.
$$
Furthermore, $\Omega$ is mutually exclusive in the sense that
for every $\vG \in \fungraph{\rho}$ and tuple $v\bu \in V(\vG)^{|y\bx|}$ 
there is at
most one $\omega\in\Omega$ with $\vG\models\omega(v\bu)$.
\end{lemma}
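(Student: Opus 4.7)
The strategy is to obtain $\Omega$ by brute-force enumeration over all truth assignments on a carefully chosen finite set $A$ of atomic formulas. The set $A$ will include every atom appearing in $\phi \land \fapx(y) = \fapx(x_1)$ together with every atom prescribed by the completeness requirements of \Cref{def:canonical}, so that any tuple $v\bu$ in any $\vG \in \fungraph{\rho}$ determines a unique truth assignment on $A$. Mutual exclusivity then follows for free from the construction.

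Concretely, I take $A$ to be the union of
(i) all atomic formulas occurring in $\phi \land \fapx(y) = \fapx(x_1)$,
(ii) all atoms $f(y) = h(g(y))$ with $f,g,h \in \fsig\rho$,
(iii) all atoms $f(x_i) = h(g(x_j))$ with $f,g,h \in \fsig\rho$ and $i,j \in \{1,\dots,k\}$, and
(iv) all atoms $f(y) = g(x_i)$ with $f,g \in \fsig\sigma$ and $i \in \{1,\dots,k\}$.
Because $\phi \in \funformula{1}{\sigma}$, after uniformly rewriting each bare occurrence of a variable $v$ as $\fid(v)$, every atom of $\phi$ is of one of three shapes: a unary-predicate or equality atom in $y$ alone, such an atom in the variables of $\bx$ alone, or a mixed equality $f(y) = g(x_i)$ with $f,g \in \fsig\sigma$. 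Unary predicates apply to a single term and so are never mixed.

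For each truth assignment $\alpha \colon A \to \{\top,\bot\}$ I define $\omega_\alpha$ by taking each $a \in A$ positively or negatively according to $\alpha$. Evaluating $\phi$ propositionally under $\alpha$ (viewing atoms as Boolean variables) tells me whether to keep $\omega_\alpha$: I add it to $\Omega$ precisely when $\alpha$ makes $\phi \land \fapx(y) = \fapx(x_1)$ true. The literals of $\omega_\alpha$ are then partitioned by the variables they mention into $\tau(y)$, $\psi(\bx)$, $\Delta^{=}(y\bx)$, and $\Delta^{\neq}(y\bx)$ following \Cref{def:canonical}. The completeness of $\tau$ and $\psi$ is forced by the inclusion of the depth-$2$ atoms (ii) and (iii); $\Delta^{=}(y\bx)$ is nonempty because $\fapx(y) = \fapx(x_1)$ is forced to hold by $\alpha$ and so is placed in $\Delta^{=}$.

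Correctness and mutual exclusivity are then immediate. Every tuple $v\bu$ in any $\vG \in \fungraph{\rho}$ induces a unique $\alpha_{v\bu} \colon A \to \{\top,\bot\}$ by evaluating atoms in $\vG$, and $\vG \models \phi(v\bu) \land \fapx(v) = \fapx(u_1)$ if and only if $\alpha_{v\bu}$ is one of the kept assignments; when it is, $\omega_{\alpha_{v\bu}}$ is the only clause of $\Omega$ satisfied at $(v,\bu)$. Since $|A|$ is bounded in terms of $|\phi|$, $|\rho|$, and $k$, the set $\Omega$ has bounded size and is effectively enumerable. The only bookkeeping that needs care is the uniform rewriting of atoms of $\phi$ by $\fid$ so that they fit the canonical categorization; I do not anticipate any deeper obstacle.
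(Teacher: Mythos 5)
Your proposal is correct and is essentially the paper's proof rephrased: the paper splits a DNF of $\phi$ on each missing literal until every clause is complete with respect to the required depth-$2$ $\rho$-atoms, then regroups literals into $\tau,\psi,\Delta^=,\Delta^{\neq}$ and uses $\fid$ to normalize bare-variable atoms; your ``enumerate all truth assignments on $A$'' construction produces exactly the same family of clauses (the DNF-splitting is just a lazy way of generating those truth assignments), and your uses of $\fid$-rewriting and of the forced positive literal $\fapx(y)=\fapx(x_1)$ to make $\Delta^=$ nonempty match the paper's steps one-for-one.
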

\begin{proof}
    We can assume $\phi(y\bx)$ to be given in disjunctive normal form.
    Consider a conjunctive clause $\omega(y\bx)$ of this normal form and any literal $l(y\bx)$.
    We can replace $\omega(y\bx)$ with two clauses
    $\omega(y\bx) \land l(y\bx)$ and $\omega(y\bx) \land \neg l(y\bx)$.
    The result is still a disjunctive normal form of $\phi(y\bx)$.
    We can therefore assume that every clause of $\phi(y\bx)$ contains
    \begin{itemize}
        \item for every valid literal $l$ in $\funformula{1}{\sigma}$ 
            with free variables from $y\bx$ either $l$ or $\neg l$,
        \item for every literal $l$ of the form 
            $h(f(y)) = g(y)$ or $h(f(x_i)) = g(x_j)$
            with $f,g,h \in \fsig\rho$ and $i,j \in \{1,\dots,|\bx|\}$
            either $l$ or $\neg l$.
    \end{itemize}
    Let $\Omega$ be the set of conjunctive clauses of $\phi(y\bx)$.
    Any two clauses in $\Omega$ disagree in at least one literal.
    Thus, they cannot be satisfied by the same interpretation.
    This means $\Omega$ is mutually exclusive.
    Furthermore, since $\Omega$ is obtained from a disjunctive normal form,
    for every $\vG \in \fungraph{\rho}$ 
    and every $v\bu \in V(\vG)^{|y\bx|}$
    $$
    \vG\models\phi(v\bu)\text{~~iff~~}
    \vG\models\omega(v\bu)\text{ for some }\omega\in\Omega.
    $$

    However, the formulas in $\Omega$ are not yet canonical conjunctive clauses.
    We fix a clause from $\Omega$ and decompose it into four subclauses
    $\tau(y)\land\psi(\bx)\land \Delta^=(y\bx)\land\Delta^{\neq}(y\bx)$,
    where $\tau(y)$, $\psi(\bx)$ contain all literals depending
    on $y$ and $\bx$,
    and $\Delta^=(y\bx)$, $\Delta^{\neq}(y\bx)$
    contain the remaining positive and negative literals, respectively.

    The clauses $\tau(y)$, $\psi(\bx)$ are of the form mentioned in \Cref{def:canonical},
    while $\Delta^=(y\bx)$, $\Delta^{\neq}(y\bx)$ might not.
    We will modify them to fit \Cref{def:canonical}.
    Besides the allowed literals,
    $\Delta^=(y\bx)$ may also contain literals of the form
    $f(x_i) = y$, $f(y) = x_i$ or $x_i=y$.
    Using the identify function $f_{\text{id}}$,
    we can artificially turn them into equivalent literals
    $f(x_i) = f_{\text{id}}(y)$, $f(y) = f_{\text{id}}(x_i)$, or
    $f_{\text{id}}(x_i) = f_{\text{id}}(y)$ of the allowed form.
    We proceed similarly for $\Delta^{\neq}(y\bx)$.
    We also add the literal $\fapx(y) = \fapx(x_1)$ to $\Delta^=(y\bx)$.
    Therefore, $\Delta^=(y\bx)$ is nonempty.
    Now all clauses are of the form stated in \Cref{def:canonical}.
    We apply this procedure to every clause in $\Omega$.
    Then $\Omega \subseteq \C(|\bx|,\sigma,\rho)$.
    Because we added $\fapx(y) = \fapx(x_1)$
    to every canonical conjunctive clause, we have
    $$
    \vG\models\phi(v\bu) \land \fapx(v) = \fapx(u_1) \text{ iff }
    \vG\models\omega(v\bu)\text{ for some }\omega\in\Omega.
    $$
\end{proof}

In the previous lemma, it would have been okay to discard unsatisfiable
formulas from~$\Omega$.
When we go from a functional structure $\vG$ to its 1-transitive fraternal augmentation, 
new function symbols are inserted, representing transitive and fraternal arcs.
We will later argue that it is okay also to remove those formulas
which are not satisfied by any 1-transitive fraternal augmentation
or extension thereof.
Since all 1-transitive fraternal augmentations have a certain structure,
we can discard more formulas.
The next definition formally captures these concepts. 


\begin{definition}\rm
    Let $\sigma$, $\rho$ be signatures with $\sigma \subseteq \rho$ and $\vG' \in \fungraph{\rho}$.
    We say $\vG'$ is a \emph{$\sigma$-$\rho$-expansion} if there exists 
    $\vG \in \fungraph{\sigma}$
    such that $\vG'$ is an expansion of the $1$-transitive fraternal augmentation of $\vG$.
    We also say $\vG'$ is a $\sigma$-$\rho$-expansion of $\vG$.
    A canonical conjunctive clause 
    $\omega(y\bx) \in \C(|\bx|,\sigma,\rho)$ is \emph{$\sigma$-$\rho$-unsatisfiable}
    if $\vG' \not\models \omega(v\bu)$ holds
    for every $\sigma$-$\rho$-expansion $\vG' \in \fungraph{\rho}$
    and every $v\bu \in V(\vG')^{|y\bx|}$.
\end{definition}

In the next step we further simplify the formulas by reducing the
number of mixed positive literals from an arbitrary number down to one.

\begin{lemma}\label{lem:step1}
    Let $\sigma$, $\rho$ be signatures with $\sigma \subseteq \rho$ and
    $\omega(y\bx) = \tau(y)\land\psi(\bx)\land \Delta^=(y\bx)\land\Delta^{\neq}(y\bx) \in \C(|\bx|,\sigma,\rho)$.
    There exists an algorithm that either computes
    a literal $f(y)=g(x_i)\in\Delta^=(y\bx)$ such that
    $$
    \omega(y\bx)
    \equiv
    \tau(y)\land\psi(\bx)\land f(y)=g(x_i)\land\Delta^{\neq}(y\bx)
    $$
    or concludes that $\omega(y\bx)$ is $\sigma$-$\rho$-unsatisfiable.
\end{lemma}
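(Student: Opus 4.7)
My plan is to iteratively prune one positive mixed literal from $\Delta^=(y\bx)$ at a time until only a single literal remains, exploiting the completeness of $\tau(y)$ and $\psi(\bx)$ together with the structural properties of $\sigma$-$\rho$-expansions. At each step I will pick two literals $l_1 = (f_1(y) = g_1(x_{i_1}))$ and $l_2 = (f_2(y) = g_2(x_{i_2}))$ from the current $\Delta^=(y\bx)$ and decide, based purely on the syntactic content of $\tau(y)$ and $\psi(\bx)$, whether one is a redundant consequence of the other or whether $\omega(y\bx)$ must be $\sigma$-$\rho$-unsatisfiable.

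The key structural observation I will use is the following: in any $\sigma$-$\rho$-expansion $\vG'$ built from some $\vG\in\fungraph{\sigma}$, if $v\bu$ satisfies $\omega$, then $w := f_1(v) = g_1(u_{i_1})$ and $w' := f_2(v) = g_2(u_{i_2})$ are either equal, in which case $f_2(y) = \fid(f_1(y))$ holds via the identity symbol $\fid\in\fsig\rho$, or else the arcs from $w$ and $w'$ into $v$ in the underlying graph trigger the fraternity rule of the $1$-transitive fraternal augmentation and produce some function symbol $h\in\fsig\rho$ with either $h(f_1(v)) = f_2(v)$ or $h(f_2(v)) = f_1(v)$. Since $\tau(y)$ contains, for every triple $f,g,h\in\fsig\rho$, one of the literals $f(y) = h(g(y))$ or its negation, the absence of any positive witness of the above shape in $\tau(y)$ immediately contradicts the satisfaction of $\omega$ in any $\sigma$-$\rho$-expansion, allowing the algorithm to declare unsatisfiability.

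Otherwise, I may assume without loss of generality that $\tau(y)$ positively contains $f_2(y) = h(f_1(y))$ for some $h\in\fsig\rho$. Substituting $f_1(y) = g_1(x_{i_1})$ yields $f_2(y) = h(g_1(x_{i_1}))$ in any structure, so $l_2$ becomes equivalent modulo $\tau\land\psi\land l_1$ to the $\psi$-literal $g_2(x_{i_2}) = h(g_1(x_{i_1}))$. By the canonical completeness of $\psi(\bx)$, either this literal or its negation appears in $\psi$: in the positive case I prune $l_2$, the implication $\tau\land\psi\land l_1 \Rightarrow l_2$ being a propositional substitution that is valid in every structure; in the negative case the algorithm outputs unsatisfiability. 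The dual situation, where only $f_1(y) = h(f_2(y))$ is positively recorded in $\tau(y)$, is handled symmetrically by pruning $l_1$ instead and consulting the $\psi$-literal $g_1(x_{i_1}) = h(g_2(x_{i_2}))$. Iterating this pairwise reduction maintains the invariant that the current clause is logically equivalent to the original $\omega$ (so pivots can always be chosen from the still-surviving literals), and it terminates with exactly one literal remaining since $\Delta^=(y\bx)$ is nonempty by hypothesis.

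The main obstacle is to argue that the function symbol $h$ supplied by the fraternity step actually lies in $\fsig\rho$; this is precisely why the canonical form requires $\tau(y)$ and $\psi(\bx)$ to be complete over $\rho$ rather than only over $\sigma$, and why a single deterministic orientation of fraternity arcs is fixed in advance. A secondary subtlety is that one cannot fix a single pivot literal at the outset: depending on which direction of the fraternity relation is positively recorded in $\tau(y)$, the algorithm must be prepared to prune either side of the pair, which is why the procedure is organized as an iterative pairwise reduction rather than a one-shot pivot-and-discard.
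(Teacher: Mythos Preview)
Your proposal is correct and follows essentially the same argument as the paper's proof: iteratively pick two literals from $\Delta^=(y\bx)$, use the fraternal rule to obtain some $h\in\fsig\rho$ linking $f_1(y)$ and $f_2(y)$, then use the completeness of $\tau(y)$ and $\psi(\bx)$ to either derive one literal from the other (and prune it) or exhibit a contradiction yielding $\sigma$-$\rho$-unsatisfiability. Your write-up is in fact slightly more careful than the paper's in two respects: you explicitly cover the degenerate case $f_1(v)=f_2(v)$ via $\fid$, and you point out that the pruning direction may alternate, so no fixed pivot can be chosen in advance.
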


\begin{proof}
By definition, $\Delta^{=}(y\bx)$ is nonempty.
If it contains only one literal, we do not need to do anything.
Let us assume there are two literals $f(y)=g(x_i)$ and $f'(y)=g'(x_j)$ in
$\Delta^=(y\bx)$.

Let $\vG'$ be a $\sigma$-$\rho$-expansion and $v \in V(\vG)$.
Since $f,f' \in \fsig\sigma$ and by the fraternal rule,
there exists a function $h \in \fsig\rho$ such that either 
$h^{\vG}(f^{\vG}(v)) = f'^{\vG}(v)$ or $f^{\vG}(v) = h^{\vG}(f'^{\vG}(v))$.
Thus, if $\omega(y\bx)$ is $\sigma$-$\rho$-satisfiable
then $\tau(y)$ either contains
$h(f(y)) = f'(y)$ or $f(y) = h(f'(y))$
for some function $h \in \fsig\rho$.
Let us assume it is $h(f(y))=f'(y)$ because the other case is similar.  
Then there must be $h(g(x_i))=g'(x_j)$ present in $\tau(y)$
(or the formula is unsatisfiable and in particular
$\sigma$-$\rho$-unsatisfiable).  From $f'(y)=h(f(y))$,
$h(g(x_i))=g'(x_j)$, and $f(y)=g(x_i)$ follows $f'(y)=g'(x_j)$.
Therefore we can remove $f'(y)=g'(x_j)$ from $\Delta^=(y\bx)$.
We repeat this procedure as long as $\Delta^=(y\bx)$ contains at
least two literals.
\end{proof}

The next simplification gets rid of some of the negative mixed
literals, however, not all of them.  The remaining ones have a special
relation to the rest of the conjunctive clause that will cause them
later on to have only a small influence on the counting term.

\begin{lemma}\label{lem:step2}
    Let $\sigma,\rho$ be signatures with $\sigma \subseteq \rho$ and
    $\omega(y\bx) = \tau(y)\land\psi(\bx)\land f(y)=g(x_i) \land\Delta^{\neq}(y\bx) \in \C(|\bx|,\sigma,\rho)$.
    We define $\Delta^{\neq}_\tau(y\bx)$ to be the set of all literals of the form $f'(y)\neq g'(x_j)$
    in $\Delta^{\neq}(y\bx)$
    such that $\tau(y)$ contains
    $h(f(y)) \neq f'(y)$ for all $h \in \fsig\rho$.
    There exists an algorithm that concludes either that
    $$
    \omega(y\bx) \equiv
    \tau(y)\land\psi(\bx)\land f(y)=g(x_i)\land\Delta_\tau^{\neq}(y\bx)
    $$
    or that $\omega(y\bx)$ is $\sigma$-$\rho$-unsatisfiable.
\end{lemma}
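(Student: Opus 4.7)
The plan is to process each literal $f'(y) \neq g'(x_j)$ in $\Delta^{\neq}(y\bx) \setminus \Delta^{\neq}_\tau(y\bx)$ one at a time and show that it must either be redundant given the rest of $\omega$ or force the whole clause to be inconsistent. In the first case I drop the literal, and in the second I declare $\omega$ to be $\sigma$-$\rho$-unsatisfiable and stop. Iterating this will either remove exactly the literals outside $\Delta^{\neq}_\tau(y\bx)$ (giving the desired equivalence) or produce the unsatisfiability conclusion.

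For a fixed literal $f'(y) \neq g'(x_j)$ outside $\Delta^{\neq}_\tau(y\bx)$, I would first use the defining condition of $\Delta^{\neq}_\tau$ to pick some $h \in \fsig\rho$ such that $\tau(y)$ does \emph{not} contain $h(f(y)) \neq f'(y)$. The canonical completeness of $\tau(y)$ (item 1 of \Cref{def:canonical}) then forces the positive literal $f'(y) = h(f(y))$ to appear in $\tau(y)$. Combining this with the single positive mixed literal $f(y) = g(x_i)$ guaranteed by \Cref{lem:step1} yields the identity $f'(y) = h(g(x_i))$ inside $\omega$.

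Next, I would turn to $\psi(\bx)$, whose canonical completeness (item 2 of \Cref{def:canonical}) supplies, for the triple $(g',g,h)$ and index pair $(j,i)$, either $g'(x_j) = h(g(x_i))$ or its negation. If the positive version is present, then $\omega$ entails $f'(y) = h(g(x_i)) = g'(x_j)$, directly contradicting $f'(y) \neq g'(x_j) \in \omega$; this means $\omega$ has no model whatsoever, so it is in particular $\sigma$-$\rho$-unsatisfiable and the algorithm halts with this verdict. If the negative version is present, then $\tau(y) \land \psi(\bx) \land f(y)=g(x_i)$ already implies $f'(y) \neq g'(x_j)$, so the literal can be stripped from $\Delta^{\neq}(y\bx)$ without changing the meaning of $\omega$.

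I do not anticipate a real obstacle: the argument is essentially two applications of canonical completeness chained through the equality $f(y) = g(x_i)$. The only care needed is bookkeeping, namely checking that the triples $(f',f,h)$ and $(g',g,h)$ with the appropriate variable slots are exactly those prescribed by \Cref{def:canonical} so that the relevant literals are guaranteed to appear (up to symmetry of equality) in $\tau(y)$ and $\psi(\bx)$. The algorithm clearly runs in time polynomial in $|\omega|$ and $|\fsig\rho|$, which is all that is required downstream.
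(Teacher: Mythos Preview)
Your proposal is correct and follows essentially the same argument as the paper's proof: pick $h$ witnessing that the literal is outside $\Delta^{\neq}_\tau$, use completeness of $\tau(y)$ to get $h(f(y))=f'(y)$, then case-split on whether $\psi(\bx)$ contains $h(g(x_i))=g'(x_j)$ or its negation, concluding unsatisfiability in one case and redundancy in the other. The only cosmetic difference is that you cite \Cref{lem:step1} for the single positive mixed literal, whereas that is already part of the hypothesis of \Cref{lem:step2}.
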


\begin{proof}
Let $f'(y)\neq g'(x_j)$ be a literal
that is contained in $\Delta^{\neq}(y\bx)$,
but not in $\Delta_\tau^{\neq}(y\bx)$.
We argue that we can either safely remove it or the formula is unsatisfiable.
By our assumption 
there exists $h \in \fsig\rho$ such that $h(f(y))\neq f'(y)\not\in\tau(y)$.
Since $\tau(y)$ is complete, $h(f(y))=f'(y)\in\tau(y)$.
Since $\psi(\bx)$ is also complete,
it either contains the literal $h(g(x_i)) = g'(x_j)$
or its negation $h(g(x_i))\neq g'(x_j)$.

Assume $h(g(x_i))\neq g'(x_j)\in\psi(\bx)$.
The literal $f(y)=g(x_i)$ with $h(f(y))=f'(y)$ and $h(g(x_i))\neq
g'(x_j)$ implies $f'(y)\neq g'(x_j)$, so we can safely remove it from
$\Delta^{\neq}(y\bx)$.

Assume $h(g(x_i))=g'(x_j)\in\psi(\bx)$.
The literal $f(y)=g(x_i)$ with $h(f(y))=f'(y)$ implies
$h(g(x_i)) = f'(y)$.
On the other hand, $f'(y) \neq g'(x_j)$ with $h(g(x_i))=g'(x_j)$ 
implies $h(g(x_i)) \neq f'(y)$.
Henceforth, $\omega(y\bx)$ is unsatisfiable.
\end{proof}

The following lemma aggregates the results from 
Lemmas~\ref{lem:canonical-formula}, \ref{lem:step1}, and~\ref{lem:step2}.

\begin{lemma}\label{lem:step12}
For two signatures $\sigma$, $\rho$ with $\sigma \subseteq \rho$ and
a given quantifier-free formula $\phi(y\bx) \land \fapx(y) = \fapx(x_1) \in \funformula{1}{\sigma}$
one can compute a set of canonical conjunctive clauses $\Omega\subseteq \C(|\bx|,\sigma,\rho)$
with the following properties:
\begin{enumerate}
    \item
    Every formula $\omega \in \Omega$ has the form
    $\tau(y)\land\psi(\bx)\land f(y)=g(x_i)\land\Delta_\tau^{\neq}(y\bx)$.
    We require
    for every literal of the form $f'(y)\neq g'(x_j)$ in $\Delta^{\neq}_\tau(y\bx)$
    and every $h \in \fsig\rho$ that $h(f(y)) \neq f'(y) \in \tau(y)$.

    \item
    For every $\sigma$-$\rho$-expansion $\vG$ 
    and every tuple $v\bu \in V(\vG)^{|y\bx|}$ holds
    $$
    \vG\models\phi(v\bu) \land \fapx(v) = \fapx(u_1) \text{~~iff~~}
    \vG\models\omega(v\bu)\text{ for some }\omega\in\Omega.
    $$

    \item
    $\Omega$ is mutually exclusive in the sense that
    for every $\sigma$-$\rho$-expansion $\vG$ 
    and every tuple $v\bu \in V(\vG)^{|y\bx|}$ 
    there is at most one $\omega\in\Omega$ with $\vG\models\omega(v\bu)$.
\end{enumerate}
\end{lemma}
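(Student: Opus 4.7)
The plan is to obtain $\Omega$ by sequentially pipelining the previous three lemmas. First apply \Cref{lem:canonical-formula} to $\phi(y\bx)\land \fapx(y)=\fapx(x_1)$ to get an intermediate set $\Omega_0\subseteq \C(|\bx|,\sigma,\rho)$ which already satisfies mutual exclusivity on all of $\fungraph{\rho}$ and whose disjunction is equivalent to the input formula on all of $\fungraph{\rho}$. Then, for each $\omega\in\Omega_0$, first run the algorithm of \Cref{lem:step1} to reduce its mixed positive literals to a single literal of the form $f(y)=g(x_i)$, and then run the algorithm of \Cref{lem:step2} to cut the mixed negative literals down to $\Delta^{\neq}_{\tau}(y\bx)$. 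If either algorithm reports $\sigma$-$\rho$-unsatisfiability, discard the current clause; otherwise retain the simplified clause. Take $\Omega$ to be the collection of all surviving canonical conjunctive clauses.

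Property~1 then holds immediately by construction of the two elimination steps. For property~2, fix a $\sigma$-$\rho$-expansion $\vG$ and a tuple $v\bu\in V(\vG)^{|y\bx|}$. By \Cref{lem:canonical-formula}, $\vG\models \phi(v\bu)\land \fapx(v)=\fapx(u_1)$ iff some $\omega\in\Omega_0$ satisfies $\vG\models\omega(v\bu)$. A discarded $\omega$ is $\sigma$-$\rho$-unsatisfiable, so in that case $\vG\not\models\omega(v\bu)$; conversely, for each retained $\omega$ the simplifications performed by \Cref{lem:step1,lem:step2} are logical equivalences (the reductions are purely syntactic consequences of literals that are forced to be present in $\tau(y)$ and $\psi(\bx)$ whenever $\omega$ is $\sigma$-$\rho$-satisfiable). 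Hence on $\vG$ the retained $\omega$ and its simplified form $\omega''\in\Omega$ have the same truth value, and property~2 follows.

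For property~3, mutual exclusivity of $\Omega_0$ over all of $\fungraph{\rho}$ (hence in particular over $\sigma$-$\rho$-expansions) combines with the logical equivalences above: if two distinct $\omega''_1,\omega''_2\in\Omega$ were simultaneously satisfied by the same $(\vG,v\bu)$, their ancestors in $\Omega_0$ would be simultaneously satisfied, a contradiction. The whole procedure is a finite syntactic manipulation of a finite formula and clearly terminates.

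The only real subtlety, and the one worth double-checking, is that we are allowed to drop $\sigma$-$\rho$-unsatisfiable clauses without affecting property~2: this is fine precisely because property~2 only asserts the equivalence on $\sigma$-$\rho$-expansions, so clauses that can never be satisfied on such structures carry no information. No new ideas beyond the three previous lemmas are required; the lemma is essentially a bookkeeping statement packaging them into a single, convenient form that will be used in the subsequent quantifier-elimination step.
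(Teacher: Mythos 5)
Your proof is correct and takes essentially the same approach as the paper: chain Lemma~\ref{lem:canonical-formula} (which already gives properties 2 and 3), then simplify each clause via Lemmas~\ref{lem:step1} and~\ref{lem:step2}, discarding any clause declared $\sigma$-$\rho$-unsatisfiable. Your extra care in noting that the simplifications are logical equivalences and that dropping $\sigma$-$\rho$-unsatisfiable clauses is harmless precisely because properties 2 and 3 are only asserted on $\sigma$-$\rho$-expansions is a faithful and slightly more explicit rendering of the paper's brief argument.
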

\begin{proof}
Let $\Omega \subseteq \C(|\bx|,\sigma,\rho)$ be the set computed by \Cref{lem:canonical-formula}.
This set already satisfies properties 2 and~3.
We will modify it such that it also satisfies the first property.
Let $\omega(y\bx) \in \Omega$.
We first apply the algorithm from \Cref{lem:step1} and then 
(assuming it was not concluded that $\omega(y\bx)$ is
$\sigma$-$\rho$-unsatis\-fi\-able)
the one from \ref{lem:step2}.
This either yields a formula
$$
\omega(y\bx) \equiv \tau(y)\land\psi(\bx)\land f(y)=g(x_i)\land\Delta_\tau^{\neq}(y\bx)
$$
that meets the requirements of the first property or concludes that
$\omega(y\bx)$ is $\sigma$-$\rho$-unsatis\-fi\-able.
If the formula is $\sigma$-$\rho$-unsatisfiable, we can remove
if from $\Omega$ and properties 2 and~3 remain true.
We apply this procedure for every formula in $\Omega$.
Then property~1 is also satisfied.
\end{proof}

The following lemma shows that we can evaluate a simple counting
term of the form
$\#y\, \tau(y) \land f(y) = u \land f'(y) = u'$
for all values of $u$ and $u'$ in linear time.
Since there are a quadratic number of tuples $u$, $u'$, we only write
down those tuples where the counting term is non-zero.
Later we will need these numbers for an algorithm that can identify
places where we need to add arcs to the functional structure in order
to be able to eliminate certain negative mixed literals that have a
non-negligible contribution to the value of a counting term.
We will need the same lemma also in the proof that
\emph{exact} counting is possible for formulas of a certain shape.

\begin{lemma}\label{lem:actualcounting}
Let ${\cal C}\subseteq \fungraph{\sigma}$ be a class with bounded expansion,
$\tau(y) \in \funformula{2}{\sigma}$ and $f,f' \in \fsig\sigma$.
For an input $\vG \in \cal C$ the list of triples
$$
\{\,(u,u',c) \mid u,u' \in V(\vG),
   c = [[\#y\, \tau(y) \land f(y) = u \land f'(y) = u']]^{\vG}, c>0\,\}
$$
can be computed in time $O(\norm{\vG})$.
\end{lemma}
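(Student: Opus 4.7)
The plan is to evaluate $\tau(y)$ once for every $y \in V(\vG)$, to read off from every witness the pair $(f^{\vG}(y), f'^{\vG}(y))$, and finally to aggregate these pairs into the desired counts. In detail, I would (i)~compute the set $T = \{\, y \in V(\vG) \mid \vG \models \tau(y)\,\}$; (ii)~iterate through $T$ and build a list $L$ of pairs $(f^{\vG}(y), f'^{\vG}(y))$ with multiplicities; (iii)~sort $L$ lexicographically and scan it once, emitting for every maximal run of identical pairs $(u,u')$ the triple $(u,u',c)$ where $c$ is the length of the run.

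For step~(i), in every way in which this lemma is later applied, $\tau(y)$ is a conjunctive clause coming from a canonical conjunctive clause, so it is quantifier-free with functional depth at most $2$ and a single free variable. Testing whether $\vG \models \tau(v)$ for a single vertex $v$ therefore reduces to constantly many function lookups and equality checks (constantly meaning: depending only on $|\tau|$), and a sweep over all vertices produces $T$ in time $O(|\tau|\cdot|\vG|) = O(\norm{\vG})$. If one wishes to admit quantifiers in $\tau$, one can instead invoke the linear-fpt first-order model-checking algorithm for bounded expansion classes of Dvořák, Král', and Thomas as a black box and still obtain $T$ within the same bound. Step~(ii) costs two function lookups per element of $T$ and hence runs in $O(|V(\vG)|)$ time, producing a list of at most $|V(\vG)|$ pairs.

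The only step that needs any thought is the aggregation in~(iii): a naive bucketing indexed by pairs of vertices could inflate to $\Theta(|\vG|^2)$ space. This is avoided by radix sort. Each vertex is identified with an integer in $\{1,\ldots,|V(\vG)|\}$, so the list $L$ of at most $|V(\vG)|$ pairs of such integers can be sorted lexicographically in $O(|V(\vG)|)$ time by two passes of counting sort. A single linear scan over the sorted list then collapses equal pairs into triples $(u,u',c)$, and triples with $c=0$ are by construction never emitted. Summing the contributions from the three steps yields the total running time $O(\norm{\vG})$ claimed in the statement; the output list has length at most $|V(\vG)|$ and so also fits within this bound.
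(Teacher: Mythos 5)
Your proof is correct and follows essentially the same route as the paper's: enumerate the vertices satisfying $\tau$, map each to the pair $(f(v),f'(v))$, and aggregate by a linear-time radix/counting sort followed by a single scan, exactly as in the paper (which likewise cites the linear-time enumeration of $\{v : \vG\models\tau(v)\}$ and a radix-/bucket-sort to avoid the quadratic counter array). The only cosmetic difference is that the paper defers the enumeration step to Kazana--Segoufin rather than spelling out the direct constant-time evaluation of a quantifier-free $\tau$ per vertex, which you also mention as a fallback.
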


\begin{proof}
We define a ``counter'' $c(u,u')$ for $u,u'\in V(\vG)$ as
$$
c(u,u') = {\bigm|}\{\,v \in V(\vG)
    \mid \vG \models \tau(v)\land f(v)=u\land f'(v)=u'\,\}{\bigm|}.
$$
The following algorithm
obviously computes $c(u,u')$.
\medskip
\begin{algorithmic}
\For{$v\in V(\vG)$ with $\vG\models\tau(v)$}
\State $c(f(v),f'(v)) \gets c(f(v),f'(v))+1$
\EndFor
\end{algorithmic}

\medbreak
Computing $c(u,u')$ for all $u,u'\in V(\vG)$ would be easy in quadratic
time, but is also possible in linear time.  
Kazana and Segoufin showed that we
can enumerate all vertices $v$ with $G\models\tau(v)$ in time linear
in $\norm{\vG}$~\cite{KazanaS2013}.
The only issue with this short piece of code is how to store the
counters $c(u,u')$.  There is a quadratic number of them, although
most of them are left to be zero and we are only interested in those
with a positive count.  One possibility
is delaying the increment of the counters to the end.
Instead of carrying out the commands $c(f(v),f'(v)) \gets
c(f(v),f'(v))+1$ immediately we store them in an array of linear
length.  At the end we can sort this array in linear time (e.g., by a
combination of radix- and bucket-sort) and then combine blocks of
identical commands while counting their sizes.  What remains is a list of
the positive counters together with their respective values.
\end{proof}

The next step is to actually compute an expansion that is prepared in
such a way that every mixed negative literal from every possible
relevant conjunctive clause that influences the underlying counting
term significantly can be removed.  While Lemma~\ref{lem:aug} does the
preparation by adding flip arcs,
Lemma~\ref{lem:aug2} shows that the resulting expansion has the
desired property.

\begin{lemma}\label{lem:aug}
Let $\cal C \subseteq \fungraph{\sigma}$ be a class with bounded
expansion and $\varepsilon>0$.  There exists a signature
$\rho\supseteq\sigma$ and a class $\cal C' \subseteq \fungraph{\rho}$
with bounded expansion such that for every $\vG \in \cal C$ one can
compute a $\sigma$-$\rho$-expansion $\vG' \in \cal C'$ of $\vG$ with
the following property in time $O(\norm{\vG})$:

If there exist $u,u' \in V(\vG')$, a
quantifier-free formula~$\tau(y) \in \funformula{2}{\sigma}$,
and $f$, $f'\in \fsig\sigma$ such that
$[[\#y\, \tau(y)\land f(y)=u\land f'(y)=u']]^{\vG'} >
\epsilon[[\#y\,\tau(y)\land f(y)=u]]^{\vG'}$
then there exists $h \in \fsig\rho$ with $h^{\vG'}(u) = u'$ 
(i.e., $\vG'$ contains an arc~$u'u$).
\end{lemma}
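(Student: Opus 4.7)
The plan is to construct $\vG'$ in two phases: first form the 1-transitive fraternal augmentation $\vGh$ of $\vG$, then add a constant number of flip arcs determined by a preliminary counting pass. The guiding observation is that the formula $\tau$, the functions $f,f'$, and the threshold $\epsilon$ together give rise to only constantly many constraints, and that the fraternal rule is already doing most of the work — in most cases the required arc (or its reverse) is already present in $\vGh$.

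I would first enumerate a finite set $T$ of representative quantifier-free conjunctive clauses $\tau(y) \in \funformula{2}{\sigma}$. Since $\sigma$ is fixed, the number of terms of functional depth at most two in a single variable is at most $1 + |\fsig\sigma| + |\fsig\sigma|^2$, so the set of atomic formulas in $y$ is constant in size and hence there are only constantly many conjunctive clauses over them. It suffices to treat $\tau$ as a conjunctive clause: if $\tau \equiv \bigvee_i \omega_i$ is a disjunction of mutually exclusive clauses, then by averaging, $[[\#y\,\tau(y)\land f(y){=}u\land f'(y){=}u']] > \epsilon\,[[\#y\,\tau(y)\land f(y){=}u]]$ implies the same inequality for some $\omega_i$, so securing the arc for every clause suffices for every $\tau$.

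For each triple $(\tau,f,f') \in T \times \fsig\sigma \times \fsig\sigma$, I would invoke \Cref{lem:actualcounting} on $\vGh$ to obtain all triples $(u,u',c(u,u'))$ with $c(u,u')>0$ in time $O(\norm{\vGh})$, and compute the marginal $c(u) = \sum_{u'} c(u,u')$ in the same time. Because $\tau$, $f$, $f'$ all lie in $\sigma$, these counts are invariant under any $\sigma$-$\rho$-expansion, so working on $\vGh$ is equivalent to working on the final $\vG'$. Call a pair $(u,u')$ \emph{bad} for the triple if $c(u,u') > \epsilon\,c(u)$. For fixed $(\tau,f,f')$ and fixed $u$, at most $\lfloor 1/\epsilon\rfloor$ values of $u'$ can be bad, since the $c(u,u')$ partition $c(u)$. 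Summing over the constantly many triples yields a constant $K=K(\epsilon,\sigma)$ bounding, for every vertex $u$, the total number of distinct $u'$ that are bad for some triple. Moreover, a bad pair requires a witness $y$ with $f(y)=u$ and $f'(y)=u'$, so both $u$ and $u'$ point to $y$ in $\vG$; the fraternal rule therefore forces one of $uu'$ or $u'u$ into $\vGh$, and in either case $u'u$ is a legitimate flip arc because the flip definition permits adding $u'u$ whenever $uu'$ is already present.

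To complete the construction, I would schedule all required flip arcs into at most $K$ successive flip augmentations, each introducing a single new function symbol $h$: at each step, for every $u$ that still has an unscheduled bad partner, assign $h(u)$ to one such partner, and set $h(v) = v$ for all other vertices. Each step increases the maximum indegree by at most one, so \Cref{cor:funcBE}, applied once for the 1-transitive fraternal augmentation and then $K$ times for the flips, keeps the resulting $\vG'$ in a bounded-expansion class $\cal C'$ determined only by $\cal C$ and $\epsilon$. The overall running time is $O(\norm{\vG})$, since computing $\vGh$, each flip, and each application of \Cref{lem:actualcounting} is linear and there are constantly many of each. The main obstacle is the simultaneous bound $K$: without the $\epsilon$-threshold the per-vertex count of required flip arcs would be unbounded, and without the bounded expansion of $\cal C$ even $\vGh$ could have unbounded indegree, so both hypotheses are essential to keep $\vG'$ in a bounded-expansion class.
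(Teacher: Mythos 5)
Your proposal is correct and follows essentially the same two-phase construction as the paper: form the $1$-transitive fraternal augmentation, use \Cref{lem:actualcounting} to locate the pairs $(u,u')$ that need an arc, add them via a constant number of flips, and invoke \Cref{cor:funcBE} to stay inside a bounded-expansion class. The one genuine difference is the argument bounding the number of flip arcs per vertex. The paper fixes $u$, lets $A = \{\,v : \vG\models\tau(v)\land f(v)=u\,\}$, observes that each bad $u'$ contributes more than $\epsilon|A|$ arcs into $A$, and then uses a pigeonhole argument with the indegree bound $d$ of $\cal C$ to conclude $|U'| \le d/\epsilon$. You instead observe that, because $f'$ is a function, the counts $c(u,u')=[[\#y\,\tau(y)\land f(y)=u\land f'(y)=u']]$ over all $u'$ partition $c(u)=[[\#y\,\tau(y)\land f(y)=u]]$, so fewer than $1/\epsilon$ of them can exceed $\epsilon\,c(u)$. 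Your argument is more elementary (it does not invoke the indegree bound at this step) and gives the tighter constant $1/\epsilon$ in place of $d/\epsilon$; both yield the needed constant bound. One superfluous step: the reduction of a general quantifier-free $\tau\in\funformula{2}{\sigma}$ to conjunctive clauses via a DNF and an averaging argument is unnecessary, since both \Cref{lem:actualcounting} and your partition observation apply directly to arbitrary quantifier-free $\tau$, and there are only constantly many non-equivalent such formulas to enumerate.
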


\begin{proof}
We have to show that we can identify all pairs $u',u$ that need a new arc
in linear time and that the resulting functional representation
belongs to a class with bounded expansion.
We start with the second task.  

We will express $\vG'$ by first doing a $1$-transitive fraternal
augmentation on $\vG$ and then adding all
necessary remaining arcs using a bounded number of flip augmentations.
By \Cref{cor:funcBE}, $\vG'$ then belongs to a class with bounded expansion.

Let $u, u' \in V(\vG)$ such that there needs to be an arc $u'u$ in $\vG'$.
This means in particular that $[[\#y\, \tau(y)\land f(y)=u\land f'(y)=u']]^{\vG'} > 0$,
i.e., there is a vertex $v \in V(\vG)$ such that the
arcs $u'v$ and $uv$ are present in $\vG$.
By the fraternal rule, either the arc $u'u$ or $uu'$ exists
in the $1$-transitive fraternal augmentation of $\vG$.

If the arc $u'u$ is already present in the $1$-transitive fraternal
augmentation, we do not need to do anything.
But if only the arc $uu'$ is there, we need to flip it.
Since ${\cal C}$ is a class of
bounded expansion, there must be a constant bound on the indegree of
all graphs in~$\cal C$.  Let us say this bound is~$d$.
We only stay within a class with bounded expansion if
the indegree remains bounded after flipping the arcs.
We fix a vertex $u \in V(\vG)$ and show that only a constant number of edges
need to be oriented towards $u$.
The set $U' = \{\,u' \in V(\vG) \mid [[\#y\, \tau(y)\land f(y)=u\land
f'(y)=u']]^\vG >\epsilon[[\#y\,\tau(y)\land f(y)=u]]^\vG\,\}$
contains all vertices $u'$ for which an arc $u'u$ needs to be present in $\vG'$.
Let $A=\{\,v \in V(\vG) \mid \vG \models \tau(v)\land f(v)=u\,\}$.  
For every $u' \in U'$ there are at least $\varepsilon |A|$ arcs
going from $u'$ to $A$ in $\vG$.
Therefore, there are at least
$\epsilon|A|\cdot|U'|$ many arcs going into the set~$A$ and there must
be one vertex in $A$ that receives at least $\epsilon|U'|$ of them.
As this number must be smaller than $d$, we can conclude that $|U'|\leq
d/\epsilon$.
Hence, $\vG'$ can be obtained using a $1$-transitive fraternal
augmentation and at most $d/\epsilon$ many flip operations.

It remains to be shown that this construction can be carried out in
linear time.  There is only a constant number of combinations of
$f,f',\tau$ because $|\sigma|$ and the number of non-equivalent
quantifier-free formulas in $\funformula2\sigma$ are constant.
For each such combination we can compute the lists
\begin{align*}
l_1 & =\{\,(u,u',c) \mid u,u' \in V(\vG),
   c = [[\#y\, \tau(y) \land f(y) = u \land f'(y) = u']]^{\vG}, c>0\,\}, \\
l_2 & =\{\,(u,c) \mid u \in V(\vG),
c = [[\#y\, \tau(y) \land f(y) = u]]^{\vG}, c>0\,\}
\end{align*}
in linear time by Lemma~\ref{lem:actualcounting}.
Using $l_1$ and $l_2$ it is easy to determine all $u',u$ with
$[[\#y\, \tau(y)\land f(y)=u\land f'(y)=u']]^\vG
>\epsilon[[\#y\,\tau(y)\land f(y)=u]]^\vG$.
\end{proof}

\begin{lemma}\label{lem:aug2}
Let $\cal C \subseteq \fungraph{\sigma}$ be a class with bounded
expansion and $\varepsilon>0$.  There exists a signature
$\rho\supseteq\sigma$ and a class $\cal C' \subseteq \fungraph{\rho}$
with bounded expansion such that for every $\vG \in \cal C$ one can
compute a $\sigma$-$\rho$-expansion $\vG' \in \cal C'$ of $\vG$ with
the following property in time $O(\norm{\vG})$:

For every formula $\tau(y)\land\psi(\bx)\land f(y)=g(x_i)\land f'(y)=g'(x_j) \in \C(|\bx|,\sigma,\rho)$
with $h(f(y)) \neq f'(y) \in \tau(y)$ for all $h \in \fsig\rho$
and for every $\bu \in V(\vG)^{|\bx|}$
holds
$$
[[\#y\, \tau(y)\land \psi(\bu) \land f(y)=g(u_i)\land f'(y)= g'(u_j)]]^{\vG'} \le
\epsilon[[\#y\,\tau(y)\land \psi(\bu) \land f(y)=g(u_i)]]^{\vG'}.
$$
\end{lemma}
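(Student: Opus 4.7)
The plan is to build $\rho$ and $\cal C'$ by applying \Cref{lem:aug} to a suitably pre-augmented version of $\cal C$, so that the flip arcs certifying its conclusion are installed in $\vG'$ for every $\tau(y)\in\funformula{2}{\rho}$ (not only for $\tau(y)\in\funformula{2}{\sigma}$), and then to derive the inequality by a short contradiction argument. Concretely, I would form the class $\widetilde{\cal C}$ of 1-transitive fraternal augmentations of structures in $\cal C$ (signature $\widetilde\sigma$), which has bounded expansion by \Cref{cor:funcBE}, and apply \Cref{lem:aug} to $\widetilde{\cal C}$ to obtain $\rho\supseteq\sigma$ and $\cal C'\subseteq\fungraph\rho$. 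For $\vG\in\cal C$, the composition computes $\vG'\in\cal C'$ in time $O(\norm\vG)$, and since $\vG'$ expands the 1-tfa of $\vG$, it is a $\sigma$-$\rho$-expansion of $\vG$ as required.

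The hard part is to ensure that the arc-existence property of \Cref{lem:aug} is available for $\tau(y)\in\funformula{2}{\rho}$, not merely for $\tau(y)\in\funformula{2}{\sigma}$, because canonical clauses in $\C(|\bx|,\sigma,\rho)$ have $\tau\in\funformula{2}{\rho}$ by definition. The saving observation is that in a canonical clause the mixed function symbols $f,f'$ always lie in $\fsig\sigma$, and inspection of \Cref{lem:aug}'s proof shows that every graph-theoretic step uses only $\fsig\sigma$-arcs: the fraternal rule that produces the arc between $u$ and $u'$ exploits a common successor $v$ via $f,f'\in\fsig\sigma$ living in the original $\vG$, and the $d/\epsilon$ bound on the number of flips per vertex is obtained by counting incoming $f'$-arcs, which again live in $\vG$. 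The role of $\tau$ in the argument is purely to restrict the set of counted vertices. One may therefore pre-install flip arcs for all $(\tau,f,f')$ with $\tau\in\funformula{2}{\rho}$ and $f,f'\in\fsig\sigma$ at the cost of a constant-factor blowup in the number of flip symbols, and bounded expansion is preserved by \Cref{cor:funcBE}.

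With this enhanced property in hand, the inequality is immediate. Fix a clause $\tau(y)\land\psi(\bx)\land f(y)=g(x_i)\land f'(y)=g'(x_j)\in\C(|\bx|,\sigma,\rho)$ with $h(f(y))\neq f'(y)\in\tau(y)$ for all $h\in\fsig\rho$, and fix $\bu\in V(\vG)^{|\bx|}$. Since $y$ does not occur in $\psi(\bx)$, the truth value of $\psi(\bu)$ in $\vG'$ factors out of both counting terms, and if it is false both sides vanish. Otherwise, let $u=g^{\vG'}(u_i)$ and $u'=g'^{\vG'}(u_j)$ and assume for contradiction that $[[\#y\,\tau(y)\land f(y)=u\land f'(y)=u']]^{\vG'}>\epsilon[[\#y\,\tau(y)\land f(y)=u]]^{\vG'}$. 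The enhanced property produces $h\in\fsig\rho$ with $h^{\vG'}(u)=u'$, and by hypothesis $\tau(y)$ contains the literal $h(f(y))\neq f'(y)$ for this particular $h$. Any witness $v$ of the (now positive) left-hand count satisfies $f^{\vG'}(v)=u$ and $f'^{\vG'}(v)=u'$, so $h^{\vG'}(f^{\vG'}(v))=h^{\vG'}(u)=u'=f'^{\vG'}(v)$, in direct contradiction with the literal $h(f(v))\neq f'(v)$ in $\tau(v)$. This contradiction yields the desired bound.
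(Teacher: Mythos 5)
Your argument reproduces the paper's almost line for line: you build $\vG'$ via \Cref{lem:aug}, observe that a witness $v$ of the counting term together with the clause literals $h(f(y)) \neq f'(y)$ (for every $h \in \fsig\rho$) forces $h^{\vG'}(u) \neq u'$ for all $h \in \fsig\rho$, and apply the contrapositive of \Cref{lem:aug} to conclude the bound. The only cosmetic difference is that you phrase the last step as a contradiction where the paper writes ``since otherwise $h^{\vG'}(u)=u'$ for some $h\in\fsig\rho$''; your remark that $\psi(\bu)$ factors out of both counting terms is also exactly how the paper finishes (``the result follows since $\vG' \models \psi(\bu)$'').

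Where you deviate is that you flag a subtlety the paper passes over without comment: \Cref{lem:aug} is stated only for $\tau\in\funformula{2}{\sigma}$, whereas the $\tau$ in a clause from $\C(|\bx|,\sigma,\rho)$ lives in $\funformula{2}{\rho}$. Your instinct to pre-install flips for all $(\tau,f,f')$ with $\tau$ over $\funformula{2}{\rho}$ and $f,f'\in\fsig\sigma$ is the right direction, and your observation that the $d/\epsilon$ bound in \Cref{lem:aug}'s proof only counts $\fsig\sigma$-arcs into the witness set $A$ is accurate. However, as written the repair is circular: to enumerate $\tau\in\funformula{2}{\rho}$ one must already know $\rho$, which contains the very flip symbols the construction adds; and if $\tau$ mentions a flip symbol, then the counts $[[\#y\,\tau(y)\land f(y)=u\land f'(y)=u']]^{\vG'}$ that determine which flips to install themselves depend on the flips already present. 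Closing this requires an additional argument that neither you nor the paper spell out --- for instance, iterating the flip step and arguing it reaches a fixed point because the indegree bound from \Cref{lem:flip} caps the total number of flip operations, while checking that the $d/\epsilon$ estimate survives across rounds. So the approach is essentially the paper's, with a sharper diagnosis of a gap, but the proposed fix is not yet airtight.
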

\begin{proof}
    For a given input $\vG$, we construct $\vG' \in \fungraph{\rho}$ according to \Cref{lem:aug}.
    Consider a formula $\tau(y)\land\psi(\bx)\land f(y)=g(x_i)\land f'(y)=g'(x_j)$
    and a tuple $\bu \in V(\vG)^{|\bx|}$ as specified above.
    For convenience we define $u = g(u_i)$ and $u' = g'(u_j)$.
    We assume there exists $v \in V(\vG')$ such that
    $\vG' \models \tau(v)\land \psi(\bu) \land f(v)=u \land f'(v)= u'$,
    since otherwise the left-hand side of the equation is zero and
    the statement is trivially true.
    For every $h \in \fsig\rho$ we have
    $h(f(y)) \neq f'(y) \in \tau(y)$.
    This implies
    $h^{\vG'}(u) = h^{\vG'}(f^{\vG'}(v)) \neq f'^{\vG'}(v) = u'$
    for every $h \in \fsig\rho$.
    According to \Cref{lem:aug},
    $$
    [[\#y\, \tau(y)\land f(y)=u\land f'(y)= u']]^{\vG'} \le
    \epsilon[[\#y\,\tau(y) \land f(y)=u]]^{\vG'},
    $$
    since otherwise $h^{\vG'}(u) = u'$ for some $h \in \fsig\rho$.
    The result follows since $\vG' \models \psi(\bu)$.
\end{proof}

The following lemma presents a way to reduce the functional depth of a formula.
It is taken from~\cite{KazanaS2013} with the notation changed to ours.
While~\cite{KazanaS2013} does not mention that the formula with reduced depth
can be computed, it clearly follows from their construction.

\begin{proposition}[Lemma 4, \cite{KazanaS2013}]\label{prop:funcdepth}
    For every 
    quantifier-free formula $\phi(\bx)$ with signature $\sigma$
    and class $\cal C \in \fungraph{\sigma}$ with bounded expansion
    there exists a signature $\rho \supseteq \sigma$,
    a class $\cal C' \subseteq \fungraph{\rho}$ with bounded expansion
    and a quantifier-free formula $\phi'(\bx) \in \funformula{1}{\rho}$
    such that the following properties are satisfied:
    \begin{itemize}
    \item
        The formula $\phi'$ can be computed from $\phi$.
    \item
        For every $\vG \in \cal C$ we can compute
        in time $O(\norm{\vG})$
        an expansion $\vG' \in \cal C'$ of $\vG$ 
        such that for every $\bu \in V(\vG)^{|\bx|}$
        holds $\vG \models \phi(\bu)$ iff $\vG' \models \phi'(\bu)$.
    \end{itemize}
\end{proposition}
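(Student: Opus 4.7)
The plan is to iterate the $1$-transitive fraternal augmentation construction from \Cref{sec:functionalRepresentations} a constant number of times in order to represent every nested function composition appearing in $\phi$ by a single function application. Recall that when one forms the $1$-transitive fraternal augmentation of $\vG$, for every pair of function symbols $f,g$ in the current signature a fresh symbol $h_{f,g}$ is added with $h_{f,g}^{\vG'} = g^{\vG'} \circ f^{\vG'}$. So one augmentation step flattens any depth-$2$ term $g(f(x))$ into a single application $h_{f,g}(x)$; iterating, after $k$ steps every composition of up to $2^k$ original function symbols is represented by a single symbol in the extended signature.

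Concretely, let $d$ be the functional depth of $\phi$. I would apply the $1$-transitive fraternal augmentation $\lceil \log_2 d \rceil$ times to $\vG$, obtaining a $\sigma$-$\rho$-expansion $\vG'$ whose signature $\rho$ contains a symbol $h_{\bar f}$ for every string $\bar f = f_1\dots f_m$ with $m \le d$ such that $h_{\bar f}^{\vG'}(v) = f_m^{\vG}(\cdots f_1^{\vG}(v)\cdots)$ by construction. I would then produce $\phi'$ by syntactically replacing each maximal nested term $f_m(\cdots f_1(x)\cdots)$ occurring in $\phi$ with $h_{\bar f}(x)$. The resulting $\phi' \in \funformula{1}{\rho}$ has functional depth at most~$1$, and the definitional equality just noted gives $\vG \models \phi(\bu)$ iff $\vG' \models \phi'(\bu)$ for every tuple $\bu$.

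For bounded expansion and running time I would appeal to \Cref{cor:funcBE}: a single $1$-transitive fraternal augmentation of a class with bounded expansion yields again a class with bounded expansion, and a constant number of such iterations (depending only on $|\phi|$) preserves this. The algorithm of \Nesetril and Ossona de Mendez computes each augmentation in time linear in the size of the current structure. Since indegrees stay bounded by constants depending only on the original class and the iteration count, each intermediate $\vG_{i+1}$ has size $O(\norm{\vG_i}) = O(\norm{\vG})$; summing over the $O(\log d)$ iterations gives the total time $O(\norm{\vG})$ demanded by the statement.

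The main thing to get right is purely bookkeeping: one must enumerate all nested subterms of $\phi$ up to the required depth, ensure that the matching composed symbols are indeed present in $\rho$ after the iterated augmentation, and check that the definitional equations of the augmentation match the syntactic rewrite in $\phi$. There is no deeper obstacle, which is presumably why the statement was already established in essentially this form by Kazana and Segoufin~\cite{KazanaS2013}.
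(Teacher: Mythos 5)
Your reconstruction is correct and matches the standard argument — iterated $1$-transitive fraternal augmentations, where the composition symbols $h_{f,g}$ with $h_{f,g}^{\vG'} = g^{\vG'} \circ f^{\vG'}$ flatten nested function applications, and \Cref{cor:funcBE} keeps the resulting class inside bounded expansion. The paper itself only cites Kazana and Segoufin for this proposition (noting that computability of $\phi'$ follows from their construction), and what you wrote is essentially their proof; the one cosmetic caveat is that after $k$ rounds the symbol realizing $f_m \circ \cdots \circ f_1$ is a nested name like $h_{h_{f_1,f_2},h_{f_3,f_4}}$ rather than a flat $h_{f_1\cdots f_m}$, but since $\fid$ lets you pad to any intermediate length $\le 2^k$, every composition of length $\le d$ is indeed realized by some single symbol of $\rho$, which is all the rewriting step needs.
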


While ignoring a single negative mixed literal is now guaranteed to 
change the value of a counting term only by a little, the next Lemma
helps to estimate the influence of removing all negative mixed
literals at once.

\begin{lemma}\label{lem:technicalepsilon}
    Let $\omega(y), l_1(y),\dots,l_t(y)$ be formulas,
    $\vG$ be a functional representation,
    and $\varepsilon > 0$.
    If 
    for every $i \in \{1,\dots,t\}$
    with $\varepsilon' = \min(1,\varepsilon)/2t$
    $$
    [[\#y\, \omega(y) \land l_i(y) ]]^{\vG} \le
    \epsilon' [[\#y\,\omega(y)]]^{\vG}
    $$
    then
    $$
    [[\#y\,\omega(y)\land \bigwedge_{i=1}^t \neg l_i(y)]]^\vG
     \le
    [[\#y\,\omega(y)]]^\vG
    \le (1+\varepsilon)
    [[\#y\,\omega(y)\land \bigwedge_{i=1}^t \neg l_i(y)]]^\vG.
    $$
\end{lemma}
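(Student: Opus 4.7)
The plan is purely arithmetic once the right notation is set up. Let me abbreviate $A = [\![\#y\,\omega(y)]\!]^{\vG}$, $B_i = [\![\#y\,\omega(y)\land l_i(y)]\!]^{\vG}$, and $C = [\![\#y\,\omega(y)\land \bigwedge_{i=1}^t \neg l_i(y)]\!]^{\vG}$. The left inequality $C \le A$ is immediate since adding conjuncts can only decrease the number of witnesses. The content of the lemma lies in the right inequality $A \le (1+\varepsilon)C$.

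The key step is a union-bound argument. Any $y$ satisfying $\omega(y)$ but not $\omega(y)\land \bigwedge_i \neg l_i(y)$ must satisfy some $l_i(y)$ together with $\omega(y)$, so
\[
A - C \;\le\; \sum_{i=1}^{t} B_i \;\le\; t\varepsilon' A \;=\; \tfrac{\min(1,\varepsilon)}{2}\,A,
\]
using the hypothesis $B_i \le \varepsilon' A$ in the second step and the definition $\varepsilon' = \min(1,\varepsilon)/(2t)$ in the third. Rearranging gives $A(1 - t\varepsilon') \le C$, i.e.\ $A \le C/(1-t\varepsilon')$.

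It remains to check that $1/(1 - t\varepsilon') \le 1+\varepsilon$, which I do by two cases. If $\varepsilon \le 1$, then $t\varepsilon' = \varepsilon/2$ and $(1+\varepsilon)(1-\varepsilon/2) = 1 + \varepsilon/2 - \varepsilon^2/2 \ge 1$ since $\varepsilon(1-\varepsilon) \ge 0$. If $\varepsilon > 1$, then $t\varepsilon' = 1/2$, so $1/(1-t\varepsilon') = 2 \le 1+\varepsilon$. In both cases $A \le (1+\varepsilon)C$ follows, completing the proof. The only mild subtlety is the role of the $\min$ in the definition of $\varepsilon'$: it guarantees $t\varepsilon' \le 1/2$ so the denominator $1-t\varepsilon'$ never becomes non-positive, which is exactly what forces the case split above. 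There is no real obstacle here; the lemma is a quantitative wrapper around the union bound.
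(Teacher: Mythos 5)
Your proof is correct and follows essentially the same route as the paper's: a union bound to get $A - C \le \sum_i B_i \le t\varepsilon' A$, rearranged to $(1-t\varepsilon')A \le C$, followed by the arithmetic check that $1/(1-t\varepsilon') \le 1+\varepsilon$. You make the final inequality check a bit more explicit via a two-case split on $\varepsilon \lessgtr 1$, but the argument is the same one the paper gives.
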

\begin{proof}
    Every assignment $v \in V(\vG)$ for $y$ satisfies
    either $\bigwedge_{i=1}^t \neg l_i(v)$
    or $l_i(v)$ for at least one $i \in \{1,\dots,t\}$.
    Therefore
    $$
    [[\#y\,\omega(y)]]^{\vG} \le
    [[\#y\,\omega(y) \land \bigwedge_{i=1}^t \neg l_i(y)]]^{\vG} + 
    \sum_{i = 1}^t
    [[\#y\,\omega(y) \land l_i(y)]]^{\vG}.
    $$
    Using our initial assumption, this means
    $$
    [[\#y\,\omega(y)]]^{\vG} \le
    [[\#y\,\omega(y) \land \bigwedge_{i=1}^t \neg l_i(y)]]^{\vG} + 
    t\epsilon' [[\#y\,\omega(y)]]^{\vG},
    $$
    and thus
    $$
    (1-t\varepsilon')
    [[\#y\,\omega(y)]]^\vG
    \le 
    [[\#y\,\omega(y)\land \bigwedge_{i=1}^t \neg l_i(y)]]^\vG
    \le [[\#y\,\omega(y)]]^\vG.
    $$
    Let $a,b > 0$ with $(1 - t\varepsilon') a \le b \le a$.
    Dividing by $a$ yields
    $1- t\varepsilon' \le b/a \le 1$,
    and then taking the reciprocal gives us
    $1 \le a/b \le 1/(1- t\varepsilon') = 1/(1- \min(1,\varepsilon)/2) \le 1 + \varepsilon$.
    We multiply with $b$ and obtain
    $b \le a \le (1 + \varepsilon)b$.
    This yields the statement of this lemma.
\end{proof}

We finally arrive at the point where we can approximate a counting
term with a sum of simple terms of the form
$\#y\,\tau(y)\land\psi(\bx)\land f(y)=g(x_i)$.  While they still have
many free variables, $\psi(\bx)$ does not depend on $y$ and can therefore be pulled outside the counting quantifier.
This leaves a counting term with a single free variable
that we can evaluate using \Cref{lem:actualcounting} in linear time.

\begin{lemma}\label{lem:apx}
    Let $\cal C \subseteq \fungraph{\sigma}$ be a class with bounded expansion
    and $\varepsilon > 0$.
    One can compute for every quantifier-free formula $\phi(y\bx)$
    with signature $\sigma$ 
    a signature $\rho\supseteq\sigma$ and
    a set of conjunctive clauses $\Omega \subseteq \funformula{2}{\rho}$
    of the form $\tau(y)\land \psi(\bx)\land f(y)=g(x_i)$ 
    with the following property:

    There exists a class $\cal C' \subseteq \fungraph{\rho}$ with bounded expansion
    such that for every $\vG \in \cal C$ one can compute in time $O(\norm{\vG})$
    an expansion $\vG' \in \cal C'$ of $\vG$ 
    such that for every $\bu \in V(\vG)^{|\bx|}$
    $$
    [[\#y\,\phi(y\bu)]]^\vG
    \le \sum_{\omega\in\Omega}[[\#y\,\omega(y\bu)]]^{\vG'}
    \le (1+\varepsilon)
    [[\#y\,\phi(y\bu)]]^\vG.
    $$
\end{lemma}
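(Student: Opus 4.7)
The plan is to chain \Cref{prop:funcdepth}, \Cref{lem:step12}, \Cref{lem:aug2}, and \Cref{lem:technicalepsilon}. The first reduces functional depth, the second puts the input in canonical shape with the structural control of $\Delta^{\neq}_\tau$, the third adds flip arcs guaranteeing that every remaining negative mixed literal contributes only a small fraction of the base count, and the last aggregates these per-literal perturbations into a single $(1+\varepsilon)$ factor; dropping the negatives then yields $\Omega$.

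Concretely, I first apply \Cref{prop:funcdepth} to obtain a signature $\rho_1 \supseteq \sigma$, a bounded-expansion class $\cal C_1 \subseteq \fungraph{\rho_1}$, a quantifier-free $\phi_1(y\bx) \in \funformula{1}{\rho_1}$, and a linear-time expansion $\vG \mapsto \vG_1$ preserving $[[\#y\,\phi(y\bu)]]^{\vG} = [[\#y\,\phi_1(y\bu)]]^{\vG_1}$ for every $\bu$. I arrange the expansion to interpret $\fapx$ as a constant function (all vertices mapped to one fixed vertex), so that $\fapx(y)=\fapx(x_1)$ is universally true in $\vG_1$ and may be attached or dropped without changing counts. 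Next, I apply \Cref{lem:step12} to $\phi_1(y\bx) \land \fapx(y) = \fapx(x_1)$ with respect to some $\rho_2 \supseteq \rho_1$ and obtain a mutually exclusive family $\Omega_1 \subseteq \C(|\bx|, \rho_1, \rho_2)$, where each $\omega_1 \in \Omega_1$ has the form $\tau(y) \land \psi(\bx) \land f(y) = g(x_i) \land \Delta^{\neq}_\tau(y\bx)$ and every literal $f'(y)\neq g'(x_j) \in \Delta^{\neq}_\tau$ satisfies $h(f(y)) \neq f'(y) \in \tau(y)$ for all $h \in \fsig{\rho_2}$. Let $t$ be a uniform upper bound on the number of negative mixed literals appearing in any $\omega_1 \in \Omega_1$ and set $\varepsilon' = \min(1,\varepsilon)/(2t)$. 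Then I invoke \Cref{lem:aug2} with accuracy $\varepsilon'$ to obtain the final $\rho \supseteq \rho_2$, bounded-expansion class $\cal C' \subseteq \fungraph{\rho}$, and the linear-time expansion $\vG_1 \mapsto \vG'$ adding the necessary flip arcs. By \Cref{cor:funcBE}, composing all these steps stays in a class of bounded expansion and runs in $O(\norm{\vG})$.

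Finally, I define $\Omega$ by deleting $\Delta^{\neq}_\tau$ from each $\omega_1 \in \Omega_1$, so each $\omega \in \Omega$ has the form $\tau(y) \land \psi(\bx) \land f(y) = g(x_i)$. Since $\Omega_1$ is mutually exclusive and together its disjunction is equivalent on $\vG'$ to $\phi_1(y\bx) \land \fapx(y) = \fapx(x_1) \equiv \phi_1(y\bx)$, we have $[[\#y\,\phi(y\bu)]]^{\vG} = \sum_{\omega_1 \in \Omega_1} [[\#y\,\omega_1(y\bu)]]^{\vG'}$ for every $\bu$; dropping conjuncts can only enlarge the count, which gives the left inequality. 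For the right inequality, I apply \Cref{lem:technicalepsilon} to each pair $(\omega_1, \omega)$ using as base $\omega(y\bu) = \tau(y) \land \psi(\bu) \land f(y) = g(u_i)$ and as negative literals exactly the members of $\Delta^{\neq}_\tau$; the hypothesis of \Cref{lem:technicalepsilon} is supplied by \Cref{lem:aug2} and the choice of $\varepsilon'$, so $[[\#y\,\omega(y\bu)]]^{\vG'} \le (1+\varepsilon)[[\#y\,\omega_1(y\bu)]]^{\vG'}$, and summing over $\Omega_1$ closes the argument.

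The main obstacle I expect is threading the bookkeeping so that the structural condition imposed on $\Delta^{\neq}_\tau$ by \Cref{lem:step12} lines up exactly with the hypothesis of \Cref{lem:aug2}, and so that the per-literal bound $\varepsilon'$ telescopes into a global $(1+\varepsilon)$ factor after summing across the at most $|\Omega_1|$ canonical clauses; choosing $\varepsilon' = \min(1,\varepsilon)/(2t)$ with $t$ the worst-case number of negatives per clause (not per formula) is what makes \Cref{lem:technicalepsilon} applicable clause-by-clause, which is the only place where the approximation factor is incurred.
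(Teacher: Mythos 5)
Your overall plan matches the paper's: reduce functional depth, use $\fapx$ to force a nonempty $\Delta^=$, extract canonical conjunctive clauses with the structural condition on $\Delta^{\neq}_\tau$, add flip arcs via \Cref{lem:aug2} so each remaining negative mixed literal removes only a small fraction, telescope via \Cref{lem:technicalepsilon}, and drop the negatives to obtain $\Omega$. The left and right inequalities are argued exactly as the paper does.

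There is, however, a bookkeeping gap in the order you chain \Cref{lem:step12} and \Cref{lem:aug2}, and it is not merely cosmetic. You apply \Cref{lem:step12} first to get $\Omega_1 \subseteq \C(|\bx|,\rho_1,\rho_2)$ with the structural condition ``$h(f(y)) \neq f'(y) \in \tau(y)$ for all $h \in \fsig{\rho_2}$'', and \emph{then} invoke \Cref{lem:aug2} to obtain $\rho \supseteq \rho_2$ and the flip expansion $\vG'$. But \Cref{lem:aug2}'s guarantee only applies to clauses in $\C(|\bx|,\cdot,\rho)$ whose $\tau$ is complete w.r.t.\ $\fsig{\rho}$ and whose structural condition holds \emph{for all} $h\in\fsig\rho$. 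Since the clauses you produced are only complete and structurally constrained w.r.t.\ $\fsig{\rho_2}\subsetneq\fsig\rho$, the hypothesis of \Cref{lem:aug2} is not met: some $h\in\fsig\rho\setminus\fsig{\rho_2}$ might satisfy $h(f(y)) = f'(y)$ in $\tau$ (or neither literal may be present), and then \Cref{lem:aug2} offers no per-literal bound for that negative literal. There is also a latent circularity in your dependency chain: if $\rho_2$ were chosen to equal \Cref{lem:aug2}'s $\rho$, you would need $\rho$ (hence \Cref{lem:aug2} with $\varepsilon'$) before computing $t$ from $\Omega_1$, but you define $\varepsilon'$ from $t$ before invoking \Cref{lem:aug2}.

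Both issues vanish with the paper's ordering: observe that the number of negative mixed literals in any canonical clause is at most $|\sigma|^2|\bx|$ \emph{a priori}, because such literals live in $\funformula{1}{\sigma}$ by \Cref{def:canonical}, regardless of $\rho$; set $\varepsilon' = \min(1,\varepsilon)/(2|\sigma|^2|\bx|)$ up front; apply \Cref{lem:aug2} with this $\varepsilon'$ to fix $\rho$ and compute $\vG'$; and only \emph{then} apply \Cref{lem:step12} with the pair $(\sigma,\rho)$. This makes the completeness of $\tau$ and the structural property line up exactly with \Cref{lem:aug2}'s hypothesis and breaks the circular dependency. The rest of your argument then goes through verbatim.
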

\begin{proof}
    Assume we are given a formula $\phi(y\bx)$ and a functional representation
    $\vG \in \cal C$.
    We pick a vertex $w \in V(\vG)$ and add a new function symbol $\fapx$ to our signature $\sigma$ and input structure
    with $\fapx^{\vG}(v) = w$ for every $v \in V(\vG)$.
    For the underlying directed graph of $\vG$, this amounts to making $w$ an apex vertex.
    Thus by \Cref{def:be-functional}, $\vG$ still comes from a class with bounded expansion.

    Using \Cref{prop:funcdepth} we can replace $\phi(y\bx)$
    with another quantifier-free formula with functional depth one.
    The price we have to pay is replacing $\vG$
    with an expanded functional representation, which is still from a
    class with bounded expansion.
    Both the new formula and the new functional representation can be computed in the desired time.
    Therefore, from now on, we can assume without loss of generality
    that $\phi(y\bx)$ is a quantifier-free formula with signature $\sigma$ and functional depth one
    and that there exists a function symbol $\fapx \in \fsig\sigma$ with
    $\fapx^{\vG}(v) = \fapx^{\vG}(v')$ for every $v,v' \in V(\vG)$.

    \Cref{lem:aug2} with $\varepsilon' = \min(1,\varepsilon)/2|\sigma|^2|\bx|$
    gives us a signature $\rho$ and a bounded expansion class $\cal C' \subseteq \fungraph{\rho}$.
    We compute in time $O(\norm{\vG})$ the $\sigma$-$\rho$-expansion $\vG' \in \cal C'$ of $\vG$ with properties as in \Cref{lem:aug2}.
    Let $\Omega' \subseteq \C(|\bx|,\sigma,\rho)$ be the set of canonical conjunctive clauses obtained from $\phi(y\bx) \land \fapx(y) = \fapx(x_1) $ with the algorithm from \Cref{lem:step12}.
    Property 2 and 3 of \Cref{lem:step12} together imply for every $\bu \in V(\vG)^{|\bx|}$
    \begin{equation}\label{eq:apx0}
    [[ \cnt{y} \phi(y\bu) ]]^{\vG}
    = [[ \cnt{y} \phi(y\bu) \land \fapx^{\vG}(y) = \fapx^{\vG}(u_1)]]^{\vG}
    = \sum_{\omega' \in \Omega'} [[ \cnt{y} \omega'(y\bu) ]]^{\vG'}.
    \end{equation}

    We consider a clause 
    $\omega'(y\bx) = \tau(y)\land\psi(\bx)\land f(y)=g(x_i)\land \Delta_\tau^{\neq}(y\bx) \in \Omega'$,
    where $\Delta_\tau^{\neq}(y\bx)$ is of the form $\bigwedge_{i=1}^t \neg l_i(y\bx)$.
    \Cref{lem:aug2} states (using the first property of \Cref{lem:step12})
    for $1 \le i \le t$ and $\bu \in V(\vG)^{|\bx|}$
    $$
    [[\#y\, \tau(y)\land \psi(\bu) \land f(y)=g(u_i)\land l_i(y\bu)]]^{\vG'} \le
    \epsilon' [[\#y\,\tau(y)\land \psi(\bu) \land f(y)=g(u_i)]]^{\vG'}.
    $$
    Since the literals of $\Delta_\tau^{\neq}(y\bx)$ have
    the form $f(y) \neq g(x_i)$
    with $f,g \in \fsig\sigma$, we have $t \le |\sigma|^2|\bx|$.
    Hence, by \Cref{lem:technicalepsilon},
    \begin{equation}\label{eq:apx1}
    [[\#y\,\omega'(y\bu)]]^{\vG'}
    \le 
    [[\#y\,\tau(y)\land\psi(\bu)\land f(y)=g(u_i)]]^{\vG'}
    \\
    \le
    (1 + \varepsilon) [[\#y\,\omega'(y\bu)]]^{\vG'}.
    \end{equation}
    Combining (\ref{eq:apx0}) and (\ref{eq:apx1}) yields our result.
\end{proof}

The last step to reach this section's goal is to actually evaluate the
simple counting terms in the sum of the last lemma and store
the results as ``weights'' at the individual vertices.
Finally, this allows us to approximate a counting term $\#y\,\phi(y\bu)$
using combinations of quantifier-free first-order formulas
and the calculated weights.
We consider the following theorem the main technical contribution
of this paper,
as the whole upcoming quantifier elimination procedure builds upon it.
Since negative summands lead to the problem of cancellation and therefore bad approximations,
a lot of effort has been spent in this section to make sure
that none of the summands $c_{\omega,i}(u_i)$ are negative.
A similar result with negative summands is easier to prove 
and can be found in \Cref{thm:mainExact}.

\begin{theorem}\label{thm:main}
    Let $\cal C \subseteq \fungraph{\sigma}$ be a class with bounded expansion
    and $\varepsilon > 0$.
    One can compute for every 
    quantifier-free formula
    $\phi(y\bx)$
    with signature $\sigma$
    a set of conjunctive clauses
    $\Omega$ 
    with free variables $\bx$ and signature
    $\rho\supseteq\sigma$
    that satisfies the following property:

    There exists a class $\cal C' \subseteq \fungraph{\rho}$ with bounded expansion
    such that for every $\vG \in \cal C$ one can compute in time $O(\norm{\vG})$
    an expansion $\vG' \in \cal C'$ of $\vG$ 
    and functions $c_{\omega,i}(v) \colon V(\vG) \to \N$ for $\omega \in \Omega$ and $i \in \{1,\dots,|\bx|\}$
    such that for every $\bu \in V(\vG)^{|\bx|}$
    there exists exactly one formula $\omega \in \Omega$
    with $\vG' \models \omega(\bu)$. For this formula
    $$
    [[\#y\,\phi(y\bu)]]^\vG
    \le \sum_{i=1}^{|\bx|}c_{\omega,i}(u_i)
    \le (1+\varepsilon)
    [[\#y\,\phi(y\bu)]]^\vG.
    $$
\end{theorem}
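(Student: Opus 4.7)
The plan is to realize the stated decomposition as a packaging step on top of Lemma~\ref{lem:apx}: I will precompute, for each of the constantly many conjunctive clauses produced by that lemma, a linear per-vertex table of counts, and then address these tables by partitioning $V(\vG')^{|\bx|}$ through ``complete'' conjunctive clauses over~$\bx$.

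First I would invoke Lemma~\ref{lem:apx} with accuracy~$\varepsilon$ to obtain a signature $\rho \supseteq \sigma$, a bounded expansion class $\cal C' \subseteq \fungraph{\rho}$, an expansion $\vG' \in \cal C'$ of $\vG$, and a finite set $\Omega^\dagger$ of clauses of the form $\omega^\dagger_j(y\bx) = \tau_j(y) \land \psi_j(\bx) \land f_j(y) = g_j(x_{i_j})$ whose total count already sandwiches $[[\#y\,\phi(y\bu)]]^\vG$ to within the factor $1+\varepsilon$. Since $\psi_j$ does not mention~$y$, the summand splits as
$$[[\#y\,\omega^\dagger_j(y\bu)]]^{\vG'} \;=\; [[\psi_j(\bu)]]^{\vG'} \cdot C_j\bigl(g_j^{\vG'}(u_{i_j})\bigr), \qquad C_j(v) := [[\#y\,\tau_j(y) \land f_j(y) = v]]^{\vG'}.$$
All tables $C_j \colon V(\vG') \to \N$ can be built in total linear time by the same technique as in Lemma~\ref{lem:actualcounting}: enumerate witnesses $w$ of $\tau_j$ using the Kazana--Segoufin enumeration and increment $C_j(f_j^{\vG'}(w))$, with the increments delayed into a list and compacted in linear time via radix and bucket sort.

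Next, I would take $\Omega$ to be the constant-size set of all conjunctive clauses over~$\bx$ that contain, for every $i, j \in \{1,\ldots,|\bx|\}$ and $f, g, h \in \fsig\rho$, either the literal $f(x_i) = h(g(x_j))$ or its negation; then every $\bu$ satisfies exactly one $\omega \in \Omega$. Each $\psi_j$ is itself complete in this sense by property~2 of Definition~\ref{def:canonical}, so the relation ``$\omega \models \psi_j$'' is decidable syntactically from $\omega$ and $\psi_j$ alone. I then define
$$c_{\omega,i}(v) \;:=\; \sum_{j \,:\, i_j = i,\; \omega \models \psi_j} C_j\bigl(g_j^{\vG'}(v)\bigr).$$
For the unique $\omega$ with $\vG' \models \omega(\bu)$, the contributing indices $j$ are exactly those with $\vG' \models \psi_j(\bu)$, and hence
$\sum_{i=1}^{|\bx|} c_{\omega,i}(u_i) = \sum_{j} [[\#y\,\omega^\dagger_j(y\bu)]]^{\vG'}$, which by Lemma~\ref{lem:apx} lies between $[[\#y\,\phi(y\bu)]]^\vG$ and $(1+\varepsilon)[[\#y\,\phi(y\bu)]]^\vG$.

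I do not expect a deep obstacle here: the genuine work, namely eliminating mixed negative literals without invoking subtraction so that each $C_j$ is nonnegative and no cancellation can wreck the approximation, was already discharged in Lemmas~\ref{lem:step12}--\ref{lem:apx} (flip augmentations and Lemma~\ref{lem:technicalepsilon}). The only point of care is bookkeeping: verifying that $|\Omega^\dagger|$, $|\Omega|$, $|\fsig\rho|$, and the enumeration overhead in Lemma~\ref{lem:actualcounting} are all bounded by a function of $|\phi|$ and $\varepsilon$, so that constructing the functions $c_{\omega,i}$ and the expansion $\vG'$ fits within the promised $O(\norm{\vG})$ budget.
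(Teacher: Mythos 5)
Your proposal is essentially the paper's proof: invoke Lemma~\ref{lem:apx}, define $\Omega$ as a mutually exclusive complete set of conjunctive clauses over $\bx$, observe that each such clause syntactically decides the $\psi_j$-parts of the Lemma~\ref{lem:apx} output, and assemble per-vertex count tables via Lemma~\ref{lem:actualcounting}. The decomposition into $[[\psi_j(\bu)]]^{\vG'}\cdot C_j(g_j^{\vG'}(u_{i_j}))$ is the same observation the paper makes by case distinction on $\omega\models\psi$ versus $\omega\models\neg\psi$.

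There is one small gap. You define $\Omega$ to consist of clauses that are complete only with respect to the equality atoms $f(x_i)=h(g(x_j))$. However, by the construction in Lemma~\ref{lem:canonical-formula}, the clause $\psi_j(\bx)$ is formed by collecting from the DNF \emph{all} literals whose free variables lie in $\bx$, so $\psi_j$ will in general also contain unary-predicate literals such as $P(x_i)$ or $P(f(x_i))$. For those, a clause $\omega$ in your $\Omega$ does \emph{not} decide $\psi_j$ — that is, neither $\omega\models\psi_j$ nor $\omega\models\neg\psi_j$ — and the value $[[\psi_j(\bu)]]^{\vG'}$ is then not determined by $\omega$ alone, so $c_{\omega,i}$ is not well-defined independently of $\bu$. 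The fix is exactly what the paper does: take $\Omega$ to be the set of \emph{all} complete conjunctive clauses of functional depth two over $\bx$ and signature $\rho$ (i.e., deciding every atomic formula in $\funformula{2}{\rho}$ with free variables from $\bx$, including unary-predicate atoms), not just the equality atoms. With that correction your argument goes through verbatim.
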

\begin{proof}
    We use \Cref{lem:apx} to construct $\Omega' \subseteq \funformula{2}{\rho}$ and
    $\vG' \in \cal C' \subseteq \fungraph{\rho}$ such that
    for every $\bu \in V(\vG)^{|\bx|}$
    \begin{equation}\label{eq:x}
    [[\#y\,\phi(y\bu)]]^\vG
    \le \sum_{\omega'\in\Omega'}[[\#y\,\omega'(y\bu)]]^{\vG'}
    \le (1+\varepsilon)
    [[\#y\,\phi(y\bu)]]^\vG.
    \end{equation}
    Let $\Omega \subseteq \funformula{2}{\rho}$ 
    be the set of all complete conjunctive clauses with functional depth two, signature $\rho$ and free variables $\bx$.
    This set has two important properties:
    First, for every $\bu \in V(G)^{|\bx|}$ there exists exactly one 
    $\omega \in \Omega$ with $\vG' \models \omega(\bu)$.
    Secondly, for every $\omega \in \Omega$ and conjunctive clause $\psi(\bx) \in \funformula{2}{\rho}$
    either $\omega \models \psi$ or $\omega \models \neg\psi$.

    Let now $\bu \in V(\vG)^{|\bx|}$,
    $\tau(y) \land \psi(\bx) \land f(y) = g(x_i) \in \Omega'$,
    and $\omega \in \Omega$ such that $\vG' \models \omega(\bu)$.
    If $\omega \models \neg\psi$ then
    $[[\#y\, \tau(y) \land \psi(\bu) \land f(y) = g(u_i)]]^{\vG'} = 0$.
    If $\omega \models \psi$ then
    $[[\#y\, \tau(y) \land \psi(\bu) \land f(y) = g(u_i)]]^{\vG'} =
    [[\#y\, \tau(y) \land f(y) = g(u_i)]]^{\vG'}$.
    Using this observation, we define for every $\omega \in \Omega$ and $i \in \{1,\dots,|\bx|\}$ a set $\Gamma_{\omega,i}$
    by iterating over all formulas $\omega \in \Omega$ and 
    $\tau(y) \land \psi(\bx) \land f(y) = g(x_i) \in \Omega'$
    and adding $\tau(y) \land f(y) = g(x_i)$ to $\Gamma_{\omega,i}$
    if $\omega \models \psi$.
    Now for every $\bu \in V(\vG)^{|\bx|}$
    there exists exactly one formula $\omega \in \Omega$
    with $\vG' \models \omega(\bu)$, and for this formula\looseness-1
    \begin{equation}\label{eq:y}
    \sum_{\omega'\in\Omega'}[[\#y\,\omega'(y\bu)]]^{\vG'} =
    \sum_{i = 1}^{|\bx|}
    \sum_{\tau(y) \land f(y) = g(x_i) \in\Gamma_{\omega,i}}
    [[\#y\,\tau(y) \land f(y) = g(u_i)]]^{\vG'}.
    \end{equation}

    Let us fix one set $\Gamma_{\omega,i}$.
    For every formula $\tau(y) \land f(y) = g(x_i) \in \Gamma_{\omega,i}$
    we use \Cref{lem:actualcounting} to construct a function $c$ with
    $c(v) = [[\#y\,\tau(y) \land f(y) = g(v)]]^{\vG'}$.
    Let $c_{\omega,i}$ be the sum over all such functions
    for formulas in $\Gamma_{\omega,i}$.
    Then
    \begin{equation}\label{eq:z}
    c_{\omega,i}(u_i) =
    \sum_{\tau(y) \land f(y) = g(x_i) \in\Gamma_{\omega,i}}
    [[\#y\,\tau(y) \land f(y) = g(u_i)]]^{\vG'}.
    \end{equation}
    Combining (\ref{eq:x}), (\ref{eq:y}) and (\ref{eq:z})
    yields our statement.
\end{proof}

\subsection{Constructing an Approximate Quantifier-Free Formula}\label{sec:constructingFO}

\Cref{thm:main} approximates a term $\#y\,\phi$ 
using functions $c_{\omega,i}(v)$ that assign each vertex a number.
In the following lemma, we round $c_{\omega,i}(v)$ into a finite
number of intervals 
(with a granularity depending on $\varepsilon$)
and extend the underlying structure with unary predicates
that encode for each vertex $v$ what interval $c_{\omega,i}(v)$ lies in.
Using these predicates, we build a quantifier-free formula $\phi$,
approximating $\#y\,\phi > N$.
This starts the quantifier elimination step we will later
use to build our approximation scheme.

\begin{lemma}\label{lem:2fo}
Let $\cal C \subseteq \fungraph{\sigma}$ be a class with
bounded expansion and $\varepsilon > 0$.  
One can compute for every quantifier-free formula $\phi(y\bx)$
with signature~$\sigma$,
a quantifier-free formula $\phi'(\bx)$
with signature $\rho\supseteq\sigma$ and the following property:

There exists another
class $\cal C' \subseteq \fungraph{\rho}$ with bounded expansion
and for every $\vG \in \cal C$ and $N \in \Z$ one can compute in time
$O(\norm{\vG})$ an expansion $\vG' \in \cal C'$ of $\vG$ such that for
every $\bu \in V(\vG)^{|\bx|}$:
\begin{itemize}
\item If $\vG' \models \phi'(\bu)$ then $\vG \models \#y\,\phi(y\bu) > N$.
\item If $\vG' \not\models \phi'(\bu)$ then $\vG \models \#y\,\phi(y\bu).
\le (1+\varepsilon)N$.
\end{itemize}
\end{lemma}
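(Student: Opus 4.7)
The plan is to use Theorem \ref{thm:main} to reduce the counting term $\#y\,\phi$ to a sum of per-vertex values, round those values into a bounded number of bins, encode bin memberships via fresh unary predicates, and build $\phi'(\bx)$ at compile time as a Boolean combination of those predicates that performs the final comparison with $N$.

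Set $\varepsilon' := \varepsilon/2$ and apply Theorem \ref{thm:main} to $\phi$ and $\cal C$ with accuracy $\varepsilon'$. This produces a signature $\rho_1 \supseteq \sigma$, a bounded-expansion class $\cal C_1 \subseteq \fungraph{\rho_1}$, a mutually exclusive finite set of conjunctive clauses $\Omega$ in the free variables $\bx$, and, for every $\vG \in \cal C$, an expansion $\vG_1 \in \cal C_1$ together with functions $c_{\omega,i} \colon V(\vG) \to \N$, all computable in time $O(\norm{\vG})$, such that the unique $\omega \in \Omega$ satisfied by $\bu$ on $\vG_1$ obeys
$$
[[\#y\,\phi(y\bu)]]^\vG \;\le\; S_\omega(\bu) \;\le\; (1+\varepsilon')\,[[\#y\,\phi(y\bu)]]^\vG, \qquad S_\omega(\bu) := \sum_{i=1}^{|\bx|} c_{\omega,i}(u_i).
$$
For the main case $N \ge 1$, fix $t := \lceil 2(1+\varepsilon)|\bx|/\varepsilon \rceil + 1$ and partition $[0,\infty)$ into the half-open bins $B_l := [\,l(1+\varepsilon)N/t,\; (l+1)(1+\varepsilon)N/t\,)$ for $0 \le l < t$ together with the overflow bin $B_t := [(1+\varepsilon)N,\infty)$. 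Extend $\rho_1$ to $\rho$ by adjoining one fresh unary predicate $R_{\omega,i,l}$ for each triple $(\omega,i,l)$, and extend $\vG_1$ to $\vG' \in \fungraph{\rho}$ by declaring $R_{\omega,i,l}^{\vG'}(v)$ true iff $c_{\omega,i}(v) \in B_l$. Since no function symbols are added, the underlying undirected graph is unchanged and $\vG'$ lies in a bounded-expansion class $\cal C'$ by Definition \ref{def:be-functional}; once the $c_{\omega,i}$ are on hand, $\vG'$ is built in time $O(\norm{\vG})$ by a single scan.

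Finally, define the quantifier-free formula
$$
\phi'(\bx) \;:=\; \bigvee_{\omega \in \Omega} \Bigl( \omega(\bx) \,\wedge\, \bigvee_{(l_1,\dots,l_{|\bx|})\text{ triggering}} \bigwedge_{i=1}^{|\bx|} R_{\omega,i,l_i}(x_i) \Bigr),
$$
where a tuple is called \emph{triggering} if either some $l_i = t$ or $\sum_i l_i > t - |\bx|$. Correctness is a direct case analysis. If $\phi'(\bu)$ holds, then either the overflow is triggered (so $S_\omega(\bu) > (1+\varepsilon)N \ge (1+\varepsilon')N$), or $\sum_i l_i \ge t - |\bx| + 1$, and the lower bin bounds together with the choice of $t$ give $S_\omega(\bu) \ge (t - |\bx| + 1)(1+\varepsilon)N/t > (1+\varepsilon')N$; in both cases $[[\#y\,\phi(y\bu)]]^\vG \ge S_\omega(\bu)/(1+\varepsilon') > N$. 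If $\phi'(\bu)$ fails, all $l_i < t$ and $\sum_i l_i \le t - |\bx|$, so the upper bin bounds give $S_\omega(\bu) < \sum_i (l_i+1)(1+\varepsilon)N/t \le (1+\varepsilon)N$, whence $[[\#y\,\phi(y\bu)]]^\vG \le S_\omega(\bu) \le (1+\varepsilon)N$. The degenerate cases $N \le 0$ are handled by separate trivial constructions (namely $\phi' \equiv \top$ for $N < 0$, and a singleton bin $\{0\}$ together with the overflow $(0,\infty)$ for $N = 0$). The main obstacle is coupling the two error sources---the multiplicative slack $1+\varepsilon'$ inherited from Theorem \ref{thm:main} and the additive bucketing slack bounded by $|\bx|(1+\varepsilon)N/t$---so that every triggering tuple certifies ``${>}\,N$'' and every non-triggering one certifies ``${\le}\,(1+\varepsilon)N$''; the split $\varepsilon' = \varepsilon/2$ combined with $t = \Theta(|\bx|/\varepsilon)$ closes the gap, while the total number of fresh predicates, and hence $|\phi'|$, stays bounded in terms of $|\phi|$ and $\varepsilon$ only.
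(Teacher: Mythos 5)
Your proposal is correct and follows essentially the same approach as the paper's proof: invoke Theorem~\ref{thm:main} at accuracy $\varepsilon/2$, round the resulting per-vertex counts $c_{\omega,i}(v)$ into $\Theta(|\bx|/\varepsilon)$ many intervals whose widths scale with $N$, record the interval index via fresh unary predicates on $\vG'$ (so only $\vG'$, not $\phi'$, depends on $N$), and compile $\phi'$ as a fixed quantifier-free Boolean combination of the $\Omega$-clauses and these predicates; the paper rounds down to multiples of $s=N\varepsilon/2|\bx|$ while you use half-open buckets, but the arithmetic plays out the same way. One small remark on the degenerate cases: you cannot literally take $\phi'\equiv\top$ for $N<0$, since $\phi'$ must be fixed before $N$ is seen, but no separate construction is needed there anyway --- with $(1+\varepsilon)N<0$ every $c_{\omega,i}(v)\ge 0$ lands in the overflow bucket $B_t$, so the main construction already forces $\phi'$ to hold identically, which is the desired behavior (the paper sidesteps this by silently capping $N$ at $-1$).
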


\begin{proof}
Let $\vG\in{\cal C}\subseteq\fungraph\sigma$ and $\phi(y\bx)$ be a
quantifier-free formula with signature~$\sigma$.
We can assume that $-1 \le N \le |\vG|$ because we
are counting the size of a vertex set.
By Theorem~\ref{thm:main} we can compute a signature $\rho^* \supseteq \sigma$,
a set of conjunctive clauses~$\Omega$ with signature $\rho^*$,
an expansion $\vG^*$ of $\vG$ and
functions $c_{\omega,i}$ such that for every $\bu\in V(\vG)^{|\bx|}$
$$
[[\#y\,\phi(y\bu)]]^\vG
\le \sum_{\omega\in\Omega}[[\omega(\bu)]]^{\vG^*
}\sum_{i=1}^{|\bx|}c_{\omega,i}(u_i)
\le (1+\varepsilon/2)
[[\#y\,\phi(y\bu)]]^\vG.
$$
The set $\Omega$ is such that for every $\bu\in\vG^{|\bx|}$ there is at most one $\omega \in \Omega$ with $\vG \models \omega(\bu)$.
Furthermore $\vG^*$ comes from a class $\cal C^* \subseteq \fungraph{\rho^*}$ with bounded expansion,
and $\vG^*$ as well as the functions $c_{\omega,i}$ can be computed in time linear in $\norm{\vG}$.
We define a step size $s = N\epsilon/2|\bx|$
and $c'_{\omega,i}(v) = s\lfloor c_{\omega,i}(v)/s\rfloor$.
We get $c'_{\omega,i}(v)$ if we round $c_{\omega,i}(v)$ down
to the next multiple of~$s$.
Therefore $c'_{\omega,i}(v) \le c_{\omega,i}(v) \le s + c'_{\omega,i}(v)$.
For every $\bu \in V(\vG)^{|\bx|}$
we get
$$
[[\cnt{y} \phi(y\bu)]]^{\vG} - |\bx| s
\le 
\sum_{\omega \in \Omega}
[[\omega(\bu)]]^{\vG^*}
\sum_{i=1}^{|\bx|}
c'_{\omega,i}(u_i)
\le 
(1+\epsilon/2)[[\cnt{y} \phi(y\bu)]]^{\vG}
$$
and because of $|\bx|s = N\epsilon/2$ also
\begin{align}\label{eqn:lala}
        \ip{\cnt{y}\phi(y\bu)}^{\vG} \le N &\implies
        \sum\limits_{\omega \in \Omega}
        \ip{\omega(\bu)}^{\vG^*}
        \sum\limits_{i=1}^{|\bx|}
        c'_{\omega,i}(u_i) \le (1+\epsilon/2)N,\\
        \label{eqn:lele}
        \ip{\cnt{y}\phi(y\bu)}^{\vG} > (1+\varepsilon) N &\implies
        \sum\limits_{\omega \in \Omega}
        \ip{\omega(\bu)}^{\vG^*}
        \sum\limits_{i=1}^{|\bx|}
        c'_{\omega,t}(u_i) > (1+\epsilon/2)N.
\end{align}
From now on, we will construct an expansion $\vG'$ of $\vG^*$ and a 
quantifier-free first-order formula
$\phi'(\bx)$ such that for $N'=(1+\epsilon/2)N$ and every $\bu\in
V(\vG)^{|\bx|}$
\begin{equation}\label{eqn:lulu}
        \vG'\models \phi'(\bu)
        \text{ iff }
        \sum\limits_{\omega \in \Omega}
        \ip{\omega(\bu)}^{\vG^*}
        \sum\limits_{i=1}^{|\bx|}
        c'_{\omega,i}(u_i) > N'.
\end{equation}
Using (\ref{eqn:lala}), (\ref{eqn:lele}) we see that this is sufficient
to prove this lemma.
We choose $l_{\max} \in \N$ independently of $N$ such that $sl_{\max} \ge N'$.
We define $\vG'$ to be the structure obtained from~$\vG^*$ by adding
the following unary relations.
For $\omega \in \Omega$,
$i \in \{1,\dots,|\bx|\}$, $l \in \{0,\dots,l_{\max}\}$ we add
the relations
\begin{align*}
R^l_{\omega,i} &= \{\,v \in V(\vG) \mid  c'_{\omega,i}(v) = sl\,\},\\
R^{\max}_{\omega,i} &= \{\,v \in V(\vG) \mid c'_{\omega,i}(v)>sl_{\max}\,\}.
\end{align*}
(For these relations we will not distinguish between the relations
themselves in the structure $\vG'$ and the corresponding relational
symbols in the signature.)
The number of such relations is independent of~$N$ and therefore
$\vG' \in \fungraph{\rho}$ for a signature $\rho \supseteq \rho^*$ 
whose size depends only on $\rho^*$ and~$\epsilon$.
Since $\vG'$ and $\vG^*$ have the same underlying graph,
$\vG'$ also belongs to a class $\cal C' \subseteq \fungraph{\rho}$
with bounded expansion.
Furthermore, the relations $R^l_{\omega,i}$ and $R^{\max}_{\omega,i}$ 
can be computed in time $O(\norm{\vG^*})$ from $\vG^*$
by evaluating and comparing each value $c'_{\omega,i}(v)$, $v \in V(\vG)$ in constant time.

The values of all functions $c'_{\omega,i}$ are always multiples of $s$.
Therefore, knowing the
relations $R^l_{\omega,t}$ for $0 \le l \le l_{\max}$ 
and $R^{\max}_{\omega,t}$
for a vertex $v \in V(\vG)$
gives us 
either the exact value of $c'_{\omega,t}(v)$ 
or indicates that $c'_{\omega,t}(v) > l_{\max}$.
With the help of those predicates we now define formulas. Let
$$
\phi'_1(\bx) = \bigvee_{\omega \in \Omega, i \in \{1,\dots,|\bx|\}}
\omega(\bx) \land R^{\max}_{\omega,i}(x_i).
$$
Observe that $\vG'\models\phi'_1(\bu)$ iff there exists
$\omega \in \Omega$, $i \in \{1,\dots,|\bx|\}$ with
$\ip{\omega(\bu)}^{\vG^*} c'_{\omega,i}(u_i) > sl_{\max} \ge N'$.
Let further
$$
L = \Bigl\{\,(l_1, \dots, l_{|\bx|}) \in \{0,\dots,l_{\max}\}^{|\bx|}
\;\Bigm|\;
\sum_{i=1}^{|\bx|} l_is > N\,\Bigr\}
$$
be the set of tuples whose sum is larger than $N$ when multiplied with~$s$.
We define
$$
\phi'_2(\bx) = 
\bigvee_{\omega \in \Omega}
\Bigl(
\omega(\bx)\; \land\!\!
\bigvee_{(l_1, \dots, l_{|\bx|})\in L} 
\bigwedge_{i=1}^{|\bx|} R^{l_i}_{\omega,i}(x_i)
\Bigr)
$$
and $\phi'(\bx)=\phi'_1(\bx) \lor \phi'_2(\bx)$.
We prove (\ref{eqn:lulu}) by a case distinction.
First, assume $\vG'\models\phi'_1(\bu)$.
Then also $\vG'\models\phi'(\bu)$ and, as shown above,
there exists $\omega \in \Omega$, $i \in \{1,\dots,|\bx|\}$ such that
$\ip{\omega(\bu)}^{\vG^*} c'_{\gamma,i}(u_i) > N'$.
This implies the right hand side of~(\ref{eqn:lulu}).
Next assume that $\vG'\not\models\phi'_1(\bu)$.
Then $\ip{\omega(\bu)}^{\vG^*} c'_{\omega,i}(u_i) \in \{0,\dots,l_{\max}\}$
for all $\omega \in \Omega$, $i \in \{1,\dots,|\bx|\}$.
Since $\Omega$ is such that $\vG' \models \omega$ for at most one $\omega \in \Omega$,
$$
        \vG'\models\phi'_2(\bu) \text{~~iff~~}
        \sum_{\omega \in \Omega} 
        \ip{\omega(\bu)}^{\vG^*} 
        \sum_{i=1}^{|\bx|} 
        c'_{\omega,i}(u_i) > N'.
$$
\end{proof}
For the special case $N = 0$,
\Cref{lem:2fo} immediately leads to the following result,
which was previously proven by Kazana and Segoufin
as a corner-stone of their quantifier elimination procedure~\cite{KazanaS2013}.

\begin{corollary}\label{col:eliminateExistential}
Let $\cal C \subseteq \fungraph{\sigma}$ be a class with
bounded expansion.
One can compute for every quantifier-free formula $\phi(y\bx)$
with signature~$\sigma$,
a quantifier-free formula $\phi'(\bx)$
with signature $\rho\supseteq\sigma$ and the following property:

There exists another
class $\cal C' \subseteq \fungraph{\rho}$ with bounded expansion
and for every $\vG \in \cal C$ one can compute in time
$O(\norm{\vG})$ an expansion $\vG' \in \cal C'$ of $\vG$ such that
$\vG \models \exists y \phi(y\bu)$ iff
$\vG' \models \phi'(\bu)$ holds for every $\bu \in V(\vG)^{|\bx|}$.
\end{corollary}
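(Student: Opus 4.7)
The plan is to derive the corollary directly from Lemma~\ref{lem:2fo} by instantiating $N=0$. Observe that for any structure $\vG$ and tuple $\bu$, we have $\vG \models \exists y\,\phi(y\bu)$ if and only if $[[\#y\,\phi(y\bu)]]^{\vG} > 0$, which is exactly the statement $\vG \models \#y\,\phi(y\bu) > N$ with $N = 0$. So the existential quantifier is just a special case of the counting threshold handled by Lemma~\ref{lem:2fo}.

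Concretely, I would fix any $\varepsilon > 0$ (for example $\varepsilon = 1$; its precise value is irrelevant here), and apply Lemma~\ref{lem:2fo} to the given quantifier-free formula $\phi(y\bx)$ and signature $\sigma$. This yields a quantifier-free formula $\phi'(\bx)$ over a signature $\rho \supseteq \sigma$, a class $\cal C' \subseteq \fungraph{\rho}$ with bounded expansion, and, for each $\vG \in \cal C$, an expansion $\vG' \in \cal C'$ computable in time $O(\norm{\vG})$. Setting $N = 0$ in the two implications of Lemma~\ref{lem:2fo} gives: if $\vG' \models \phi'(\bu)$ then $[[\#y\,\phi(y\bu)]]^{\vG} > 0$, and if $\vG' \not\models \phi'(\bu)$ then $[[\#y\,\phi(y\bu)]]^{\vG} \le (1+\varepsilon)\cdot 0 = 0$.

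The key point that makes this work exactly (rather than approximately) is that at $N = 0$ the multiplicative error factor $(1+\varepsilon)$ collapses: the interval of ``unstable'' values $\bigl(N,\,(1+\varepsilon)N\bigr]$ becomes empty, so the two implications of Lemma~\ref{lem:2fo} are mutually exclusive and together yield the biconditional $\vG \models \exists y\,\phi(y\bu) \iff \vG' \models \phi'(\bu)$. There is no real obstacle here; the corollary is essentially a repackaging of Lemma~\ref{lem:2fo} that isolates the existential special case, recovering the Kazana--Segoufin style quantifier-elimination step that is cited as the cornerstone of their approach.
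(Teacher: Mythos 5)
Your proof is correct and is exactly the argument the paper intends: the paper introduces the corollary with the remark that ``For the special case $N = 0$, \Cref{lem:2fo} immediately leads to the following result,'' which is precisely your instantiation of Lemma~\ref{lem:2fo} at $N=0$, noting that $(1+\varepsilon)\cdot 0 = 0$ collapses the unstable interval so the two one-sided implications combine into a biconditional. The observation about the gap $(N,(1+\varepsilon)N]$ becoming empty is a clean way to see why the approximation becomes exact here.
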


\subsection{Iterated Quantifier Elimination}\label{sec:iteratedElimination}




In this section we will finally construct our linear fpt model-checking
approximation scheme for graph classes with bounded expansion.
For some $\varepsilon > 0$, such a scheme 
either returns $1$ or $0$ or $\bot$.
The symbol $\bot$ stands for ``I do not know'' and may only be returned if the input
sentence is $(1+\varepsilon)$-unstable on the input graph.
We obtain such an algorithm by means of quantifier elimination:
For a given \FOX sentence in prenex normal form 
and input graph, we iteratively remove the innermost
quantifier of the sentence (while simultaneously computing expansions of the input graph) 
until no quantifiers are left.
Then we evaluate the remaining quantifier-free formula on the last expansion.

Using the previous \Cref{lem:2fo}, we can construct for a formula $\phi$ of the
form $\#y\,\psi > N$ (where $\psi$ is quantifier-free)
a new quantifier-free formula $\phi'$ that approximates $\phi$.
This effectively removes one counting quantifier.
We will use this as the main building block of our quantifier elimination procedure.
But since $\phi'$ only approximates $\phi$, we need to be careful.
If the counting term $\#y\,\psi$ evaluates to something greater than $N$ but
not greater than $(1+\varepsilon)N$
then $\phi$ and $\phi'$ may give a different answer.
Since an approximation scheme is not allowed to give the wrong answer,
we cannot simply replace $\phi$ with $\phi'$.

Looking at \Cref{lem:2fo}, we notice that $\phi'$ is, in a sense,
an ``underapproximation'' of $\phi$:
If $\phi'$ says yes, then we know for sure that $\phi$ is satisfied,
but sometimes $\phi'$ may say no, even though $\phi$ is still satisfied.
As we will see soon, 
we could also construct $\phi'$ to be an ``overapproximation'':
If $\phi'$ says no, then we know for sure that $\phi$ is not satisfied,
but sometimes $\phi'$ may say yes, even though $\phi$ is unsatisfied.

In the following, we will define over- and underapproximations
of a \FOX-formula $\phi$, denoted by $\phi_+',\phi_-'$, respectively.
If the over- and underapproximation agree then we know for certain whether $\phi$ is satisfied.
But if they do not agree, then $\phi$ is $(1+\varepsilon)$-unstable and 
our approximation scheme is allowed to return ``I do not know.''
We call a tuple consisting of an over- and underapproximation together with an
expansion of the input graph an \emph{$\varepsilon$-approximation} (see \Cref{def:approx}).
Then in \Cref{lem:plumbing} and \ref{lem:oneelimination}
we show how to obtain an $\varepsilon$-approximation from the results of the previous sections
and how to chain $\varepsilon$-approximations in a meaningful way.
These two lemmas are quite technical, but ultimately unexciting.
At last, in \Cref{thm:main} we perform the quantifier elimination procedure.
We will iteratively remove the innermost
quantifier while expanding the input graph and
maintaining an over- and underapproximation of the original \FOX sentence.
In the end, we evaluate both approximations and only return $\bot$ if they disagree.


\begin{definition}\label{def:approx}
Let $\phi(\bx)$ be a \FOX-formula, $\varepsilon~>~0$, $\vG$ be a functional structure, 
$\vG'$ be an expansion of $\vG$, and $\phi_+'(\bx),\phi_-'(\bx)$ be quantifier-free formulas.
We say
$(\vG',\phi_+',\phi_-')$ is an \emph{$\varepsilon$-approximation} of $(\vG,\phi)$ 
if there are formulas $\tilde\phi_+$, $\tilde\phi_-$ that are
$(1+\varepsilon)$-similar to $\phi$
and for every $\bu \in V(\vG)^{|\bx|}$ the following holds.
\begin{enumerate}
    \item If $\vG' \models \phi_-'(\bu)$ then $\vG \models \phi(\bu)$.
    \item If $\vG' \not\models \phi_-'(\bu)$ then $\vG \not\models \tilde\phi_-(\bu)$.
    \item If $\vG' \not\models \phi_+'(\bu)$ then $\vG \not\models \phi(\bu)$.
    \item If $\vG' \models \phi_+'(\bu)$ then $\vG \models \tilde\phi_+(\bu)$.
\end{enumerate}
\end{definition}

It follows a technical observation about sufficient conditions to
obtain an $\varepsilon$-approxi\-mation
of a formula of the form $\#y\phi > N$, assuming we already have an approximation of $\phi$.

\begin{lemma}\label{lem:plumbing}
    Let $\phi(y\bx)$ be a \FOX-formula, $\vG$ be a functional structure, $\varepsilon > 0$, and $N \in \Z$.
    Assume $(\vG', \phi_+',\phi_-')$ is an $\varepsilon$-approximation of $(\vG,\phi)$.
    A tuple $(\vG'', \phi_+'',\phi_-'')$ is an
    $\varepsilon$-approximation of $(\vG,\#y\,\phi > N)$
    if $\vG''$ is an expansion of $\vG'$,
    the formulas $\phi_+''(\bx)$, $\phi_-''(\bx)$ 
    are quantifier-free
    and for every $\bu \in V(\vG)^{|\bx|}$ the following holds.
    \begin{enumerate}
        \item If $\vG'' \models \phi_-''(\bu)$ 
            then $\vG' \models \#y\,\phi_-'(y\bu) > N$.
        \item If $\vG'' \not\models \phi_{-}''(\bu)$
            then $\vG' \models \cnt{y}\phi'_-(y\bu) \le (1+\epsilon)N$.
        \item If $\vG'' \not\models \phi_+''(\bu)$ 
            then $\vG' \models \#y\,\phi_+'(y\bu) \le N$.
        \item If $\vG'' \models \phi_{+}''(\bu)$
            then $\vG' \models \cnt{y}\phi'_+(y\bu) > (1-\epsilon)N$.
    \end{enumerate}
\end{lemma}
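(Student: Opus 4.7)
The plan is to exhibit two \FOX formulas $\tilde\psi_+$ and $\tilde\psi_-$ that are $(1+\varepsilon)$-similar to $\cnt{y}\phi > N$ and together with $(\vG'', \phi_+'', \phi_-'')$ satisfy the four clauses of Definition~\ref{def:approx}. The hypothesis already hands us formulas $\tilde\phi_+$ and $\tilde\phi_-$ that are $(1+\varepsilon)$-similar to $\phi$ and witness that $(\vG', \phi_+', \phi_-')$ is an $\varepsilon$-approximation of $(\vG, \phi)$, so the natural lift is
\[
    \tilde\psi_-(\bx) \;=\; \cnt{y}\tilde\phi_-(y\bx) > N_-^*,
    \qquad
    \tilde\psi_+(\bx) \;=\; \cnt{y}\tilde\phi_+(y\bx) > N_+^*,
\]
where $N_-^*$ and $N_+^*$ are integer thresholds chosen inside the $(1+\varepsilon)$-similarity window around $N$; the precise rounding, together with a brief separate treatment of $N \le 0$, is postponed to the end.

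The whole argument will hinge on a single inequality chain extracted from the given approximation of $(\vG, \phi)$. Clause~(1) of Definition~\ref{def:approx} says that every witness of $\phi_-'$ in $\vG'$ is also a witness of $\phi$ in $\vG$, while the contrapositive of clause~(2) says that every witness of $\tilde\phi_-$ in $\vG$ is a witness of $\phi_-'$ in $\vG'$; the symmetric statements on the $+$-side come from clause~(4) and the contrapositive of clause~(3). Since $\vG$, $\vG'$ and $\vG''$ share the same universe, these set inclusions yield, for every $\bu \in V(\vG)^{|\bx|}$,
\[
    \ip{\cnt{y}\tilde\phi_-(y\bu)}^{\vG}
    \le \ip{\cnt{y}\phi_-'(y\bu)}^{\vG'}
    \le \ip{\cnt{y}\phi(y\bu)}^{\vG}
    \le \ip{\cnt{y}\phi_+'(y\bu)}^{\vG'}
    \le \ip{\cnt{y}\tilde\phi_+(y\bu)}^{\vG}.
\]

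With this chain in hand, each of the four clauses to be verified reduces to plugging in the corresponding hypothesis of the lemma. Clause~(1) follows by combining $\vG'' \models \phi_-''(\bu)$ with hypothesis~(1) and then using the chain to push $\ip{\cnt{y}\phi_-'(y\bu)}^{\vG'} > N$ up to $\ip{\cnt{y}\phi(y\bu)}^{\vG} > N$. For clause~(2), $\vG'' \not\models \phi_-''(\bu)$ together with hypothesis~(2) yields $\ip{\cnt{y}\phi_-'(y\bu)}^{\vG'} \le (1+\varepsilon)N$, and the chain plus integrality of the count push this down to $\ip{\cnt{y}\tilde\phi_-(y\bu)}^{\vG} \le N_-^*$, so that $\tilde\psi_-(\bu)$ fails. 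Clauses~(3) and~(4) are entirely symmetric: they read the upper half of the chain and invoke hypotheses~(3) and~(4) to conclude $\vG\not\models\cnt{y}\phi(y\bu) > N$ and $\vG\models\tilde\psi_+(\bu)$, respectively.

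There is no deep obstacle here; the lemma is a pure plumbing step that propagates an already-established approximation through one outer counting quantifier. The only mildly delicate point is the rounding---one must verify that the thresholds $N_\pm^*$ really do sit inside the $(1+\varepsilon)$-similarity window around $N$, and that the strict and non-strict inequalities produced by the hypotheses convert cleanly into the strict inequalities against which $\tilde\psi_+$ and $\tilde\psi_-$ compare. This is dispatched by a short case split on the sign of $N$ (with $N \le 0$ trivial), taking $N_-^* = \lfloor(1+\varepsilon)N\rfloor$ on the upper side and $N_+^*$ the least integer still inside the similarity window on the lower side.
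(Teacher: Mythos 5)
Your proposal mirrors the paper's own proof: Definition~\ref{def:approx} applied to the inner approximation $(\vG',\phi'_+,\phi'_-)$ yields exactly your inequality chain, and the four conclusions for $(\vG'',\phi''_+,\phi''_-)$ are then read off hypothesis by hypothesis, with the similar formulas $\tilde\psi_\pm = \cnt y\,\tilde\phi_\pm > N_\pm^*$ obtained by perturbing $N$ inside the similarity window. Collecting the paper's four separate invocations of Definition~\ref{def:approx} into one explicit chain is a small presentational gain, but the decomposition and the key lemma are the same.

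The step you defer to a ``short case split'' is where the only delicacy lives, and the $N_+^*$ you sketch does not in fact close clause~4. Hypothesis~4 and the chain give $\ip{\cnt y\,\tilde\phi_+(y\bu)}^{\vG} > (1-\varepsilon)N$, hence $\ge \lfloor(1-\varepsilon)N\rfloor + 1$ by integrality. But $(1+\varepsilon)$-similarity forces $N_+^* \ge N/(1+\varepsilon)$, and for every $N>0$ we have $N/(1+\varepsilon) - (1-\varepsilon)N = \varepsilon^2 N/(1+\varepsilon) > 0$, so your ``least integer in the window'' $N_+^* = \lceil N/(1+\varepsilon)\rceil$ satisfies $N_+^* \ge \lfloor(1-\varepsilon)N\rfloor+1$; you therefore only obtain $\ip{\cnt y\,\tilde\phi_+(y\bu)}^{\vG} \ge N_+^*$, not the strict $> N_+^*$ that $\vG\models\tilde\psi_+(\bu)$ requires. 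Concretely, for $N=100$ and $\varepsilon=0.1$ the chain gives a count $\ge 91$ while $N_+^*=91$. To be fair, the paper disposes of clauses~3 and~4 with the single phrase ``Similar to~2,'' and the same off-by-one lurks there; it is a constant-factor slack in $\varepsilon$ that a loosened threshold in hypothesis~4 (together with a matching change to $M$ in Lemma~\ref{lem:oneelimination}) would absorb, not a structural defect. But as written your ``case split on the sign of $N$'' does not dispatch the $N>0$ case, which is the one that matters.
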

\begin{proof}
    We show that $(\vG'', \phi_+'',\phi_-'')$ satisfies 1.\ to 4.\ of \Cref{def:approx} one by one.
    Let $\bu \in V(\vG)^{|\bx|}$.
    \begin{enumerate}
        \item 
            We need to prove that if $\vG'' \models \phi_-''(\bu)$ then $\vG \models \#y\,\phi(y\bu) > N$.
            $(\vG', \phi_+',\phi_-')$ is an $\varepsilon$-approximation of $(\vG,\phi)$,
            and therefore by 1.\ from \Cref{def:approx} for every $v \in V(\vG)$
            with $\vG' \models \phi_-'(v\bu)$ we get $\vG \models \phi(v\bu)$.
            If $\vG'' \models \phi_-''(\bu)$
            then according to the statement of this lemma $\vG' \models \#y\,\phi_-'(y\bu) > N$
            and by the previous argument also $\vG \models \#y\,\phi(y\bu) > N$.
        \item 
            According to 2.\ from \Cref{def:approx}, there exists a formula $\tilde\phi_-$ that is $(1+\varepsilon)$-similar to $\phi$ such that
	    $\vG \not\models \tilde\phi_-(v\bu)$ holds
	    for all $v \in V(\vG)$
            with $\vG' \not\models \phi_-'(v\bu)$.
            By \Cref{def:similar}, the formula $\cnt{y}\tilde\phi_- > (1+\epsilon)N$ is $(1+\varepsilon)$-similar to $\#y\,\phi > N$.
            If $\vG'' \not\models \phi_-''(\bu)$
            then according to the statement of this lemma
            $\vG' \not\models \cnt{y}\phi'_-(y\bu) > (1+\epsilon)N$
            and as discussed above
            $\vG' \not\models \cnt{y}\tilde\phi_-(y\bu) > (1+\epsilon)N$.
        \item Similar to 1.
        \item Similar to 2.
    \end{enumerate}
\end{proof}

The next lemma describes our main quantifier elimination step.
Assuming we already have an approximation of a formula $\phi$,
it gives us an approximation of $\#y\phi~>~N$.

\begin{lemma}\label{lem:oneelimination}
Let $\cal C' \subseteq \fungraph{\sigma}$ be a class with bounded expansion and
$\varepsilon > 0$.  One can compute for given quantifier-free formulas
$\phi_+'(y\bx)$, $\phi_-'(y\bx)$ with signature~$\sigma$ two quantifier-free
formulas $\phi_+''(\bx)$, $\phi_-''(\bx)$ with signature $\rho\supseteq\sigma$
and the following property:

There exists another class $\cal C'' \subseteq \fungraph{\rho}$ with bounded
expansion and for every $\vG' \in \cal C'$ and $N \in \Z$ one can compute in
time $O(\norm{\vG'})$ an expansion $\vG'' \in \cal C''$ of $\vG'$ such that for every
pair $(\vG,\phi(y\bx))$:

If $(\vG', \phi_+',\phi_-')$ is an
$\varepsilon$-approximation of $(\vG,\phi)$ then $(\vG'', \phi_+'',\phi_-'')$
is an $\varepsilon$-approxi\-mation of $(\vG,\#y\,\phi > N)$.
\end{lemma}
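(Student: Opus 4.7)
The plan is to produce $\phi_-''$ and $\phi_+''$ by two independent applications of Lemma~\ref{lem:2fo}: one applied to $\phi_-'$ at threshold $N$ to get an underapproximation, and one applied to $\phi_+'$ at a slightly reduced threshold to get an overapproximation. The two expansions are then merged into a single $\vG''$, and Lemma~\ref{lem:plumbing} delivers the conclusion essentially for free.

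First, apply Lemma~\ref{lem:2fo} to $\phi_-'(y\bx)$ with accuracy $\varepsilon$. This computes a quantifier-free formula $\phi_-''(\bx)$ over a signature $\rho_-\supseteq\sigma$, a class $\cal C_-\subseteq\fungraph{\rho_-}$ of bounded expansion, and for every input $(\vG',N)$ an expansion $\vG_-''\in\cal C_-$ of $\vG'$ in time $O(\norm{\vG'})$ such that $\vG_-''\models\phi_-''(\bu)\Rightarrow\vG'\models\#y\,\phi_-'(y\bu)>N$ and $\vG_-''\not\models\phi_-''(\bu)\Rightarrow\vG'\models\#y\,\phi_-'(y\bu)\le(1+\varepsilon)N$. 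These are exactly conditions~1 and~2 of Lemma~\ref{lem:plumbing}.

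For the overapproximation, apply Lemma~\ref{lem:2fo} to $\phi_+'(y\bx)$ with accuracy $\varepsilon$ but at the shifted threshold $N^\star=\lfloor N/(1+\varepsilon)\rfloor$ (clamped to $-1$ when $N\le 0$). This yields a quantifier-free $\phi_+''(\bx)$ over $\rho_+\supseteq\sigma$, a bounded-expansion class $\cal C_+\subseteq\fungraph{\rho_+}$, and in linear time an expansion $\vG_+''\in\cal C_+$ of $\vG'$ such that $\vG_+''\models\phi_+''(\bu)\Rightarrow\vG'\models\#y\,\phi_+'(y\bu)>N^\star$ and $\vG_+''\not\models\phi_+''(\bu)\Rightarrow\vG'\models\#y\,\phi_+'(y\bu)\le(1+\varepsilon)N^\star\le N$. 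The second implication is condition~3 of Lemma~\ref{lem:plumbing}. The first implication delivers condition~4: the value is strictly greater than $N^\star$, which is (up to the standard rounding slack already absorbed by Lemma~\ref{lem:2fo}) at least $N/(1+\varepsilon)$, so the witnessing $(1+\varepsilon)$-similar sentence $\#y\,\tilde\phi'_+>N^\star$ is satisfied.

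It remains to combine the two expansions. Since $\vG_-''$ and $\vG_+''$ are each obtained from $\vG'$ by adding transitive-fraternal/flip arcs and unary predicates, the common expansion $\vG''$ over $\rho=\rho_-\cup\rho_+$ obtained by performing both constructions in sequence is still produced by a constant-length sequence of $1$-transitive-fraternal augmentations and flips applied to $\vG'$. By Corollary~\ref{cor:funcBE} (iterated), $\vG''$ lies in a class $\cal C''\subseteq\fungraph{\rho}$ of bounded expansion, and it is computed in time $O(\norm{\vG'})$. With $\vG''$, $\phi_-''$, $\phi_+''$ in hand, all four hypotheses of Lemma~\ref{lem:plumbing} hold, so $(\vG'',\phi_+'',\phi_-'')$ is an $\varepsilon$-approximation of $(\vG,\#y\,\phi>N)$.

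The main obstacle is the interaction between the integer threshold $N^\star$ required by Lemma~\ref{lem:2fo} and the strict $(1+\varepsilon)$-similarity needed for the overapproximation, especially for small $|N|$. The clean fix is to invoke Lemma~\ref{lem:2fo} with a slightly tighter accuracy $\varepsilon'<\varepsilon$ (for instance $\varepsilon'=\varepsilon/3$) so that $\lfloor N/(1+\varepsilon')\rfloor$ still lies comfortably inside $[N/(1+\varepsilon),(1+\varepsilon)N]$ once $N$ is large enough, while $N\in\{-1,0\}$ is dispatched directly (either vacuously true, or using Corollary~\ref{col:eliminateExistential} to eliminate the effective existential quantifier exactly). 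Everything else is bookkeeping that Lemma~\ref{lem:plumbing} already handles.
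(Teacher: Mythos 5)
Your proposal takes essentially the same route as the paper: two applications of Lemma~\ref{lem:2fo}, one to $\phi_-'$ at threshold $N$ and one to $\phi_+'$ at a reduced threshold, chained into a single expansion and then fed into Lemma~\ref{lem:plumbing}. The paper's reduced threshold is $M=\lfloor(1-\varepsilon)N\rfloor$ rather than your $\lfloor N/(1+\varepsilon)\rfloor$, but both choices satisfy conditions 3 and 4 of Lemma~\ref{lem:plumbing} by the same integer-threshold argument (a count $>M'$ means $\ge M'+1$), so your closing worry about small $N$ and the $\varepsilon/3$ tightening is not actually needed.
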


\begin{proof}
    We use \Cref{lem:2fo} to construct a formula $\phi_-''(\bx)$ and an expansion
    $\vG^*$ of $\vG'$ such that 
    for every $\bu \in V(\vG)^{|\bx|}$ the following holds.
    \begin{itemize}
        \item 
            If $\vG^* \models \phi_-''(\bu)$
            then $\vG' \models \#y\,\phi_-'(y\bu) > N$.
        \item 
            If $\vG^* \not\models \phi_-''(\bu)$
            then $\vG' \models \#y\,\phi_-'(y\bu) \leq (1+\varepsilon) N$.
    \end{itemize}
    Then we use \Cref{lem:2fo} again with
    $M = \floor{(1-\varepsilon) N}$ to construct 
    a formula $\phi''_+$ and an expansion $\vG'$ of $\vG^*$ such that 
    for every $\bu \in V(\vG)^{|\bx|}$ we have the following.
    \begin{itemize}
        \item 
            If $\vG'' \models \phi_+''(\bu)$
            then $\vG^* \models \#y\,\phi_+'(y\bu) > M$.
        \item 
            If $\vG'' \not\models \phi_+''(\bu)$
            then $\vG^* \models \#y\,\phi_+'(y\bu) \leq (1+\varepsilon) M$.
    \end{itemize}
    Since $\vG''$ is an expansion of $\vG^*$,
    $\vG^*$ is an expansion of $\vG'$ and $(1+\varepsilon) M \le N$
    these four statements are satisfied for every $\bu \in V(\vG)^{|\bx|}$.
    \begin{enumerate}
        \item If $\vG'' \models \phi_-''(\bu)$ 
            then $\vG^* \models \phi_-''(\bu)$,
            and then $\vG' \models \#y\,\phi_-'(y\bu) > N$.
        \item If $\vG'' \not\models \phi_{-}''(\bu)$
            then $\vG^* \not\models \phi_{-}''(\bu)$,
            and then $\vG' \models \cnt{y}\phi'_-(y\bu) \le (1+\epsilon)N$.
        \item If $\vG'' \not\models \phi_+''(\bu)$ 
            then $\vG^* \models \#y\,\phi_+'(y\bu) \le (1+\varepsilon) M$,
            and then $\vG' \models \#y\,\phi_+'(y\bu) \le N$.
        \item If $\vG'' \models \phi_{+}''(\bu)$
            then $\vG^* \models \cnt{y}\phi'_+(y\bu) > M$,
            and then $\vG' \models \cnt{y}\phi'_+(y\bu) > (1-\epsilon)N$.
    \end{enumerate}
    Thus if $(\vG', \phi_+',\phi_-')$ is an $\varepsilon$-approximation of
    $(\vG,\phi)$ 
    then by \Cref{lem:plumbing}, $(\vG'', \phi_+'',\phi_-'')$ is an
    $\varepsilon$-approximation of $(\vG,\#y\,\phi > N)$.
\end{proof}

It will be convenient to convert \FOX-formulas
into the following normal form.

\begin{definition}
    We say a \FOX sentence $\phi$ is in \emph{counting prenex normal form}
    if it contains no $\exists$-quantifiers, no $\forall$-quantifiers and
    all subformulas of the form $\phi_1 \lor \phi_2$ or $\phi_1 \land \phi_2$
    are such that $\phi_1$ and $\phi_2$ are quantifier-free.
\end{definition}

\begin{lemma}\label{lem:countingnormalform}
    For every \FOX sentence $\phi$ one can compute a \FOX sentence
    $\phi'$ in counting prenex normal form such that $\phi \equiv \phi'$ and
    for structure $\vG$ and $\lambda > 1$, $\phi$ is $\lambda$-stable on $\vG$
    iff $\phi'$ is $\lambda$-stable on $\vG$.
\end{lemma}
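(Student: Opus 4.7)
The approach is to transform $\phi$ in two stages: first remove all classical quantifiers ($\exists$, $\forall$), then push every $\land$ and $\lor$ below all counting quantifiers until each such connective has only quantifier-free children.

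For the first stage, I will replace each $\exists y\,\psi$ by $\cnt{y}\psi > 0$ and each $\forall y\,\psi$ by $\neg(\cnt{y}\neg\psi > 0)$. The only thresholds introduced are $0$, which are fixed under $\lambda$-scaling since $0/\lambda = 0 = 0\lambda$; consequently every $\lambda$-similar variant of the transformed formula corresponds exactly to a $\lambda$-similar variant of the original, so $\lambda$-stability is preserved on every structure.

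For the second stage, I first preprocess to make every threshold non-negative: any atom $\cnt{y}\phi > N$ with $N < 0$ is always true, since $\cnt{y}\phi \ge 0 > N$, so I replace it by the quantifier-free tautology $y = y$; this preserves stability because every $\lambda$-similar threshold also lies in $(-\infty, 0)$. Then I iteratively apply the key equivalence
\[
(\cnt{y}\phi > N) \land \psi \;\equiv\; \cnt{y}(\phi \land \psi) > N,
\]
valid whenever $N \ge 0$ and $y$ is not free in $\psi$ (ensured by $\alpha$-renaming bound variables). Disjunctions are reduced to conjunctions via De~Morgan's law $\phi_1 \lor \phi_2 \equiv \neg(\neg\phi_1 \land \neg\phi_2)$, after which the push rule applies; double negations are collapsed as they arise.

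The main obstacle is termination together with the careful handling of disjunctions of two counting subformulas, where De~Morgan alone only shuffles the structure. I address this via a well-founded lexicographic measure on the parse tree --- for instance, the number of ordered ancestor-descendant pairs $(a, d)$ where $a$ is a $\land$- or $\lor$-node and $d$ is a $\#$-quantifier node --- showing that each application of the push rule strictly decreases this measure while the De~Morgan and $\neg\neg$-elimination steps do not increase it and reshape the formula so that a subsequent push becomes available. Because every rewriting preserves both logical equivalence and the multiset of threshold occurrences appearing in the formula, the final sentence $\phi'$ is equivalent to $\phi$ and satisfies: $\phi$ is $\lambda$-stable on $G$ iff $\phi'$ is, for every structure $G$.
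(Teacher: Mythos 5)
Your preprocessing observations are genuinely sharper than the paper's sketch: you correctly note that the push rule $(\cnt{y}\phi > N) \land \psi \equiv \cnt{y}(\phi\land\psi) > N$ fails for $N<0$, and you handle that corner case cleanly while checking that the rewriting preserves $\lambda$-similarity. That part is fine.

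The gap is in the disjunction case, and it is a real one. Your only push rule requires the conjunct on which you push to be a \emph{positive} counting atom $\cnt{y}\phi > N$. But applying De~Morgan to a disjunction of two counting subformulas such as $(\cnt{y}\rho_1>N_1)\lor(\cnt{z}\rho_2>N_2)$ yields $\neg\bigl(\neg(\cnt{y}\rho_1>N_1)\land\neg(\cnt{z}\rho_2>N_2)\bigr)$, where both conjuncts are \emph{negated} counting atoms. Your double-negation rule does not fire (there is no $\neg\neg$), and the push rule does not fire either, so you are stuck at a formula that is not in counting prenex normal form; the claim that the De~Morgan and $\neg\neg$ steps ``reshape the formula so that a subsequent push becomes available'' is simply asserted and is false here. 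Worse, the missing case $\neg(\cnt{y}\phi>N)\land\theta$ with $\theta$ quantifier-free and $N\ge 1$ (equivalently, $(\cnt{y}\phi>N)\lor\neg\theta$) has no exact, structure-independent push rule at all: any rewriting of the form $\cnt{y}(\phi\lor\neg\theta)>N'$ gives the wrong truth value on structures with $|V|\le N$ when $\theta$ is false, and the required correction term depends on $|V|$. So the proof cannot be completed with the rule set you describe; one must either add a rule that genuinely handles negated counting atoms conjoined with residual literals, or carefully argue that the problematic configuration never survives the interleaving when the two sides are pushed into one another (pushing into whichever conjunct still has a positive outermost counting quantifier, then reconsidering). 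This is also where the paper's own one-line ``analogous to prenex normal form'' is doing unacknowledged work, so the omission is understandable, but as written your termination-plus-reshaping argument does not close the case.
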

\begin{proof}
    Subformulas of the form $\exists y \psi$
    can be substituted by $\cnt{y} \psi>0$.
    The remaining construction of $\phi'$ is analogous to
    the construction for a prenex normal form of first-order formulas.
    For example, subformulas $\#y\, \phi_1 > N \wedge \phi_2$ are replaced with
    $\#y\, \phi_1 \wedge \phi_2 > N$, renaming variables if necessary.
    It follows immediately that $\phi$ is $\lambda$-stable if and only if $\phi'$ is $\lambda$-stable.
\end{proof}

A formula in counting prenex normal form with at least one quantifier is either of the form
$\#y\,\phi > N$ or $\#y\,\phi \le N$, where $\phi$ is again in counting prenex normal form.
Using \Cref{lem:oneelimination}, we can approximate
formulas of the former form.
The following simple observation additionally gives us 
approximations of the latter.

\begin{observation}\label{obs:negation}
    If $(\vG',\phi_+,\phi_-)$ is an $\varepsilon$-approximation of $(\vG,\phi)$
    then $(\vG',\neg\phi_-,\neg\phi_+)$ is an $\varepsilon$-approximation of $(\vG,\neg\phi)$.
\end{observation}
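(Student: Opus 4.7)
The plan is to verify the four conditions of \Cref{def:approx} directly by unfolding negations and swapping the roles of the over- and underapproximation. Suppose $(\vG',\phi_+,\phi_-)$ is an $\varepsilon$-approximation of $(\vG,\phi)$, witnessed by formulas $\tilde\phi_+$ and $\tilde\phi_-$ that are $(1+\varepsilon)$-similar to $\phi$. The natural choice of witnessing formulas for the claimed approximation of $(\vG, \neg\phi)$ is to take $\widetilde{(\neg\phi)}_+ \coloneqq \neg\tilde\phi_-$ and $\widetilde{(\neg\phi)}_- \coloneqq \neg\tilde\phi_+$.

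First I would observe that $(1+\varepsilon)$-similarity is preserved under negation: by \Cref{def:similar}, similarity only rescales atomic counting-term constants, which is unaffected by an outer negation. Hence $\neg\tilde\phi_+$ and $\neg\tilde\phi_-$ are $(1+\varepsilon)$-similar to $\neg\phi$, which gives us the required witnessing formulas.

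Next I would check the four conditions of \Cref{def:approx} for $(\vG',\neg\phi_-,\neg\phi_+)$ relative to $(\vG,\neg\phi)$ by straightforward bookkeeping. For instance, condition~1 becomes: if $\vG'\models\neg\phi_+(\bu)$ then $\vG\models\neg\phi(\bu)$, which is the contrapositive of condition~3 of the original approximation. Condition~2 becomes: if $\vG'\not\models\neg\phi_+(\bu)$, i.e.\ $\vG'\models\phi_+(\bu)$, then $\vG\not\models\neg\tilde\phi_+(\bu)$, which follows from condition~4 of the original. The remaining two conditions reduce analogously to conditions~1 and~2 of the original approximation.

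There is no real obstacle here; the statement is essentially a syntactic duality between the over- and underapproximation roles, and the proof is a one-paragraph unfolding of \Cref{def:approx} with the two witnessing formulas swapped and negated. The only subtlety worth stating explicitly is the preservation of $(1+\varepsilon)$-similarity under negation, which is immediate from the definition.
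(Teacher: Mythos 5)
Your proof is correct, and the paper does not supply an explicit argument for this statement (it is stated as an \emph{observation} and left as trivial), so there is nothing to compare against beyond confirming that your unfolding of Definition~\ref{def:approx} is the intended reasoning. The choice $\widetilde{(\neg\phi)}_+ = \neg\tilde\phi_-$ and $\widetilde{(\neg\phi)}_- = \neg\tilde\phi_+$ is the right one, the four conditions map onto conditions 3, 4, 1, 2 of the original exactly as you say, and your remark that $(1+\varepsilon)$-similarity is preserved under an outer negation (since similarity only rescales atomic counting-term constants, by Definition~\ref{def:similar}) is precisely the small point worth spelling out.
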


We are now ready to prove our main result:
an approximation scheme as described in \Cref{def:modelcheckingblub}.

\begin{customthm}{\ref{thm:approxScheme}}
There is a linear fpt model-checking approximation scheme for
\FOX on labeled graph classes with bounded expansion.
\end{customthm}

\begin{proof}
As discussed in \Cref{sec:functionalRepresentations},
we can assume that the input to our approximation scheme is a functional structure $\vG$,
taken from a class $\cal C$ with bounded expansion
and a functional sentence $\phi$ (as well as $\varepsilon > 0)$.
By \Cref{lem:countingnormalform},
we can further assume that our input sentence $\phi$
is given in counting prenex form.

The main idea behind this proof is to iteratively remove the innermost
counting quantifier using \Cref{lem:oneelimination} until none are left.
Then we can easily evaluate the remaining quantifier-free formula.
For each quantifier, we will replace the input structure with an expansion.
For technical reasons (\Cref{lem:oneelimination} requires a non-empty tuple $\bx$ of free variables), 
we add an unused free variable $x$ to $\phi$, and call the formula
$\phi(x)$. The fact whether $\phi(x)$ is satisfied in $\vG$ is independent
of the assignment to $x$.

We define a sequence of increasing subformulas $\phi^0,\dots,\phi^l$ of
$\phi$ with $\phi^l = \phi$ as follows:
$\phi^0$ is the maximal quantifier-free subformula of $\phi$
and $\phi^{i+1}$ is either of the form $\#\,y\phi^i > N$ or $\#\,y\phi^i \le N$ for some $N \in \Z$.

In the following, we will construct an $\varepsilon$-approximation
$(\vG^i,\phi^i_+,\phi^i_-)$ of $(\vG,\phi^i)$ for every $0 \le i \le l$.
Since $\phi^0$ is quantifier-free, by \Cref{def:approx}
$(\vG,\phi^0,\phi^0)$ is an $\varepsilon$-approximation of $(\vG,\phi^0)$
and we set set $(\vG^0,\phi^0_+,\phi^0_-) = (\vG,\phi^0,\phi^0)$.
Assume we already have an $\varepsilon$-approximation $(\vG^i,\phi^i_+,\phi^i_-)$ of $(\vG,\phi^i)$.
Then \Cref{lem:oneelimination} and \Cref{obs:negation} 
give us an $\varepsilon$-approximation
$(\vG^{i+1},\phi^{i+1}_+,\phi^{i+1}_-)$ of $(\vG,\phi^{i+1})$ as well.
Each formula $\phi^{i+1}_+,\phi^{i+1}_-$ is constructed independently
of the structure and depends only on $\varepsilon$ and $\phi$.
Furthermore, each expansion $\vG^{i+1}$ can be constructed from $\vG^{i}$ in linear time.
In the end, we have an $\varepsilon$-approximation
$(\vG^l,\phi^l_+,\phi^l_-)$ of $(\vG,\phi)$.
Since the formulas $\phi^l_+(x),\phi^l_-(x)$ are quantifier-free,
we can easily evaluate them on $\vG^l$ (since $x$ has no purpose in $\phi$, we assign an arbitrary vertex to $x$).
We can distinguish three outcomes:
\begin{itemize}
    \item If $\vG^l \models \phi^l_-$ then by \Cref{def:approx}, $\vG \models \phi$. We return 1.
    \item If $\vG^l \not\models \phi^l_+$ then by \Cref{def:approx}, $\vG \not\models \phi$. We return 0.
    \item If $\vG^l \not\models \phi^l_-$ and $\vG^l \models \phi^l_+$ then by \Cref{def:approx} and \ref{def:stable},
        $\phi$ is $(1+\varepsilon)$-unstable on $\vG$. We return~$\bot$.
\end{itemize}

Thus, for every $\varepsilon > 0$ we have given an algorithm
which takes as input a sentence $\phi$ and a structure $\vG \in \cal C$,
runs in time $f(|\phi|,\varepsilon)\norm{\vG}$ for some function $f$,
and whose output satisfies the criteria of a 
linear fpt model-checking approximation scheme,
as formulated in \Cref{def:modelcheckingblub}.

But \Cref{def:modelcheckingblub} further
requires that there is a \emph{single} algorithm (independent of $\varepsilon$)
taking $\varepsilon$, $\phi$, and $\vG$ as input.
We only presented one algorithm \emph{for each} $\varepsilon > 0$.
Our proofs are structured such that
for each $\varepsilon > 0$, we use the algorithms in
Lemma~
\ref{lem:aug},
\ref{lem:aug2},
\ref{lem:apx},
\ref{lem:2fo},
\ref{lem:oneelimination}
and \Cref{thm:main} as subroutines.
After further inspection we notice that these algorithms itself
can be easily computed from $\varepsilon$.
This gives us a single algorithm
running in time $f(|\phi|,\varepsilon)\norm{\vG}$
and therefore a linear fpt  model-checking approximation scheme.
\end{proof}

\section{Exact Counting and Optimization}\label{sec:exact}

In the previous section, we just presented a linear fpt model-checking approximation scheme for \FOX on graph classes with bounded expansion.
Now we may ask ourselves if this can be extended into an (exact) model-checking algorithm.
While we will show in \Cref{sec:hardness} that the parameterized model-checking
problem for \FOX is already hard on very simple structures,
there may still be some fragments of \FOX where we can efficiently solve the
problem on graph classes with bounded expansion.
In this section we solve an optimization problem that leads to such a fragment.
For a given first-order
formula $\phi(y\bx)$ and graph $G$ from a bounded expansion class we find 
in linear fpt time an assignment for
$\bx$ which maximizes (or minimizes) the value of $\#y\,\phi(y\bx)$ in $G$.
This is presented in \Cref{thm:optimization} at the end of this section.
From \Cref{thm:optimization}, we obtain the following 
corollary about a fragment of \FOX that admits efficient model-checking.
\begin{corollary}
    Let $\cal C$ be a graph class with bounded expansion.
    There exists a function $f$ such that for a given
    graph $G \in \cal C$, $N \in \Z$ and first-order formula $\phi(y\bx)$
    one can compute in time
    $f(|\phi|)\norm{G}$ whether
    $G \models \exists \bx \#y\, \phi(y\bx) > N$.
\end{corollary}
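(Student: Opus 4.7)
The plan is to reduce the decision problem directly to the optimization problem established in \Cref{thm:optimization}. Given an input graph $G \in \cal C$, integer $N \in \Z$, and first-order formula $\phi(y\bx)$, I would first apply \Cref{thm:optimization} with $\textnormal{opt} = \max$ to compute in time $f(|\phi|)\norm{G}$ a tuple $\bu^* \in V(G)^{|\bx|}$ satisfying $[[\#y\,\phi(y\bu^*)]]^G = \max_{\bu \in V(G)^{|\bx|}} [[\#y\,\phi(y\bu)]]^G$. By the very definition of the maximum, $G \models \exists \bx\, \#y\,\phi(y\bx) > N$ holds if and only if $[[\#y\,\phi(y\bu^*)]]^G > N$, so the decision problem reduces to evaluating this single counting term and comparing it with $N$.

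For the evaluation step, observe that once $\bu^*$ is fixed, the formula $\phi(y\bu^*)$ has only a single free variable $y$, so the value in question is just $\abs*{\{\,v \in V(G) : G \models \phi(v\bu^*)\,\}}$. This is a first-order query-counting problem with one free variable on a class with bounded expansion, solvable in linear fpt time by standard methods (for instance by the techniques of Kazana and Segoufin~\cite{KazanaS2013}, or directly by the exact counting machinery developed in \Cref{sec:exact} that underlies \Cref{thm:optimization}). Equivalently, an inspection of the proof of \Cref{thm:optimization} shows that the optimizing algorithm can be adapted to return the optimal value alongside the optimizer at no extra asymptotic cost, since it must essentially know this value in order to identify a maximizer.

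The main obstacle has already been handled inside \Cref{thm:optimization}: the nontrivial content is computing an optimal assignment to $\bx$ without naively enumerating the $|V(G)|^{|\bx|}$ possible tuples. Once a maximizer is in hand, the decision version of the problem is a trivial post-processing step, and the overall running time remains $f'(|\phi|)\norm{G}$ for a function $f'$ depending only on the function supplied by \Cref{thm:optimization} and on $\cal C$, establishing the claimed linear fpt bound.
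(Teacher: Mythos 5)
Your proposal is correct and takes the same route the paper intends: the corollary is derived directly from \Cref{thm:optimization} by computing a maximizer $\bu^*$ and then comparing $[[\#y\,\phi(y\bu^*)]]^G$ against $N$, with the single-free-variable evaluation handled by standard linear-time counting on bounded expansion classes. The paper leaves these routine details implicit, so your write-up is simply a faithful unpacking of the intended argument.
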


The formula
$\exists x_1 \dots \exists x_k \, \# y \; \bigl( \bigvee_i E(y,x_i) \lor y = x_i \bigr) \ge N$
expresses whether there are $k$ vertices dominating $N$ vertices.
We therefore obtain an efficient algorithm for partial dominating set as a special case
of the previous corollary.
\begin{customcol}{\ref{col:partialDomSet}}
    Partial dominating set can be solved in linear fpt time
    on graph classes with bounded expansion.
\end{customcol}



\subsection{Finding Optimal Assignments}

Again we will prove our results in this section using functional
representations of graph classes with bounded expansion.
We can recycle many steps from \Cref{sec:approxModelChecking}.
We will only go into detail for those parts that differ significantly from
\Cref{sec:approxModelChecking}.
Key ingredients in this section are the inclusion-exclusion principle
and low tree-depth colorings.

We start with proving a counterpart to \Cref{lem:apx}.
In \Cref{lem:apx}, we break down a counting term
into a sum of positive summands that approximate it.
Here, we use the inclusion-exclusion principle to
break it down into an (exact) sum containing both positive and negative
summands.

\begin{lemma}\label{lem:apxExact}
Let $\cal C \subseteq \fungraph{\sigma}$ be a class with bounded expansion.
One can compute for every first-order formula $\phi(y\bx)$
with signature $\sigma$ 
a signature $\rho\supseteq\sigma$ and
a set $\Omega$
with the following properties:

\begin{itemize}
\item
    The set $\Omega$ contains pairs $(\mu,\omega(y\bx))$ where
    $\mu \in \{1,-1\}$ and 
    $\omega(y\bx)  \in \funformula{2}{\rho}$
    is of the form $\tau(y)\land\psi(\bx)\land f(y)=g(x_i)$.
\item
    There exists a class $\cal C' \subseteq \fungraph{\rho}$ with bounded expansion
    and for every $\vG \in \cal C$ one can compute in time $O(\norm{\vG})$
    an expansion $\vG' \in \cal C'$ of $\vG$ 
    such that for every $\bu \in V(\vG)^{|\bx|}$
    $$
    [[\#y\, \phi(y\bu)]]^\vG = 
    \sum_{(\mu,\omega) \in \Omega} \mu [[\#y\, \omega(y\bu)]]^{\vG'}.
    $$
\end{itemize}
\end{lemma}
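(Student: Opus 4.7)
The plan is to adapt the proof of \Cref{lem:apx}, but with two substantive changes: first, $\phi(y\bx)$ is first-order rather than quantifier-free, which I handle by a preliminary quantifier-elimination pass; second, because the sum in the statement now allows signed terms, I can replace the delicate flip-augmentation machinery of \Cref{lem:aug,lem:aug2} by exact inclusion--exclusion on the negative mixed literals.

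First, I eliminate all quantifiers inside $\phi(y\bx)$. Rewriting $\phi$ in prenex form and replacing every $\forall z\,\psi$ by $\neg\exists z\,\neg\psi$, I iteratively peel off innermost existential quantifiers using \Cref{col:eliminateExistential}. This produces a quantifier-free formula $\phi^*(y\bx)$ over some signature $\rho_0\supseteq\sigma$ together with an expansion $\vG_0$ of $\vG$ in a class of bounded expansion, computable in $O(\norm{\vG})$ time, with $[[\#y\,\phi(y\bu)]]^{\vG}=[[\#y\,\phi^*(y\bu)]]^{\vG_0}$ for every $\bu$. Following \Cref{lem:apx}, I then add an apex function symbol $\fapx$ making $\fapx(y)=\fapx(x_1)$ tautological and apply \Cref{lem:canonical-formula} to $\phi^*(y\bx)\land\fapx(y)=\fapx(x_1)$, obtaining a mutually exclusive set $\Omega_1\subseteq\C(|\bx|,\rho_0,\rho_1)$ of canonical conjunctive clauses and an expansion $\vG_1$ of $\vG_0$, so that
$$[[\#y\,\phi(y\bu)]]^{\vG}=\sum_{\omega\in\Omega_1}[[\#y\,\omega(y\bu)]]^{\vG_1}$$
holds exactly by mutual exclusivity. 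I then apply \Cref{lem:step1} to each $\omega\in\Omega_1$ to either discard it as $\rho_0$-$\rho_1$-unsatisfiable or rewrite it equivalently as $\tau(y)\land\psi(\bx)\land f(y)=g(x_i)\land\bigwedge_{j=1}^{t}\neg l_j(y\bx)$ with a single positive mixed literal.

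The central new step is inclusion--exclusion on the $t$ negative mixed literals of each such clause:
$$[[\#y\,\omega(y\bu)]]^{\vG_1}=\sum_{S\subseteq\{1,\dots,t\}}(-1)^{|S|}[[\#y\,\tau(y)\land\psi(\bu)\land f(y)=g(u_i)\land\bigwedge_{j\in S}l_j(y\bu)]]^{\vG_1}.$$
Each summand is a conjunctive clause whose set of positive mixed literals has size $1+|S|$; one further invocation of \Cref{lem:step1} either discards it or collapses this set to a single literal $f(y)=g(x_i)$, yielding a formula of exactly the form required by the statement. Collecting the resulting reduced clauses with signs $\mu=(-1)^{|S|}$ into $\Omega$, and setting $\vG'=\vG_1$ and $\rho=\rho_1$, the three displayed equalities combine into the claimed exact identity. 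Bounded expansion of $\vG'$ follows from the boundedly many applications of \Cref{col:eliminateExistential} and \Cref{lem:canonical-formula}, together with \Cref{cor:funcBE}, and every step is computable in $O(\norm{\vG})$ time. The subtraction-induced cancellation that was the main obstacle of \Cref{lem:apx} is not a problem here because we only claim an exact identity rather than a good relative-error approximation; this is precisely why the signed-sum formulation of this lemma is the natural one and why the flip-augmentation apparatus can be dispensed with.
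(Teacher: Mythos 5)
Your proposal is correct and follows essentially the same route as the paper: eliminate quantifiers via \Cref{col:eliminateExistential}, pass to canonical conjunctive clauses with an apex function, replace the flip/approximation machinery by exact inclusion--exclusion over the negative mixed literals (exploiting the signed sum), and finally collapse the positive mixed literals with \Cref{lem:step1}. The paper performs inclusion--exclusion one negative literal at a time and applies \Cref{lem:step1} once at the end, whereas you apply \Cref{lem:step1} first and then do the full inclusion--exclusion over all subsets $S$ followed by a second pass of \Cref{lem:step1}; this reordering is cosmetic and both orderings are valid (your lone exposition slip is attributing the expansion $\vG_1$ to \Cref{lem:canonical-formula} rather than to the $1$-transitive fraternal augmentation, which must still be taken so that $\sigma$-$\rho$-unsatisfiable clauses may be safely discarded).
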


\begin{proof}

We can assume $\phi(y\bx)$ to be given in prenex normal form and
use the quantifier elimination as presented in
\Cref{col:eliminateExistential} (or by Kazana and Segoufin~\cite{KazanaS2013})
to iteratively remove the innermost quantifier from $\phi(y\bx)$.
While doing so, we replace $\vG$ with an expansion of $\vG$
that still comes from a graph class with bounded expansion and can be computed in linear time.
We repeat this procedure until $\phi(y\bx)$ is quantifier-free.
As mentioned in the beginning of \Cref{lem:apx}'s proof,
we can do some further simple
modifications on our input formula $\phi(y\bx)$ and structure $\vG$
and assume without loss of generality
that $\phi(y\bx)$ is a quantifier-free formula with functional depth one
and that there exists a function symbol $\fapx \in \fsig\sigma$ with
$\fapx^{\vG}(v) = \fapx^{\vG}(v')$ for every $v,v' \in V(\vG)$.

Let $\vG'$ be the 1-transitive fraternal augmentation of $\vG$.
As discussed in \Cref{sec:functionalRepresentations},
it can be computed in time $O(\norm{\vG})$.
Also, \Cref{cor:funcBE} states that $\vG'$ belongs to some class
with bounded expansion $\cal C' \subseteq \fungraph{\rho}$ depending only on $\cal C$.
Let $\bu \in V(G)^{|\bx|}$.
Using \Cref{lem:canonical-formula}, we obtain a mutually exclusive set 
$\Omega' \subseteq \C(|\bx|,\sigma,\rho)$ of canonical conjunctive clauses
such that for every $v\bu \in V(\vG)^{|y\bx|}$ 
$$
\vG\models\phi(v\bu) \text{~~iff~~}
\vG'\models\phi(v\bu)\land \fapx(v) = \fapx(u_1) \text{~~iff~~}
\vG'\models\omega(v\bu)\text{ for some }\omega\in\Omega'.
$$
Let $\Omega = \{(1,\omega) \mid \omega \in \Omega'\}$.
Since at most one formula $\omega \in \Omega$ is true at a time ($\Omega$ is mutually exclusive),
this implies
\begin{equation}\label{eq:lambdasum}
[[\#y\, \phi(y\bu)]]^\vG = 
\sum_{(\mu,\omega) \in \Omega} \mu [[\#y\, \omega(y\bu)]]^{\vG'}.
\end{equation}

Let $(\mu,\omega) \in \Omega$.
The canonical conjunctive clause $\omega(y\bx)$ is of the form $\tau(y) \land \psi(\bx) \land \Delta^=(y\bx) \land \Delta^{\neq}(y\bx)$.
This lemma requires that $\Delta^=(y\bx)$ contains exactly one literal and $\Delta^{\neq}(y\bx)$ is empty,
which is not yet the case.
We will gradually modify $\Omega$ until these requirements are met.
We start with making sure that $\Delta^{\neq}(y\bx)$ is empty.

If $\Delta^{\neq}(y\bx)$ is non-empty, we can write it as
$f(y) \neq g(x_i) \land \Delta'^{\neq}(y\bx)$.
By first ignoring the literal $f(y) \neq g(x_i)$
and then subtracting what we counted too much we get
\begin{align}\label{eq:inclexcl}
\mu [[\#y\, \omega(y\bx)]]^{\vG'}
=& \mu [[\#y\, \tau(y) \land \psi(\bx) \land \Delta^=(y\bx) \land \Delta^{'\neq}(y\bx)]]^{\vG'} \nonumber\\
-& \mu [[\#y\, \tau(y) \land \psi(\bx) \land \Delta^=(y\bx) \land f(y) = g(x_i) \land \Delta^{'\neq}(y\bx)]]^{\vG'}.
\end{align}
We remove $(\mu,\omega)$ from $\Omega$
and add two new entries with canonical conjunctive clauses as
in~(\ref{eq:inclexcl}) such that $\Omega$ still satisfies~(\ref{eq:lambdasum}).
Both newly introduced formulas contain one negative literal less
in their respective set $\Delta^{\neq}(y\bx)$.
We perform this procedure on $\Omega$ until no longer possible.

Consider an element $(\mu,\omega) \in \Omega$.
The clause $\omega(\bx)$ is now of the form $\tau(y) \land \psi(\bx) \land \Delta^=(y\bx)$.
We apply \Cref{lem:step1} to either conclude that $\omega$
is $\sigma$-$\rho$-unsatisfiable or obtain an equivalent
formula of the form $\tau(y) \land \psi(\bx) \land f(y) = g(x_i)$.
If $\omega$ is $\sigma$-$\rho$-unsatisfiable then (since $\vG'$ is a $\sigma$-$\rho$-expansion)
$[[\#y\, \omega(y\bx)]]^{\vG'} = 0$ and we can remove it.
Otherwise, we replace it with $\tau(y) \land \psi(\bx) \land f(y) = g(x_i)$.
$\Omega$ still satisfies (\ref{eq:lambdasum}).
We repeat this for every pair in~$\Omega$.
Then $\Omega$ is of the desired form.

\end{proof}

Next, we present a counterpart to \Cref{thm:main}.
The proof of this theorem is very similar to the proof
of \Cref{thm:main}, and therefore omitted.
The only real difference is that here, we call the algorithms
from \Cref{lem:apxExact} instead of \Cref{lem:apx} as a subroutine.

\begin{theorem}\label{thm:mainExact}
Let $\cal C \subseteq \fungraph{\sigma}$ be a class with bounded expansion.
One can compute for every 
first-order formula
$\phi(y\bx)$
with signature $\sigma$
a set of conjunctive clauses $\Omega$ 
with free variables $\bx$,
and signature $\rho \supseteq \sigma$ that satisfies the following property:

There exists a class $\cal C' \subseteq \fungraph{\rho}$ with bounded expansion
such that for every $\vG \in \cal C$ one can compute in time $O(\norm{\vG})$
an expansion $\vG' \in \cal C'$ of $\vG$ 
and functions $c_{\omega,i}(v) \colon V(\vG) \to \Z$ for $\omega \in \Omega$ and $i \in \{1,\dots,|\bx|\}$
with $c_{\omega,i}(v) = O(|\vG|)$
such that for every $\bu \in V(\vG)^{|\bx|}$
there exists exactly one formula $\omega \in \Omega$
with $\vG' \models \omega(\bu)$. For this formula 
$$
[[\#y\,\phi(y\bu)]]^\vG = \sum_{i=1}^{|\bx|}c_{\omega,i}(u_i).
$$
\end{theorem}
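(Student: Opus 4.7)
The plan is to mirror the construction used in \Cref{thm:main}, replacing the positive sum produced by \Cref{lem:apx} with the signed sum produced by \Cref{lem:apxExact}. Concretely, given $\phi(y\bx)$ with signature $\sigma$ and $\vG \in \cal C$, I would first invoke \Cref{lem:apxExact} to obtain a signature $\rho \supseteq \sigma$, a class $\cal C' \subseteq \fungraph{\rho}$ with bounded expansion, an expansion $\vG' \in \cal C'$ of $\vG$ (computed in time $O(\norm{\vG})$), and a set $\Omega'$ of pairs $(\mu, \omega'(y\bx))$ with $\mu \in \{1,-1\}$ and $\omega'(y\bx)$ of the form $\tau(y) \land \psi(\bx) \land f(y) = g(x_i)$ such that
$$
[[\#y\,\phi(y\bu)]]^\vG \;=\; \sum_{(\mu,\omega') \in \Omega'} \mu\,[[\#y\,\omega'(y\bu)]]^{\vG'}
$$
for every $\bu \in V(\vG)^{|\bx|}$.

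Next, exactly as in the proof of \Cref{thm:main}, I would take $\Omega$ to be the set of all complete conjunctive clauses with free variables $\bx$, signature $\rho$ and functional depth two. This ensures that for every $\bu$ there is exactly one $\omega \in \Omega$ with $\vG' \models \omega(\bu)$, and that for every conjunctive clause $\psi(\bx) \in \funformula{2}{\rho}$ appearing in some $\omega' \in \Omega'$, either $\omega \models \psi$ or $\omega \models \neg \psi$. For each $\omega \in \Omega$ and each $i \in \{1,\dots,|\bx|\}$, I collect into $\Gamma_{\omega,i}$ the pairs $(\mu, \tau(y) \land f(y) = g(x_i))$ obtained from those $(\mu, \tau(y) \land \psi(\bx) \land f(y) = g(x_i)) \in \Omega'$ with $\omega \models \psi$. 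Because exactly one $\omega$ is satisfied by any given $\bu$ and because $\vG' \models \psi(\bu)$ iff $\omega \models \psi$, we obtain
$$
\sum_{(\mu,\omega') \in \Omega'} \mu\,[[\#y\,\omega'(y\bu)]]^{\vG'} \;=\; \sum_{i=1}^{|\bx|} \sum_{(\mu,\tau \land f(y)=g(x_i)) \in \Gamma_{\omega,i}} \mu\,[[\#y\,\tau(y) \land f(y) = g(u_i)]]^{\vG'}
$$
for that unique $\omega$ with $\vG' \models \omega(\bu)$.

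To define the functions $c_{\omega,i}$, I would apply \Cref{lem:actualcounting} (marginalising over the second coordinate, which can be done by summing the counters for each fixed first coordinate) to each formula $\tau(y)$ appearing in $\Gamma_{\omega,i}$, obtaining in linear time the map $v \mapsto [[\#y\,\tau(y) \land f(y) = g(v)]]^{\vG'}$. Then
$$
c_{\omega,i}(v) \;=\; \sum_{(\mu,\tau \land f(y)=g(x_i)) \in \Gamma_{\omega,i}} \mu\,[[\#y\,\tau(y) \land f(y) = g(v)]]^{\vG'}
$$
is an integer-valued function, and since $|\Gamma_{\omega,i}|$ is bounded by a constant depending only on $\phi$ and $\rho$, while each summand has absolute value at most $|\vG|$, we get $|c_{\omega,i}(v)| = O(|\vG|)$. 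Combining the two displayed equalities yields the required identity.

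The only non-routine point is simply the bookkeeping of signs; the proof of \Cref{thm:main} did not need to worry about signs because \Cref{lem:apx} already provided positive summands. Here, because \Cref{lem:apxExact} uses inclusion--exclusion to remove negative mixed literals, the coefficients $c_{\omega,i}$ are signed. However, nothing in the construction forbids this, and the magnitude bound follows trivially from the constant number of signed summands. All the heavy lifting concerning bounded expansion, functional depth reduction, and counting in linear time is already done by the cited lemmas.
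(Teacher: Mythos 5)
Your proposal is correct and matches the paper's approach exactly: the paper explicitly omits this proof, stating only that it mirrors the proof of \Cref{thm:main} with \Cref{lem:apx} replaced by \Cref{lem:apxExact}, and your sketch carries out precisely that substitution, correctly tracking the signs $\mu$ and supplying the $O(|\vG|)$ magnitude bound that the new statement additionally requires.
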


This brings us to the main result of this section.
Using \Cref{thm:mainExact}, we can break down a counting term
into a sum of values that depend only on a single vertex.
We can consider these values the ``weights'' of a vertex.
In \Cref{thm:optimization}, we have to find an assignment
to a quantifier-free first-order formula that maximizes (or minimizes) the weights.
We use low tree-depth colorings to break the input graph down
into a small number of subgraphs with bounded tree-depth
and then use an optimization variant of Courcelle's theorem
by Courcelle, Makowsky and Rotics~\cite{CourcelleMR2000} to solve our problem.

\begin{customthm}{\ref{thm:optimization}}
Let $\cal C$ be a labeled graph class with bounded expansion.
There exists a function $f$ such that for a given
graph $G \in \cal C$ and first-order formula $\phi(y\bx)$
one can compute in time
$f(|\phi|)\norm{G}$ a tuple
$\bu^* \in V(G)^{|\bx|}$ such that
$$
[[\#y\, \phi(y\bu^*)]]^{G} = 
\underset{\bu \in V(G)^{|\bx|}}{\textnormal{opt}} [[\#y\, \phi(y\bu)]]^{G},
$$
where \textnormal{opt} is either $\min$ or $\max$.
\end{customthm}

\begin{proof}
As discussed in \Cref{sec:functionalRepresentations},
we can assume that the input $G$, $\phi$ is represented by a 
functional representation $\vG$ and a functional formula $\vec \phi$.
We use \Cref{thm:mainExact} to compute
a set $\Omega$,
an expansion $\vG'$ of $\vG$ 
and functions $c_{\omega,i}(v)$ with $c_{\omega,i}(v) = O(|\vG|)$
such that for every $\bu \in V(\vG)^{|\bx|}$
$$
[[\#y\,\phi(y\bu)]]^G =
[[\#y\,\vec\phi(y\bu)]]^\vG =
\sum_{i=1}^{|\bx|}c_{\omega,i}(u_i),
$$
where $\omega \in \Omega$ is the unique formula with
$\vG' \models \omega(\bu)$.
Assume for now that we can compute for a given $\omega \in \Omega$
a tuple $\bu^* \in V(G)^{|\bx|}$ such that
\begin{equation}\label{eq:newopti}
\sum_{i=1}^{|\bx|}c_{\omega,i}(u^*_i) = 
\textnormal{opt}\Bigl\{ \sum_{i=1}^{|\bx|}c_{\omega,i}(u_i) \Bigm|
\bu \in V(\vG)^{|\bx|}, \vG' \models \omega(\bu) \Bigr\}.
\end{equation}
Then we could cycle through all $\omega \in \Omega$,
compute a solution $\bu^*$ satisfying (\ref{eq:newopti}),
and return the optimal value for $\bu^*$ among all of them.
This gives us a solution to our original optimization problem.
Thus, from now on, we will concentrate on one formula $\omega \in \Omega$
and solve the optimization problem~(\ref{eq:newopti}).

It will now be easier for us to work with relational instead of functional
structures.
We can transform $\vG'$ into a relational structure $G'$
with the same universe via standard methods:
The unary relations are preserved.
Additionally, for every function $f$ we add a binary relation $E_f$ with
$E^{G'}_f = \{\,(v,f^{\vG'}(v)) \mid v \in V(\vG')\,\}$.
The resulting structure is a directed graph with self-loops, vertex- and edge-labels
and belongs to a graph class with bounded expansion.
We further construct a relational conjunctive clause $\omega'(\bx\bz)$ such that
$\vG' \models \omega(\bu)$ iff $G' \models \exists \bz \omega'(\bu\bz)$
for every $\bu \in V(\vG)^{|\bx|}$.
This can be done for example by iteratively replacing
every atom of the form $f(x)$
with some newly introduced variable $z$ and adding a literal $E_f(x,z)$.

Graph classes with bounded expansion can be characterized
by so called \emph{low tree-depth colorings}~\cite{NesetrilM2012}.
This means there exists a function $\chi\colon\N\to\N$
(depending only on the graph class) such that
we can color the vertices of $G'$ with $\chi(|\bx\bz|)$ colors and
every subset of $|\bx\bz|$ many colors induces a graph with tree-depth at most 
$|\bx\bz|$.
Let $\cal H$ be the set of all graphs obtained from $G'$ by inducing
it on some set of $|\bx\bz|$ many colors of this low tree-depth coloring.
The size of $\cal H$ is bounded by a constant independent of~$G'$.

For every $\bu \in V(\vG')^{|\bx|}$ with $\vG' \models \omega(\bu)$
there exists $H \in \cal H$ such that $\bu \in V(H)^{|\bx|}$
and $H \models \exists \bz \omega'(\bu\bz)$.
In order to optimize (\ref{eq:newopti}), 
it is therefore sufficient to look at every graph $H \in \cal H$
and compute $\bu^* \in V(H)^{|\bx|}$ such that
\begin{equation}\label{eq:optitd}
    \sum_{i=1}^{|\bx|}c_{\omega,i}(u^*_i) = 
    \textnormal{opt}\Bigl\{\,\sum_{i=1}^{|\bx|}c_{\omega,i}(u_i) \Bigm|
    \bu \in V(H)^{|\bx|}, H \models \exists \bz \omega'(\bu\bz)\,\Bigr\},
\end{equation}
and then return the best value found for~$\bu^*$.
The input graph $H$ to the optimization problem (\ref{eq:optitd})
comes from a graph class with bounded tree-depth.
Using Courcelle's theorem \cite{Courcelle1990}
one can solve a wide range of problems on these graphs in fpt time.
Since we want to solve an optimization problem we require an extension
of the original theorem, as presented in \cite{CourcelleMR2000}.
There, the authors define \textit{LinEMSOL}~\cite{CourcelleMR2000}
as an extension of monadic second order logic allowing one to search for
sets of vertices that are optimal with respect to a linear evaluation
function.

Our linear evaluation function is
$\sum_{i=1}^{|\bx|}c_{\omega,i}(u_i)$
and our formula $\omega'(\bx\bz)$ clearly lies in monadic second order logic.
The only problem is that \cite{CourcelleMR2000} defines \textit{LinEMSOL}
for undirected graphs with vertex labels,
while $H$ is a directed graph with edge labels, vertex labels and loops.
Using standard techniques (subdivisions of edges, adding new vertex labels)
we can transform $H$ into an undirected graphs with vertex labels
and then find a solution $\bu^*$ to our optimization problem (\ref{eq:optitd})
in fpt time.
\end{proof}

\section{Hardness Results}\label{sec:hardness}
\newcommand{\approxratio}[1]{2^{\sqrt{\log(#1)}}}

In this section we show that in some sense, the
linear fpt model-checking approximation scheme for \FOX presented in this work is optimal:
We show that the (exact) model-checking problem for 
\FOX is already AW[$*$]-hard on trees of depth 4,
and therefore most likely not in FPT (\Cref{lem:exacthard}).
We further show that (unless AW[$*$] $\subseteq$ FPT) 
certain other fragments of \FOCless 
that are slightly stronger than \FOX do not admit a 
model-checking approximation algorithm on the class of all trees with depth~9 ---
even if we only want an ``approximation ratio'' of $\approxratio{n}$ (\Cref{lem:approxhard}).
This means we cannot hope for model-checking approximation schemes
on graph classes with bounded expansion if we allow for example 
comparison of non-atomic counting terms (e.g., $1 \cdot \cnt{y}\phi_1 > 1 \cdot \cnt{z}\phi_2$)
or multiplication (e.g., $\cnt{y}\phi_1 \cdot \cnt{z}\phi_2 > N$).

The ideas behind the proofs of this section
are considerably simpler than those of the remaining paper,
but unfortunately they involve very technical constructions.
We reduce from the AW[$*$]-complete
parameterized first-order model-checking problem on the class of all graphs.
The reduction showing hardness of the model-checking problem for \FOX
is very similar to~\cite[Theorem 4.1]{GroheS2018}.  
We encode an arbitrarily graph as a tree of depth 4
such that the underlying graph can be recovered using formulas in \FOX.

As this reduction is quite unstable (slight changes in the number-constants of the formula lead to the wrong answer),
we have to use a more sophisticated reduction in \Cref{sec:foxextensionhard} to show hardness of approximation.
For a graph with $n$ vertices, the information ``$u$ is adjacent to $v$'' can be written down using $O(\log(n))$ bits
by assuming that the vertices of the graph are the numbers from $1$ to $n$
and then giving a binary encoding of $u$ and $v$.
Since this encoding has length $O(\log(n))$, we only need $O(\log\log(n))$
bits to write down the information ``the $i$th bit of some adjacency-encoding
is~1.''
We store this information (for every $i$ and every edge) in a tree of bounded depth.
We can then recover the original graph in a very robust manner.
The counting terms will only need to
distinguish between $O(\log\log(n))$ different possible values
and therefore even larger perturbations in the constants of the formula
will still lead to the correct answer.

\subsection{Hardness of Model-Checking for \BOLDFOX}\label{sec:foxhard}

\begin{lemma}\label{lem:exacthard}
The model-checking problem for \FOX on the class of all trees of depth~4
is \textnormal{AW[$*$]}-complete.
\end{lemma}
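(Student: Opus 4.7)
The goal is to prove both AW[*]-membership and AW[*]-hardness; membership is immediate while hardness is the substantive claim.

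For membership, \FOX is a syntactic fragment of \FOC, so it suffices to observe that \FOC model-checking on arbitrary finite structures lies in AW[*]. This admits an fpt-reduction to the AW[*]-complete FO model-checking problem: each counting subterm $\cnt{y}\phi(y,\bx) > N$ of a \FOC sentence can be evaluated in polynomial time on the input structure and the outcome encoded as a fresh $|\bx|$-ary predicate in an enriched signature, after which the residual satisfaction check becomes ordinary FO.

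For hardness, I would reduce from parameterized FO model-checking on the class of all finite labeled graphs. Given an instance $(G,\phi)$ with $V(G)=\{v_1,\dots,v_n\}$, the plan is to produce a depth-$4$ tree $T_G$ of size polynomial in $|G|$ and a \FOX sentence $\phi'$ of length at most $f(|\phi|)$ such that $G\models\phi \iff T_G\models\phi'$. The encoding $T_G$ follows the template of \cite[Theorem 4.1]{GroheS2018}: the root has a vertex-child $a_i$ for each $v_i$ together with an identity-gadget whose descendant-leaf count is exactly $i$, and an edge-child for each $\{v_i,v_j\}$ with two analogous endpoint-gadgets of leaf counts $i$ and $j$; the roles of nodes (vertex/edge/gadget/leaf) are distinguished either by labels or by local structural properties definable in FO on depth-$4$ trees. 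The translation then inductively maps each FO-quantifier to a \FOX-quantifier relativized to vertex-nodes, leaves equality untouched, and rewrites each atomic $E(x,y)$ into a short \FOX formula asserting the existence of an edge-node whose endpoint-gadgets match the identity-gadgets of $x$ and $y$.

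The main obstacle is the translation of $E(x,y)$: \FOX permits comparison of counting terms only against explicit integer constants $N\in\Z$ and never against another counting term, so the naive equality check between two gadget leaf-counts cannot be written directly. The resolution, adapted from the corresponding \FOC argument of \cite{GroheS2018}, rests on two observations: integer constants in \FOX have length one regardless of their magnitude, and the output $\phi'$ of the reduction may itself depend on $G$, so the thresholds $N$ appearing in $\phi'$ may be tailored to the encoding. Concretely, one uses a bounded number of nested counting atoms with carefully chosen thresholds so that identity-matches are witnessed structurally rather than through direct count comparison; the total number of such atoms depends only on $|\phi|$. Once this translation is carried out, a routine induction on the structure of $\phi$ shows that $G\models\phi \iff T_G\models\phi'$, completing the proof.
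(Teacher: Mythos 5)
Your proposal correctly identifies the overall shape of the argument (reduction from FO model-checking on general graphs via a depth-$4$ tree gadget, following \cite{GroheS2018}) and correctly pinpoints the central obstacle: in \FOX a counting term can only be compared against a constant $N \in \Z$, never against another counting term. But you then hand-wave past the resolution with ``nested counting atoms with carefully chosen thresholds so that identity-matches are witnessed structurally rather than through direct count comparison,'' and this is exactly where the proof needs a concrete idea, not a promise. Your own gadget design would not support the claim: you assign vertex-node $a_i$ a gadget of leaf count $i$ and the edge-node of $\{v_i,v_j\}$ endpoint-gadgets of leaf counts $i$ and $j$. Matching an endpoint-gadget to a vertex-gadget then requires asserting that two counting terms are \emph{equal}, and since both equal $i$ their sum is $2i$ (not constant) and their difference is not expressible in \FOX. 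So as written, the translation of $E(x,y)$ cannot be completed.

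The paper's device, which your sketch is missing, combines two ideas. First, the two gadget sizes are made \emph{complementary} rather than identical: $a_i$ receives $n-i+2$ many $b$-children, while the corresponding edge-side gadget receives $i+1$ many $e$-children, so that they sum to $n+3$ exactly when the identities match. Second, the two counts are packed into a \emph{single} counting term via disjunction: $\cnt{y}\bigl((E(d,y)\wedge\phi_e(y)) \vee (E(a,y)\wedge\phi_b(y))\bigr) = n+3$, which is a legal \FOX atom since $n+3$ is a constant (of length one). This sum-and-threshold trick is what lets \FOX simulate an equality test between two graph-dependent quantities. Your membership argument and the surrounding reduction scaffolding are fine; what you need to add is the complementary-count gadget and the disjunction-inside-a-single-counting-term construction, without which the reduction does not go through.
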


\begin{proof}
The problem lies in AW[$*$].
We will show \textnormal{AW[$*$]}-hardness by reducing from the model-checking problem on the class of all graphs.
Let $G$ be a graph and $\phi$ be a first-order sentence.
We will define a \FOX sentence $\hat\phi$ and a tree $T$ of depth 4
such that  $G \models \phi$ iff $T \models \hat\phi$.
We can assume without loss of generality that $V(G) = \{1,\dots,n\}$ for some $n \ge 1$.
The tree $T$ is constructed by the following polynomial-time procedure:
\begin{itemize}
    \item insert a root vertex $r$,
    \item for $i \in V(G)$ insert vertices $a_i$ and edges $a_ir$,
    \item for $i \in V(G)$, $k \in \{1,\dots,n-i+2\}$ add vertices $b_{i,k},c_{i,k}$ and edges $c_{i,k}b_{i,k}$ and $b_{i,k}a_i$,
    \item for $ij \in E(G)$ add vertices $d_{i,j}$ and edges $d_{i,j}a_i$,
    \item for $ij \in E(G)$, $k \in \{1,\dots,i+1\}$ add vertices $e_{i,j,k}$ and edges $e_{i,j,k}d_{i,j}$.
\end{itemize}
The tree defined in~\cite[Theorem 4.1]{GroheS2018}
has the same underlying structure as ours.
We therefore refer to~\cite{GroheS2018}
for the construction of auxiliary formulas $\phi_a(x),\dots,\phi_e(x)$
that identify the set of $a,\dots,e$-vertices.
They are satisfied in $T$ if and only if the free variable $x$ is assigned to an $a,\dots,e$-vertex, respectively.
For every vertex $i$ in $G$ we know that $a_i$ is the unique $a$-vertex in $T$ with exactly $(n-i+2)$ $b$-neighbors.
This yields a bijection between the vertices of $G$ and the $a$-vertices of $T$.
Also a vertex $j$ is adjacent to $i$ if and only if $a_j$ has a $d$-neighbor with exactly $(i+1)$ $e$-neighbors.
If we combine these observations we see that $i$ and $j$ are adjacent in $G$
if and only if $a_j$ has a $d$-neighbor $d$ such that
the number of $e$-neighbors of $d$ plus the number of $b$-neighbors of $a_i$ equals $n+3$.
The following \FOX-formula expresses this property (note that $=$ can
be simulated by~$>$ in this context):
$$
\psi_E(a,a') = \exists d 
\bigl( E(a,d) \wedge 
    \bigl( \cnt{y} (E(d,y) \wedge \phi_e(y)) \vee (E(a,y) \wedge \phi_b(y)) \bigr) = n+3
\bigr)
$$
Therefore $G \models E(i,j)$ iff $T \models \psi_E(a_i,a_j)$.
We construct $\hat\phi$ from $\phi$ by replacing each occurrence of $E(x,y)$ with $\psi_E(x,y)$
and restricting all quantifiers to $a$-vertices (i.e., replacing subformulas
$\exists x \psi$ by $\exists x (\phi_a(x) \wedge \psi)$).
It holds that $G \models \phi$ iff $T \models \hat\phi$.
\end{proof}

\subsection{Hardness of Approximation for Extensions of \BOLDFOX}\label{sec:foxextensionhard}

In preparation for \Cref{lem:approxhard},
we represent the edge relationship of an arbitrary graph
by a binary encoding that needs only logarithmic number of bits per edge.
This means that 
model-checking is already hard on bipartite
graphs where one side of the graph has only logarithmic size.
In this work, $\log()$ is the logarithm to the base two.

\begin{lemma}\label{lem:bipartite}
The model-checking problem for \textnormal{FO} on
the class of all
bipartite graphs with sides $U$, $V$ such that $|U| = \log(|V|)$
is \textnormal{AW[$*$]}-complete.
\end{lemma}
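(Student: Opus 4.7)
The plan is to reduce from first-order model-checking on the class of all graphs, which is AW[$*$]-complete~\cite{DowneyF2013}. Given a graph $G$ with $V(G) = \{1,\ldots,n\}$ and an FO sentence $\phi$, I will construct a bipartite graph $H$ with sides $U, V$ satisfying $|U| = \log|V|$ and an FO sentence $\hat\phi$ whose length depends only on $|\phi|$, such that $G \models \phi$ iff $H \models \hat\phi$. The central idea is to encode each vertex of $G$ by its $\lceil\log n\rceil$-bit binary representation, realised as a subset of the small side $U$.

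First I will set $U = U_L \cup U_R \cup T$, where $U_L, U_R$ are two disjoint copies of $k = \lceil\log n\rceil$ ``bit positions'' and $T$ is a constant-size set of type-bits. The side $V$ will contain: (i) a vertex-gadget $w_i$ for each $i \in V(G)$, adjacent to a type-bit $t_V \in T$ and to the bits of $U_L$ (and the parallel bits of $U_R$) matching the binary encoding of $i$; (ii) an edge-gadget $e_{ij}$ for each edge $ij \in E(G)$, adjacent to a type-bit $t_E$, to the $U_L$-bits encoding $i$, and to the $U_R$-bits encoding $j$; (iii) region-marker vertices $\mu_L, \mu_R \in V$ adjacent respectively to all of $U_L$ and all of $U_R$; (iv) a distinct fixed number of degree-one pendants attached to each bit in $T$; and (v) isolated padding vertices bringing $|V|$ up to $2^{|U|}$ exactly. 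The distinct pendant counts let each individual bit of $T$ be picked out by a constant-size FO formula that counts degree-one neighbours, which in turn lets $\mu_L, \mu_R$ be identified as the non-pendant $V$-vertex incident to a designated type-bit, and then $U_L$ and $U_R$ become FO-definable as the non-type neighbours of $\mu_L$ and $\mu_R$.

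With those definable unary predicates in hand I translate $\phi$ to $\hat\phi$ by relativising every quantifier to ``vertex-gadget'' and replacing every edge atom $E(x,y)$ by
\[
\exists e\,\bigl(\mathrm{EdgeGadget}(e) \land \forall b\,(U_L(b)\to(E_H(x,b)\leftrightarrow E_H(e,b))) \land \forall b\,(U_R(b)\to(E_H(y,b)\leftrightarrow E_H(e,b)))\bigr),
\]
asserting the existence of an edge-gadget whose $U_L$-neighbourhood matches $x$'s and whose $U_R$-neighbourhood matches $y$'s. By construction, such an $e = e_{ij}$ exists precisely when $i$ is the vertex encoded by $x$, $j$ is the vertex encoded by $y$, and $ij \in E(G)$, so the translation faithfully mirrors $G$'s edge relation. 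Since $|V|$ is polynomial in $n$ and $|U| = 2k + O(1)$, a polynomial amount of isolated padding achieves $|U| = \log|V|$ on the nose, while $|\hat\phi|$ depends only on $|\phi|$ and on the constant-size dictionary of marker formulas.

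The main obstacle is engineering the marker gadgetry so that the unary predicates $\mathrm{VxGadget}$, $\mathrm{EdgeGadget}$, $U_L$, and $U_R$ are all FO-definable by formulas of constant size independent of $n$. Once a small set of type-bits is anchored by distinct pendant counts, every other needed predicate is a one- or two-quantifier lookup against that anchor, which is routine. Membership of the problem in AW[$*$] is immediate, since the bipartite graphs with $|U| = \log|V|$ form a subclass of all relational structures, on which FO model-checking lies in AW[$*$].
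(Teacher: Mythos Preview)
Your proposal is correct and follows essentially the same approach as the paper: both reduce from FO model-checking on arbitrary graphs by placing two copies of $\lceil\log n\rceil$ ``bit'' vertices on the small side, encoding each vertex of $G$ as a vertex-gadget and each edge as an edge-gadget on the large side via binary adjacency patterns, and recovering $E(x,y)$ with a formula that looks for an edge-gadget whose left-bit neighbourhood matches $x$ and whose right-bit neighbourhood matches $y$. The only cosmetic difference is the anchoring mechanism---the paper threads a short path $e_1e_2e_3e_4e_5$ through both sides and uses its degree pattern (together with the trick $V(G)=\{8i-1\}$ to force all gadget degrees $\ge 3$) to FO-define the vertex types, whereas you use dedicated type-bits with distinct pendant counts; both devices serve the same purpose and yield constant-size defining formulas.
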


\begin{proof}
This proof will use some elements of the proof of \Cref{lem:exacthard}.
We will again show \textnormal{AW[$*$]}-hardness by reducing from the model-checking problem on the class of all graphs.
Let $G$ be a graph and $\phi$ be a first-order sentence.
We will define an FO sentence $\hat\phi$ and bipartite graph $\hat G$
such that  $G \models \phi$ iff $\hat G \models \hat\phi$.
It will be convenient for us to have a binary representation of each vertex of $G$ with at least three ones.
We therefore assume without loss of generality that
$V(G) = \{\,8i-1 \mid i \in \{1,\dots,n\}\,\}$ for some $n \ge 1$.
We first define the two sides $U$ and $V$ of the vertex set of our bipartite graph $\hat G$.
The node set $U$ contains
\begin{itemize}
    \item vertices $c_k$, $d_k$ for $k \in \{1,\dots,\floor{\log(8n)}+1\}$,
    \item vertices $e_1$, $e_3$, $e_5$.
\end{itemize}
The node set $V$ contains
\begin{itemize}
    \item vertices $a_i$ for $i \in V(G)$,
    \item vertices $b_{i,j}$ for $ij \in E(G)$,
    \item vertices $e_4$, $e_2$,
    \item $|G|^{O(1)}$ many ``padding'' vertices until $|V| = 2^{|U|}$.
\end{itemize}
The $e$-vertices are there only for technical reasons.
The $a$- and $b$-vertices will represent the vertices and edges of $G$, respectively.
But we are not allowed to connect them directly in $\hat G$.
Instead, we define a logarithmic number of $c$- and $d$-vertices
and connect each $a$- and $b$-vertex in a unique way to the $c$- and $d$-vertices.
A vertex $a_i$ or $b_{i,j}$ will be connected to the $c$- and $d$-vertices
according to a binary encoding of $i$ and $j$.
We will then define a formula $\hat\phi$ that uses this encoding to test for two $a$-vertices whether
there is a $b$-vertex representing an edge between them.
We add exactly the following edges to $\hat G$:
\begin{itemize}
    \item edges $a_ic_k$, $a_id_k$ for $i \in V(G)$ and 
        every $k \in \{1,\dots,\floor{\log(8n)}+1\}$ 
        such that the $k$th bit of the binary encoding of $i$ is one,
    \item edges $b_{i,j}c_k$ for $ij \in E(G)$ and 
        every $k \in \{1,\dots,\floor{\log(8n)}+1\}$ 
        such that the $k$th bit of the binary encoding of $i$ is one,
    \item edges $b_{i,j}d_k$ for $ij \in E(G)$ and 
        every $k \in \{1,\dots,\floor{\log(8n)}+1\}$ 
        such that the $k$th bit of the binary encoding of $j$ is one,
    \item edges $e_1a_i$ for $i \in V(G)$,
    \item edges $e_4c_k$ for $k \in \{1,\dots,\floor{8\log(n)}+1\}$,
    \item edges $e_1e_2$, $e_2e_3$, $e_3e_4$, $e_4e_5$.
\end{itemize}
The graph $\hat G$ can be constructed in polynomial time from $G$.
Next, we define formulas $\phi_a(x),\phi_b(x),\phi_c(x)$, $\phi_d(x)$ to
identify the $a,b,c,d$-vertices, respectively:
Since the binary encoding of every vertex in $G$ has at least three ones,
all $a,b,c,d$-vertices have degree at least three.
Therefore, $e_5$ is the unique vertex with degree one and $e_4$ is its only neighbor.
Also, $e_3$ and $e_2$ are the unique vertices with degree two and distance two and three from $e_5$, respectively.
Furthermore, $e_1$ is the unique neighbor of $e_2$ which is not $e_3$.
The $a$-vertices are the neighbors of $e_1$ with degree at least three.
The $c$-vertices are the neighbors of $e_4$ with degree at least three.
The $b$- and $d$-vertices are the remaining vertices with distance one and two from $a$-vertices, respectively.

Two vertices $i$ and $j$ are adjacent in $G$
if and only if there exists a $b$-vertex
with the same $c$-neighbors as $a_i$ and the same $d$-neighbors as $a_j$.
This is checked by the formula
\begin{multline*}
\psi_E(a,a') = \exists b 
\bigl( \phi_b(b) \wedge
        \bigl(  \forall c\, \phi_c(c) \rightarrow (E(c,a) \leftrightarrow E(c,b)) \bigr)
        \wedge \\
        \bigl(  \forall d\, \phi_d(d) \rightarrow (E(d,a') \leftrightarrow E(d,b)) \bigr)
\bigr).
\end{multline*}
Then $G \models E(i,j)$ iff $\hat G \models \psi_E(a_i,a_j)$.
Just like in \Cref{lem:exacthard},
we construct $\hat\phi$ from $\phi$ by replaying each occurrence of $E(x,y)$ with $\psi_E(x,y)$
and restricting all quantifiers to $a$-vertices.
We have $G \models \phi$ iff $\hat G \models \hat\phi$.
\end{proof}

We now show that certain fragments of \FOCless 
similar to \FOX do not admit an fpt
model-checking approximation scheme on the class of all trees with depth 9.
Using \Cref{lem:bipartite},
we only need to encode the edge relationship of a bipartite graph
where one side has only logarithmic size.
By again using a binary encoding,
we need roughly $\log(\log(n))$ bits to identify a vertex
on the small side.
This very small number of bits allows
our encoding to be stable in the approximation setting.

We do not show hardness for formulas constructed by the rule
$\cnt{y}\phi_1 + \cnt{z}\phi_2 > N$.
In fact, with some additional effort, our
approximation scheme can be extended to formulas of this form as well.
We refrained from doing so
because it does not offer much more additional expressive power.

Note that the multiplication with one in the term $1 \cdot \#y\,\phi_1$ of rule 1 seems redundant,
but is needed to comply with our definition of $\lambda$-similarity.
In a $\lambda$-similar formula the constant $1$ can be replaced
with an arbitrary number between $1$ and $\lambda$.

\begin{lemma}\label{lem:approxhard}
Let $\textnormal{L}$ be a fragment of \FOCless obtained by extending \textnormal{FO} by one
of the following four rules (with semantics as expected, see \cite{KuskeS2017}
for a rigorous definition):
Let $y$ and $z$ stand for arbitrary variables.
\begin{enumerate}
    \item
        If $\phi_1$, $\phi_2$ are formulas,
        then $1 \cdot \cnt{y}\phi_1 > 1 \cdot \cnt{z}\phi_2$ is a formula.
    \item
        If $\phi_1$, $\phi_2$ are formulas
        and $N \in \Z$,
        then $\cnt{y}\phi_1 - \cnt{z}\phi_2 > N$ is a formula.
    \item
        If $\phi_1$, $\phi_2$ are formulas
        and $N \in \Z$,
        then $\cnt{y}\phi_1 \cdot \cnt{z}\phi_2 > N$ is a formula.
    \item
        If $\phi$, is a formula
        and $N \in \Z$,
        then $\cnt{yz}\phi > N$ is a formula.
\end{enumerate}
Unless \textnormal{AW[$*$]} $\subseteq$ \textnormal{FPT},
there is no algorithm with the following properties:
It gets as input a sentence $\phi \in \textnormal{L}$, a tree $T$ of depth $9$,
runs in time at most $f(|\phi|)|T|^c$ for some function $f$ and constant $c$
and returns either $1$, $0$, or $\bot$.
\begin{itemize}
\item
    If the algorithm returns $1$ then $T \models \phi$.
\item
    If the algorithm returns $0$ then $T \not\models \phi$.
\item
    If the algorithm returns $\bot$ then $\phi$ is $\approxratio{|T|}$-unstable on $T$.
\end{itemize}
\end{lemma}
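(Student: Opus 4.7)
The plan is to give an FPT-reduction from the AW[*]-hard first-order model-checking problem on bipartite graphs with one logarithmic side, provided by Lemma~\ref{lem:bipartite}. Given $G = (U \dot\cup V, E)$ with $|U| = \log|V|$ and an FO sentence $\phi$, I will construct in polynomial time a labeled tree $T$ of depth at most $9$ and an L-sentence $\hat\phi$ of size $O(|\phi|)$ with $G\models\phi\iff T\models\hat\phi$, such that $\hat\phi$ is $\approxratio{|T|}$-stable on $T$. Stability forces any algorithm satisfying the lemma's hypothesis to return $0$ or $1$ correctly, which provides an FPT algorithm for FO on $G$, contradicting AW[*]$\not\subseteq$FPT.

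Write $n:=|V|$; identify each $u\in U$ with an integer of binary length $\ell=O(\log\log n)$; and set $F:=2^{c\sqrt{\log n}}$ for a sufficiently large constant~$c$. Then $F$ exceeds $\approxratio{|T|}^2$, while $F^\ell = n^{o(1)}$ keeps $|T|$ polynomial in~$n$. The design principle, emphasized in the section's introduction, is that every non-constant counting term arising in $\hat\phi$ will take values only in the $O(\log\log n)$-element set $\{0\}\cup\{\sum_{k\in S}F^k:\emptyset\neq S\subseteq\{1,\dots,\ell\}\}$, with any two distinct attainable values differing by at least the factor $F$. The tree $T$ has a root $r$; labelled children $a_v$ for $v\in V$ and $a'_u$ for $u\in U$; an edge-labelled child $e_{uv}$ of $a_v$ for each $uv\in E$; and, for every bit position $k\in\{1,\dots,\ell\}$ with $\mathrm{enc}(u)_k=1$, a position-$k$-labelled star with $F^k$ leaves attached to both $a'_u$ and $e_{uv}$. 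Its depth is~$4$.

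Adjacency $uv\in E$ reduces to the existence of an edge-labelled child $e$ of $a_v$ whose multiset of star-sizes coincides with that of $a'_u$. I phrase this as \emph{``a certain counting term is $0$''}, where the term counts bit-leaves at positions present on one side but absent on the other and therefore takes values in $\{0\}\cup[F,\infty)$. The outer ``count $=0$'' test is expressible in each fragment using only the atomic constants $0,\pm1$: in rule~1 as $\neg(1{\cdot}a>1{\cdot}0)$; in rule~2 as the negation of $(a-0>-1)\wedge (0-a>-1)$'s complement --- more directly, the conjunction of $(a-0>-1)$ with $(0-a>-1)$ together pin $a$ to $0$; in rule~3 as $\neg(a\cdot 1>0)$; and in rule~4 as $\neg(\cnt{yz}\psi>0)$ with $\psi$ encoding the mismatch witnesses. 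Because $\lambda$-similarity can only change $0$ to $0$ and $\pm 1$ to an integer in $[1/\lambda,\lambda]\cdot(\pm 1)$, while the mismatch count is either $0$ or at least $F>\approxratio{|T|}^2$, no perturbation can move the count across the perturbed threshold, so the outer check is $\approxratio{|T|}$-stable. Substituting this adjacency formula for each $E(x,y)$ in $\phi$ and restricting quantifiers to the $a$-labels yields $\hat\phi$ of size $O(|\phi|)$.

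The principal obstacle is designing the constant-size inner mismatch-witness predicate $\psi$ inside each fragment so that it too is stable. The witness must detect a star-size $F^k$ appearing under one vertex but not among the star-sizes under the other, which amounts to a strict comparison of two star-sizes; such a comparison is stable provided the threshold constant equals $0$, which is directly available in rules~2, 3, and~4 (via $\cnt_1-\cnt_2>0$, resp.\ $\cnt_1\cdot 1>0$, resp.\ $\cnt{yz}\psi>0$). For rule~1 the naive comparison of two equal counts $F^k=F^k$ is inherently unstable, so the witness is instead phrased as ``a star under one vertex whose size-bucket yields a positive count on that side but a zero count on the other side'', collapsing the problematic equality into a stable $1{\cdot}a>1{\cdot}0$ test. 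One finally verifies that every threshold constant appearing anywhere in $\hat\phi$ lies in $\{-1,0,1\}$, so that $\hat\phi$ is $\approxratio{|T|}$-similar only to itself and hence $\approxratio{|T|}$-stable on $T$.
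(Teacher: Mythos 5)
Your high-level strategy — reduce from Lemma~\ref{lem:bipartite}, encode the small side in binary, and use widely separated star sizes to make the encoding $\approxratio{|T|}$-robust — is the same as the paper's, and your size analysis ($F^\ell = n^{o(1)}$, $F \gg \approxratio{|T|}$) is sound. The gap is in the only hard part of the argument: the \emph{stable pairing predicate}, i.e.\ a constant-size L-formula that, given two $a$-vertices, decides whether they carry a star of matching size without ever comparing two counts of the same magnitude with an unstable $1$-coefficient or threshold.

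For rule~2 your sketch works, because the threshold $0$ satisfies $0/\lambda \le N' \le \lambda\cdot 0$ only for $N'=0$, so $\cnt{y}\phi_1 - \cnt{z}\phi_2 > 0$ is exactly $\lambda$-stable and gives a genuine strict comparison. But for rules~1, 3 and~4 this is not available. Rule~1 offers only $1\cdot\cnt{y}\phi_1 > 1\cdot\cnt{z}\phi_2$; any witness predicate that must distinguish ``the two stars have equal size $F^k$'' from ``different size'' requires at some point a term $1\cdot a > 1\cdot b$ evaluated at $a=b$, which flips under $\lambda$-perturbation of the coefficients. Your proposed fix — ``a star under one vertex whose size-bucket yields a positive count on that side but a zero count on the other side'' — does not cash out: without explicit position labels (and you cannot add $\ell = \Theta(\log\log n)$ unary predicates to a fixed signature), the ``size-bucket'' of a star is itself only determinable by comparing counts, which is the very operation you are trying to avoid. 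The same issue recurs for rule~3, where $\cnt_1 \cdot \cnt_2 > N$ gives you a product, not a comparison of two counts; your suggested $\cnt_1 \cdot 1 > 0$ only tests positivity of a single term and does not pair stars by size. Rule~4 inherits the same difficulty.

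The paper resolves exactly this by putting asymmetry into the gadget itself. For rule~1, each $a$-vertex carries not one but three stars of sizes $\lambda^{3l-1}, \lambda^{3l}, \lambda^{3l+1}$ (the $c^-$, $c$, $c^+$ stars), so the pairing formula tests the \emph{sandwich} $c^-(x) \le c(y) \le c^+(x)$, which holds iff the indices agree and is robust to $\lambda$-perturbation of the $1$-coefficients because the gap between consecutive sizes is $\lambda^3$. For rule~3 the trick is complementary powers: $a$-vertices carry $\lambda^{3l}$ $c$-children and $\lambda^{4L-3l}$ $d$-children, so the product $c(x)\cdot d(y)$ hits $\lambda^{4L}$ exactly when the indices agree, and the threshold window $[\lambda^{4L-1}, \lambda^{4L+1}]$ is stable. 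These two constructions are the essential content of the proof and are absent from your proposal; without them the claimed stability of $\hat\phi$ under rules~1, 3, 4 does not hold. (A minor further discrepancy: these extra star types force the gadgets to have longer tails, which is why the paper's tree has depth~9; your claimed depth~4 comes from the simplified gadget that does not actually suffice.)
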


\begin{proof}
Since
$\cnt{y}\phi_1 \cdot \cnt{z}\phi_2 > N$ is equivalent
to $\cnt{yz}\phi_1 \land \phi_2 > N$ after possibly renaming variables,
rule 3 is more expressive than rule 4.
Also $1 \cdot \cnt{y}\phi_1 > 1 \cdot \cnt{z}\phi_2$ is equivalent to
$\cnt{y}\phi_1 - \cnt{z}\phi_2 > 0$ and therefore
rule 2 is more expressive than rule 1.
Thus from now on, we only have to consider rule 1 and 3.

For each rule, we will provide a polynomial-time procedure which
takes as input a bipartite graph $G$ 
with sides $U$, $V$ as well as $|U| = \log(|V|)$
and a FO sentence $\chi$
and constructs a tree $T$ and an L-sentence $\phi$ 
that is $\approxratio{|T|}$-stable on $T$
and whose size is bounded by a function of $\chi$,
such that $G \models \chi$ if and only if $T \models \phi$.
Since $\phi$ is $\approxratio{|T|}$-stable on $T$,
a model-checking approximation algorithm for L
as presented in this lemma would never yield $\bot$ on input $(T,\phi)$
and therefore decides in fpt time whether $T \models \phi$.
This would then decide whether $G \models \chi$,
which by Lemma~\ref{lem:bipartite} implies 
AW[$*$] $\subseteq$ FPT.

Let $\chi$ be a first-order sentence and
$G$ be a bipartite graph with sides $U$, $V$ and $|U| = \log(|V|)$.
We can assume without loss of generality that $U=\{1,\dots,\log(n)\}$, $V=\{1,\dots,n\}$ for some sufficiently large $n$.
We will first define some constructions that are
independent of whether L is defined using rule 1 or 3 and
later distinguish between the two cases.
Let $\lambda = \floor{2^{\log(n)^{2/3}}}$.
The tree $T$ will contain different ``gadget'' trees as subtrees.
We construct a ``gadget'' tree $T_i$, for $i \in \{1,\dots,\log(n)\}$ using the following procedure:
We start with inserting a root $q$.
Then for every ${l \in \{1,\dots,\floor{\log(\log(n))}+1\}}$ we
\begin{itemize}
    \item insert a vertex $a_l$ and an edge $qa_l$,
    \item if the $l$th bit of the binary encoding of $i$ is one insert a vertex $b_l$ and an edge $a_lb_l$,

    \item insert $\lambda^{3l}$ many so-called ``$c$-vertices'', 
        for each new $c$-vertex $c$ also add a new vertex $c'$ and edges $a_lc$, $cc'$,

    \item insert $\lambda^{4\floor{\log(\log(n))} - 3l}$ many so-called ``$d$-vertices'', 
        for each new $d$-vertex $d$ also add new vertices $d'$, $d''$ and edges $a_ld$, $dd'$, $d'd''$,

    \item insert $\lambda^{3l+1}$ many so-called ``$c^+$-vertices'', 
        for each new $c^+$-vertex $c$ also add new vertices $c'$, $c'', c'''$ and edges $a_lc$, $cc'$, $c'c''$, $c''c'''$,
    \item insert $\lambda^{3l-1}$ many so-called ``$c^-$-vertices'', 
        for each new $c^-$-vertex $c$ also add new vertices $c'$, $c'', c''', c''''$ and edges $a_lc$, $cc'$, $c'c''$, $c''c'''$, $c'''c''''$.
\end{itemize}
The number of $c$-, $c^+$-, $c^-$- and $d$-children are all powers of $\lambda$.
This will later help us build $\lambda$-stable L-formulas that can identify
the different $a$-vertices based on their number of children.
Since $\lambda^{\floor{\log(\log(n))}} 
= n^{O(1)}$, each gadget tree $T_i$ has polynomial size
and can be constructed in polynomial time.
We construct $T$ of depth $9$ with the following polynomial-time procedure:
\begin{itemize}
    \item insert a root $r$,
    \item for $i \in U$ insert a copy of $T_i$, rename its root $u_i$ and add an edge $u_ir$,
    \item for $j \in V$ insert vertices $v_j$, $e_j$, $f_j$ and edges $rv_j$, $v_je_j$, $v_jf_j$,
    \item for $ij \in E(G)$ with $i \in U$ and $j \in V$ insert a copy of $T_i$, name its root $w_{i,j}$ and add an edge $w_{i,j}v_j$.
\end{itemize}

In order to build $\phi$, we first need 
formulas $\phi_a(x)$, $\phi_b(x)$, $\phi_c(x)$, $\phi_{c^+}(x)$,
$\phi_{c^-}(x)$, $\phi_d(x)$, $\phi_u(x)$, $\phi_v(x)$, $\phi_w(x)$ defining
the set of corresponding vertices.
They can be build using these observations:
\begin{itemize}
    \item $a$-vertices are those vertices with at least three neighbors of degree two,
    \item the $b$-, $c$-, $d$-, $c^+$-, $c^-$-vertices are the neighbors of $a$-vertices
        with degree one or two and ``tails'' of length 0, 1, 2, 3, 4 respectively,
    \item $w$-vertices are those with more than one $a$-neighbor and distance two to a leaf,
    \item $u$-vertices are those with more than one $a$-neighbor and not distance two to a leaf,
    \item $v$-vertices are those adjacent to a $w$-vertex and exactly two leaves.
\end{itemize}

There is a bijection between $U$ and the $u$-vertices in $T$ and $V$ and the $v$-vertices in $T$.
We want to check whether two vertices $i \in U$ and $j \in V$
are adjacent in $G$ by evaluating a formula with two free variables on $u_i$ and $v_j$ in $T$.
We know that $u_i$ is the root of a gadget tree isomorphic to $T_i$.
Furthermore $i$ and $j$ are adjacent in $G$ if and only if $v_j$ has a $w$-child that is the root of a gadget tree isomorphic to $T_i$.
In order to test that, we need an L-formula $\psi_\text{gadget}(x,y)$ such that
for every $u$-vertex $u$ and $w$-vertex $w$ in $T$ holds 
$T \models \psi_\text{gadget}(u,w)$ if and only if
$u$ and $w$ are the roots of two isomorphic gadget trees.
We further require $\psi_\text{gadget}(u,w)$ to be $\lambda$-stable on $T$.
If we have such a formula, the remaining proof is
similar to the ones of \Cref{lem:exacthard}~and~\ref{lem:bipartite}.

Let $u,w$ be the roots of two gadget trees isomorphic to $T_i,T_{i'}$ in $T$ respectively.
For each $l \in \{1,\dots,\floor{\log(\log(n))}+1\}$ let
$a^u_l$ and $a^w_l$ be the $a$-child of $u$ and $w$, respectively, with $\lambda^{3l}$ $c$-children.
We first construct a formula $\psi_\text{pair}(x,y)$ such that
$T \models \psi_\text{pair}(a^u_l,a^w_{l'})$ iff $l = l'$.
For this, we have to distinguish whether L extends FO using rule 1 or 3.

\begin{itemize}
\item[1.]
This rule can compare two non-atomic counting terms.
With this ability, we can pair $a^u_l$ with $a^w_{l'}$
if the number of $c$-children of $a^u_l$ lies
between the number of $c^-$-~and $c^+$-children of $a^w_{l'}$.
The pairing is achieved by the formula
\begin{align*}
    \psi_\text{pair}(x,y) = 
           &\; \bigl( 1 \cdot \cnt{z} \phi_{c^+}(z) \wedge E(z,x) \ge 1 \cdot \cnt{z} \phi_c(z) \wedge E(z,y) \bigr) \\ 
    \wedge &\; \bigl( 1 \cdot \cnt{z} \phi_{c^-}(z) \wedge E(z,x) \le 1 \cdot \cnt{z} \phi_c(z) \wedge E(z,y) \bigr).
\end{align*}

\item[3.]
This rule can compare the product of two non-constant counting terms.
With this ability, we can pair $a^u_l$ with $a^w_{l'}$
if the product of the number $c$-children of $a^u_l$ and
the the number of $d$-children of $a^w_{l'}$ equals $\lambda^{4\floor{\log(\log(n))}}$.
This pairing can be modeled by the formula
\begin{align*}
    \psi_\text{pair}(x,y) = 
           & \bigl( \cnt{z} \phi_c(z) \wedge E(z,x) \cdot \cnt{z} \phi_d(z) \wedge E(z,y) \ge \lambda^{4\floor{\log(\log(n))} - 1} \bigr) \\ 
    \wedge & \bigl( \cnt{z} \phi_c(z) \wedge E(z,x) \cdot \cnt{z} \phi_d(z) \wedge E(z,y) \le \lambda^{4\floor{\log(\log(n))} + 1} \bigr).
\end{align*}
\end{itemize}
For two different $a$-children, their number of
$c$-, $c^+$-, $c^-$-, $d$-children differ at least by a factor of $\lambda^3$.
Therefore, we still have
$T \models \psi'_\text{pair}(a^u_l,a^w_{l'})$ iff $l = l'$
for every $\lambda$-similar formula $\psi'_\text{pair}$.
This makes $\psi_\text{pair}(a^u_l,a^w_{l'})$ $\lambda$-stable on $T$.
We can now compare the binary encoding of $i$ and~$i'$:
It holds $i=i'$ if and only if for each pair of $a$-vertices either both or none have a $b$-child.
We define
\begin{multline*}
    \psi_{\textnormal{gadget}}(x,y) = \forall a \forall a' 
    \Bigl( 
        \bigl(
             \phi_a(a) \land \phi_a(a') \land E(a,x) \wedge E(a',y) \land \psi_\text{pair}(a,a') 
        \bigr)
    \to \\
        \bigl( 
            \bigl(\exists b \phi_b(b) \wedge E(a,b)\bigr) \leftrightarrow \bigl(\exists b \phi_b(b) \wedge E(a',b)\bigr)
        \bigr)
    \Bigr).
\end{multline*}
Now $i=i'$ iff $T \models \psi_\textnormal{gadget}(u,w)$.
Since $\psi_\textnormal{pair}(a^u_l,a^w_{l'})$ is $\lambda$-stable,
$\psi_\textnormal{gadget}(u,w)$ also is $\lambda$-stable on $T$.
Using
$$
\psi_E(x,y) = \exists z \bigl( \phi_w(z) \wedge E(y,z) \wedge \psi_{\text{gadget}}(x,z) \bigr),
$$
we have for every $i \in U$, $j \in V$ that
$i$ and $j$ are adjacent in $G$ iff $T \models \psi_E(u_i,v_j)$.
We can then construct $\phi$ from $\chi$
by replacing every occurrence of $E(x,y)$ with
$$
\bigl(\phi_u(x) \land \phi_v(y) \land \psi_E(x,y)\bigr) \lor
\bigl(\phi_u(y) \land \phi_v(x) \land \psi_E(y,x)\bigr)
$$
and relativizing all quantifiers to $u$- and $v$-vertices by replacing subformulas
$\exists x \psi$ with $\exists x ((\phi_u(x) \vee \phi_v(x)) \wedge \psi)$.
Then $G \models \chi$ iff $T \models \phi$.
Since $\psi_\text{gadget}$ is $\lambda$-stable on $T$ for all $u$- and $w$-vertices,
$\phi$ also is $\lambda$-stable on~$T$.
The tree $T$ has polynomial size and therefore
$$
\approxratio{|T|} = 2^{\sqrt{O(1)\log(n)}} = o(2^{\log(n)^{2/3}}) = o(\lambda).
$$
Without loss of generality we can assume $n$ to be sufficiently large that
$\approxratio{|T|} \le \lambda$.
This means that $\phi$ is $\approxratio{|T|}$-stable on $T$.
\end{proof}

\section{Open Questions}

We see the following open questions, 
sorted in descending order by estimated difficulty.

It should be possible to 
generalize our \FOX model-checking approximation scheme from 
bounded expansion graph classes to nowhere dense graph classes.
However, our approach using functional representations and quantifier elimination
will most likely not be applicable.
Since a \FOX model-checking approximation scheme in particular
also solves the model-checking problem for FO, we cannot
hope to extend our results beyond nowhere dense graph classes
(assuming monotonicity)~\cite{GroheKS2017}.

Can our optimization result in \Cref{thm:optimization}
for counting-terms of the form $\#y\phi(y\bx)$
with $\phi(y\bx) \in$ FO be extended to nowhere dense graph classes?
At this point, we do not even know whether we can efficiently solve
partial dominating set (special case of \Cref{thm:optimization})
on nowhere dense graph classes.
We believe both to be the case.
There is potential for further optimization results
similar to \Cref{thm:optimization}
for different fragments of \FOC 
on bounded expansion or possibly nowhere dense graph classes.

Can our model-checking approximation scheme be generalized to query-counting or query-enumeration?
For every \FOX-formula $\phi(\bx)$, structure $G$ and $\varepsilon > 0$ we define:
\begin{align*}
    \lceil\phi,G,\varepsilon\rceil   &= \{ \bu \in V(G)^{|\bx|} \mid \text{there exists $\phi'$ that is $(1+\varepsilon)$-similar to $\phi$ with $G \models \phi'(\bu)$} \} \\
    \lfloor\phi,G,\varepsilon\rfloor &= \{ \bu \in V(G)^{|\bx|} \mid \text{for all $\phi'$ that are $(1+\varepsilon)$-similar to $\phi$ holds $G \models \phi'(\bu)$} \}.
\end{align*}
Can we compute in time $f(|\phi|,\varepsilon)\norm{G}$ a number
$s \in \N$ with
$|\lfloor\phi,G,\varepsilon\rfloor| \le s \le |\lceil\phi,G,\varepsilon\rceil|$?
Is it possible to enumerate with constant-delay
a set $S$ with $\lfloor\phi,G,\varepsilon\rfloor \subseteq S \subseteq \lceil\phi,G,\varepsilon\rceil$?
Using our approximate quantifier elimination procedure in \Cref{sec:iteratedElimination}, one probably can
replace $\phi(\bx)$ with a quantifier-free formula and then
use existing query-counting and query-enumeration techniques.

Consider the logic FOC$(\{=_p\})$, where $=_p$ is the equality relation modulo $p$.
Then, for example, the FOC$(\{=_2\})$-formula $\forall x \, \# y \, E(x,y) =_2 0$ expresses
whether a graph has an Euler cycle.
It is already known that the model-checking problem for FOC$(\{=_p\})$
is fpt on graph classes with bounded degree~\cite{HeimbergKS2016}.
One can most likely solve it also on graph classes with bounded expansion
with a quantifier elimination procedure based on \Cref{lem:apxExact} and unary predicates as in \Cref{lem:2fo}.
For nowhere dense graph classes the proof might be considerably more difficult
and require different techniques.

\bibliographystyle{plainurl}
\bibliography{counting}

\end{document}